\def\arxivmode{1}
\icmltitlerunning{Stein Points}
\newtheorem{theorem}{Theorem}
\newtheorem{definition}{Definition}
\DeclareMathOperator*{\argmin}{arg\,min}
\DeclareMathOperator*{\argmax}{arg\,max}
\begin{document}

\twocolumn[
\icmltitle{Stein Points}

\begin{icmlauthorlist}
\icmlauthor{Wilson Ye Chen}{uts}
\icmlauthor{Lester Mackey}{mic}
\icmlauthor{Jackson Gorham}{od}
\icmlauthor{Fran\c{c}ois-Xavier Briol}{war,imp,ati}
\icmlauthor{Chris. J. Oates}{ncl,ati}
\end{icmlauthorlist}

\icmlaffiliation{uts}{School of Mathematical and Physical Sciences, University of Technology Sydney, Australia}
\icmlaffiliation{mic}{Microsoft Research New England, USA}
\icmlaffiliation{od}{Opendoor Labs, Inc., USA}
\icmlaffiliation{war}{Department of Statistics, University of Warwick, UK}
\icmlaffiliation{imp}{Department of Mathematics, Imperial College London, UK}
\icmlaffiliation{ncl}{School of Mathematics, Statistics and Physics, Newcastle University, UK}
\icmlaffiliation{ati}{Alan Turing Institute, UK}

\icmlcorrespondingauthor{Wilson Ye Chen}{ye.chen@uts.edu.au}
\icmlcorrespondingauthor{Lester Mackey}{lmackey@microsoft.com}

\icmlkeywords{Discrepancy, Kernel, Stein's Method}

\vskip 0.3in
]

\printAffiliationsAndNotice{}  %

\begin{abstract}
An important task in computational statistics and machine learning is to approximate a posterior distribution $p(x)$ with an empirical measure supported on a set of representative points $\{x_i\}_{i=1}^n$.
This paper focuses on methods where the selection of points is essentially deterministic, with an emphasis on achieving accurate approximation when $n$ is small.
To this end, we present {\it Stein Points}.
The idea is to exploit either a greedy or a conditional gradient method to iteratively minimise a kernel Stein discrepancy between the empirical measure and $p(x)$.
Our empirical results demonstrate that Stein Points enable accurate approximation of the posterior at modest computational cost.
In addition, theoretical results are provided to establish convergence of the method.
\end{abstract}

\section{Introduction}

This paper is motivated by approximation of a Borel distribution $P$, defined on a topological space $X$, with deterministic point sets or sequences $\{x_i\}_{i=1}^n \subset X$ for $n \in \mathbb{N}$, such that
\begin{eqnarray}
\textstyle
\frac{1}{n} \sum_{i=1}^n h(x_i) & \rightarrow &\textstyle
 \int h \; \mathrm{d}P
\label{eq: weak convergence}
\end{eqnarray}
as $n \rightarrow \infty$ for all functions $h : X \rightarrow \mathbb{R}$ in a specified set $\mathcal{H}$.
Throughout it will be assumed that $P$ admits a density $p$, with respect to a reference measure, available in a form that is un-normalised (i.e., we know $q(x)$ in closed form where $p(x) = q(x)/C$ for some $C>0$).
Such problems occur in Bayesian statistics where $P$ represents a posterior distribution, and the integral represents a posterior expectation of interest.
Markov chain Monte Carlo (MCMC) methods are extensively used for this task but suffer (in terms of accuracy) from `clustering' of the points $\{x_i\}_{i=1}^n$ when $n$ is small.
This observation motivates us to instead consider a range of goal-oriented discrete approximation methods that are designed with un-normalised densities in mind.
   
The problem of discrete approximation of a distribution, given its normalised density, has been considered in detail and relevant methods include quasi-Monte Carlo (QMC) \citep{Dick2010}, kernel herding \citep{Chen2010,Lacoste-julien2015}, support points \citep{mak2016support,mak2017projected}, transport maps \citep{Marzouk2016}, and minimum energy methods \citep{Johnson1990}.
On the other hand, the question of how to proceed with un-normalised densities has been primarily answered with increasingly sophisticated MCMC.

At the same time, recent work had led to theoretically-justified measures of sample quality in the case of an un-normalised target.
In \cite{Gorham2015,Mackey2016} it was shown that Stein's method can be used to construct discrepancy measures that control weak convergence of an empirical measure to a target.
This was later extended in \cite{Gorham2017} to encompass a family of discrepancy measures indexed by a reproducing kernel.
In the latter case, the discrepancy measure can be recognised as a maximum mean discrepancy \cite{Smola2007}.
As such, one can consider discrete approximation as an optimisation problem in a Hilbert space and attempt to optimise this objective with either a greedy or a conditional gradient method.
The resulting method -- {\it Stein Points} -- and its variants are proposed and studied in this work.

\paragraph{Our Contribution}

This paper makes the following contributions:
\begin{itemize}
\item Two algorithms are proposed for minimisation of the kernel Stein discrepancy \citep[KSD;][]{Chwialkowski2016,Liu2016c,Gorham2017}; a greedy algorithm and a conditional gradient method. 
In each case, a convergence result of the form in Eqn. \ref{eq: weak convergence} is established.
\item Novel kernels are proposed for the KSD, and we prove that, with these kernels, the KSD controls weak convergence of the empirical measure to the target.
In other words, the test functions $h$ for which our results hold constitute a rich set $\mathcal{H}$.
\end{itemize}

\paragraph{Outline}

The paper proceeds as follows.
In Section \ref{sec: background} we provide background, and in Section \ref{sec: methods} we present the approximation methods that will be studied.
Section \ref{sec: results} applies these methods to both simulated and real approximation problems and provides a extensive empirical comparison.
All technical material is contained in Section \ref{sec: theory}, where we derive novel theoretical results for the methods we proposed.
Finally we summarise our findings in Section \ref{sec: conclusion}.
\section{Background} \label{sec: background}

Throughout this section it will be assumed that $X$ is a metric space, and we let $\mathcal{P}(X)$ denote the collection of Borel distributions on $X$.
In this context, weak convergence of the empirical measure to $P$ corresponds to taking the set $\mathcal{H}$ in Eqn. \ref{eq: weak convergence} to be the set $\mathcal{H}_{\text{CB}}$ of functions which are continuous and bounded.
In this work we also consider sets $\mathcal{H}$ that correspond to stronger modes of convergence in $\mathcal{P}(X)$. 

First, in \ref{subsec: disc meas}, we recall how discrepancy measures are constructed.
Then we recall the use of Stein's method in this context in \ref{subsec: ksd}.
Formulae for KSD are presented in \ref{subsec: stein kernels}.

\subsection{Discrepancy Measures} \label{subsec: disc meas}

A \emph{discrepancy} is a quantification of how well the points $\{x_i\}_{i=1}^n$ cover the domain $X$ with respect to the distribution $P$.
This framework will be developed below in reproducing kernel Hilbert spaces \citep[RKHS;][]{Hickernell1998}, but the general theory of discrepancy can be found in \cite{Dick2010}.
Note that we focus on unweighted point sets for ease of presentation, but our discussions and results generalise straightforwardly to point sets that are weighted.

Let $k : X \times X \rightarrow \mathbb{R}$ be the reproducing kernel of a RKHS $\mathcal{K}$ of functions $\mathcal{X} \rightarrow \mathbb{R}$.
That is, $\mathcal{K}$ is a Hilbert space of functions with inner product $\langle \cdot,\cdot\rangle_{\mathcal{K}}$ and induced norm $\| \cdot \|_{\mathcal{K}}$ such that, for all $x \in X$, $k(x,\cdot) \in \mathcal{K}$ and $f(x) = \left\langle f , k(x,\cdot) \right\rangle_{\mathcal{K}}$ whenever $f \in \mathcal{K}$. The Cauchy-Schwarz inequality in $\mathcal{K}$ gives that
\begin{eqnarray*}
\textstyle
\left| \frac{1}{n}\sum_{i=1}^n f(x_i) - \int f \mathrm{d}P \right| 
& \leq & 
\textstyle
\|f\|_{\mathcal{K}} \; D_{\mathcal{K},P}\left(\{x_i\}_{i=1}^n\right) 
\end{eqnarray*}
where the final term
\begin{eqnarray*}
\textstyle
D_{\mathcal{K},P}\left(\{x_i\}_{i=1}^n \right) \; := \; \left\| \frac{1}{n} \sum_{i=1}^n k(x_i,\cdot) - \int k(x,\cdot) \mathrm{d}P(x) \right\|_{\mathcal{K}}  \nonumber
\end{eqnarray*}
is the canonical discrepancy measure for the RKHS. 
The Bochner integral $k_P := \int k(x,\cdot) \mathrm{d}P(x) \in \mathcal{K}$ is known as the \emph{mean embedding} of $P$ into $\mathcal{K}$ \citep{Smola2007}.
Thus, if $\mathcal{H} = B(\mathcal{K}) := \{f \in \mathcal{K} : \|f\|_{\mathcal{K}} \leq 1\}$ is the unit ball in $\mathcal{K}$, then $D_{\mathcal{K},P}\left(\{x_i\}_{i=1}^n\right) \rightarrow 0$ implies the convergence result in Eqn. \ref{eq: weak convergence}.

The RKHS framework is now standard for QMC analysis \citep{Dick2010}.
Its popularity derives from the fact that, when both $k_P$ and $k_{P,P} :=
\int k_P \mathrm{d}P$ are explicit, the canonical discrepancy measure is
also explicit:
\begin{eqnarray}
\textstyle
D_{\mathcal{K},P}(\{x_i\}_{i=1}^n) & = & \label{eq: Discrep} \\
& & \hspace{-60pt}\textstyle
 \sqrt{ k_{P,P} - \frac{2}{n} \sum_{i=1}^n k_P(x_i) + \frac{1}{n^2} \sum_{i,j=1}^n k(x_i,x_j) } \nonumber
\end{eqnarray}
Table 1 in \cite{Briol2016} collates pairs $(k,P)$ for which $k_P$ and $k_{P,P}$ are explicit.

If $P$ is a posterior distribution, so that $p$ has unknown normalisation
constant, it is unclear how the terms $k_P$ and $k_{P,P}$ can be computed in closed form,
and so similarly for the discrepancy $D_{\mathcal{K},P}$. 
This has so far prevented QMC and related methods such as kernel herding \citep{Chen2010} from being used to compute posterior integrals. 
A solution to this problem can be found in Stein's method, presented next.

\subsection{Kernel Stein Discrepancy} \label{subsec: ksd}

The method of \citet{Stein1972} was introduced as an analytical tool for establishing convergence in distribution of random variables, but its potential for generating and analyzing computable discrepancies was developed in \cite{Gorham2015}.
In what follows, we recall the kernelised version of the \emph{Stein discrepancy}, first presented for an optimally-weighted point set in 2.3.3 of \cite{Oates2017} and later generalised to an arbitrarily-weighted point set in \cite{Chwialkowski2016,Liu2016c,Gorham2017}.

Suppose that $X$ carries the structure of a smooth manifold, and consider a linear differential operator $\mathcal{T}_P$ on $X$, together with a set $\mathcal{F}$ of sufficiently differentiable functions, with the following property:
\begin{eqnarray}
\textstyle
\int \mathcal{T}_P[f] \; \mathrm{d}P & = & 0 \quad \forall f \in \mathcal{F}. \label{eq: stein property}
\end{eqnarray}
Then $\mathcal{T}_P$ is called a \emph{Stein operator} and $\mathcal{F}$ a \emph{Stein set}. In the
kernelised version of Stein's method, the set $\mathcal{F}$ is either an
RKHS $\mathcal{K}$ with reproducing kernel $k : X \times X \rightarrow
\mathbb{R}$, or the product $\mathcal{K}^d$, which contains vector-valued
functions $f = (f_1,\dots,f_d)$ with $f_j \in \mathcal{K}$ and is equipped
with a norm\footnote{For what follows, any vector norm can be used to combine the component norms $\|f_j\|_{\mathcal{K}}$ \citep[Prop. 3]{Gorham2017}.} $\|f\|_{\mathcal{K}^d} = (\sum_{j=1}^d \|f_j\|_{\mathcal{K}}^2
)^{1/2}$.
For the case $\mathcal{F} = \mathcal{K}$, the image of $\mathcal{K}$ under a Stein operator $\mathcal{T}_P$ is denoted $\mathcal{K}_0 = \mathcal{T}_P \mathcal{K}$.
The notation can be justified since, under appropriate regularity assumptions, the set $\mathcal{T}_P \mathcal{K}$ admits structure from the reproducing kernel $k_0(x,x') = \mathcal{T}_P \overline{\mathcal{T}_P} k(x,x')$ \citep{Oates2017}.
Here $\overline{\mathcal{T}_P}$ is the adjoint of the operator $\mathcal{T}_P$ and acts on the second argument $x'$ of the kernel.
If instead $\mathcal{F} = \mathcal{K}^d$, then we suppose that $\mathcal{T}_P f = \sum_{j=1}^d \mathcal{T}_{P,j} f_j$ so that the set $\mathcal{K}_0 = \mathcal{T}_P \mathcal{K}^d$ admits structure from the reproducing kernel $k_0(x,x') = \sum_{j=1}^d \mathcal{T}_{P,j} \overline{\mathcal{T}_{P,j}} k(x,x')$.
In either case, we will call the reproducing kernel $k_0$ of $\mathcal{K}_0$ a \emph{Stein reproducing kernel}.

Stein reproducing kernels possess the useful property that $k_{0,P} = \int k_0(x,\cdot) \mathrm{d}P = 0$ and $k_{0,P,P} = \int k_{0,P} \mathrm{d}P = 0$, so in particular both are explicit.
Thus, if $k_0$ is a Stein reproducing kernel, then Eqn. \ref{eq: Discrep} can be simplified:
\begin{eqnarray}\label{eqn:ksd}
D_{\mathcal{K}_0,P}\left(\{x_i\}_{i=1}^n \right) & = &\textstyle \sqrt{\frac{1}{n^2} \sum_{i,j = 1}^n k_0(x_i,x_j)} .
\end{eqnarray}
We call this quantity a \emph{kernel Stein discrepancy} (KSD).
Next, we exhibit some differential operators for which Eqn. \ref{eq: stein property} is satisfied and Eqn. \ref{eqn:ksd} can be computed.

\subsection{Stein Operators and Their Reproducing Kernels} \label{subsec: stein kernels}

The divergence theorem can be used to construct Stein operators on a manifold.
For $P$ supported on $X = \mathbb{R}^d$, \cite{Oates2017,Gorham2015,Chwialkowski2016,Liu2016c,Gorham2017} considered the \emph{Langevin Stein operator}
\begin{eqnarray}
\mathcal{T}_P f& := &\textstyle \frac{ \nabla \cdot (p f) }{p} \label{eq: langevin}
\end{eqnarray}
where $\nabla \cdot$ is the usual divergence operator and $f \in \mathcal{K}^d$.
Thus, for the Langevin Stein operator, we obtain a Stein reproducing kernel
\begin{eqnarray}
k_0(x,x') & = & \nabla_x \cdot \nabla_{x'} k(x,x') \label{eq: explicit kernel} \\
& & + \nabla_x k(x,x') \cdot \nabla_{x'} \log p(x') \nonumber \\
& & + \nabla_{x'} k(x,x') \cdot \nabla_x \log p(x) \nonumber \\
& & + k(x,x') \nabla_x \log p(x) \cdot \nabla_{x'} \log p(x'). \nonumber
\end{eqnarray}
To evaluate this kernel, the normalisation constant for $p$ is not required.
Other Stein operators for the Euclidean case were developed in \cite{Gorham2016}.
For $P$ supported on a closed Riemannian manifold $X$, \cite{Oates2017b,Liu2017a} proposed the second order Stein operator $\mathcal{T}_P f := \textstyle \frac{1}{p} \nabla \cdot (p\nabla f) $ where $\nabla$ and $\nabla \cdot$ are, respectively, the gradient and divergence operators on the manifold and $f \in \mathcal{K}$.
Other Stein operators for the general case are proposed in the supplement of \cite{Oates2017b}.

The theoretical results in \cite{Gorham2017} established that certain combinations of Stein operator $\mathcal{T}_P$ and base kernel $k$ ensure that KSD controls weak convergence; that is, $D_{\mathcal{K}_0 , P}(\{x_i\}_{i=1}^n) \rightarrow 0$ implies that Eqn. \ref{eq: weak convergence} holds with $\mathcal{H} = \mathcal{H}_{\text{CB}}$.
This important result motivates our next contribution, where numerical optimisation methods are used to select points $\{x_i\}_{i=1}^n$ to approximately minimise KSD.
Theoretical analysis of the proposed methods is reserved for Section \ref{sec: theory}.

\section{Methods} \label{sec: methods}

In this paper, two algorithms to select points $\{x_i\}_{i=1}^n$ are studied in detail.
The first of these is a greedy algorithm, which at each iteration attempts to minimise the KSD, whilst the second is a conditional gradient algorithm, known as \emph{herding}, which also targets the KSD.
In \ref{subsec: greedy} and \ref{subsec: herding} the two algorithms are described, whilst in \ref{subsec: other alg} some alternative approaches are briefly discussed.

\subsection{Greedy Algorithm} \label{subsec: greedy}

The simplest algorithm that we consider follows a greedy strategy, whereby
the first point $x_1$ is taken to be a global maximum of $p$ (an operation
which does not require the normalisation constant) and each subsequent point
$x_n$ is taken to be a global maximum of $D_{\mathcal{K}_0 ,
  P}(\{x_i\}_{i=1}^n)$, with the KSD being viewed as a function of $x_n$ holding $\{x_i\}_{i=1}^{n-1}$ fixed.
Equivalently, at iteration $n > 1$ of the greedy algorithm, we select
\begin{eqnarray}
x_n & \in &\textstyle \argmin_{x \in X} \quad \frac{k_0(x,x)}{2} + \sum_{i=1}^{n-1} k_0(x_i,x) .  \label{eq: greedy update}
\end{eqnarray}
Note that each iteration of the algorithm requires the solution of a global optimisation problem over $X$; in practice we employed a numerical optimisation method, and this choice is discussed in detail in connection with the empirical results in Section \ref{sec: results} and the theoretical results in Section \ref{sec: theory}.

If a user has a budget of at most $n$ points, the greedy algorithm can be run for $n$ iterations and thereafter improved using (block) coordinate descent on the KSD objective to update an existing point $x_i$ instead of introducing a new point. 
The cost of each update is equal to the cost of adding the $n$-th greedy Stein Point.
This budget-constrained variant of the method will be called \emph{Stein Greedy-$n$} in the sequel (see Section \ref{subset: block describe} for more details).

\subsection{Herding Algorithm} \label{subsec: herding}

The definition of discrepancy in Section \ref{subsec: disc meas} suggests that selection of $\{x_i\}_{i=1}^n$ can be elegantly formulated as a single global optimisation problem over $\mathcal{K}_0$.
Let $M(\mathcal{K}_0)$ be the \emph{marginal polytope} of $\mathcal{K}_0$; i.e. the convex hull of the set $\{k_0(x,\cdot)\}_{x \in X}$ \citep{Wainwright2008}.
The mean embedding $Q \mapsto k_Q$, as a map $\mathcal{P}(X) \rightarrow M(\mathcal{K})$, is injective whenever the kernel $k$ is universal and $X$ is compact \citep{Smola2007}, so that in this case $k_Q$ fully characterises $Q$. 
Results in a similar direction for Stein reproducing kernels were established in \citet[][Theorem 2.1]{Chwialkowski2016} and \citet[][Proposition 3.3]{Liu2016c}.
Thus, as $P$ is mapped to $0$ under the embedding, we are motivated to consider non-trivial solutions to
\begin{eqnarray}
\textstyle
\argmin_{f \in M(\mathcal{K}_0)} J(f), \qquad J(f) \; := \; \frac{1}{2} \| f \|_{\mathcal{K}_0}^2  . \label{eq: herding ob}
\end{eqnarray}
As might be expected, the objective function is closely related to KSD; for $f(\cdot) = \frac{1}{n} \sum_{i=1}^n k_0(x_i,\cdot)$ we have $J(f) = \frac{1}{2} D_{\mathcal{K}_0,P}(\{x_i\}_{i=1}^n)^2$.
An iterative algorithm, called \emph{kernel herding}, was proposed in \cite{Chen2010} to solve problems in the form of Eqn. \ref{eq: herding ob}.
This was later shown to be equivalent to a conditional gradient algorithm, the \emph{Frank-Wolfe algorithm}, in \cite{Bach2012}.
The canonical Frank-Wolfe algorithm, which results in an unweighted point set \citep[as opposed to a more general weighted point set;][]{Bach2012}, is presented next.

The first point $x_1$ is again taken to be a global maximum of $p$; this corresponds to an element $f_1 = k_0(x_1,\cdot) \in M(\mathcal{K}_0)$.
Then, at iteration $n > 1$, the convex combination $f_n = \frac{n-1}{n} f_{n-1} + \frac{1}{n} f_n^* \in M(\mathcal{K}_0)$ is constructed where the element $f_n^*$ encodes a direction of steepest descent:
\begin{eqnarray*}
f_n & \in & \textstyle
\argmin_{f \in M(\mathcal{K}_0)} \left\langle f , \mathrm{D}J(f_{n-1}) \right\rangle_{\mathcal{K}_0} ,
\end{eqnarray*}
where $\mathrm{D}J(f)$ is the representer of the Fr\'{e}chet derivative of $J$ at $f$.
Given that minimisation of a linear objective over a convex set can be restricted to the boundary of that set, it follows that $f_n^* = k(x_n,\cdot)$ for some $x_n \in X$.
Thus, at iteration $n > 1$ of the proposed algorithm, we select
\begin{eqnarray} \label{eqn:frank-wolfe-point-sequence}
x_n & \in & \textstyle
\argmin_{x \in X} \quad \sum_{i=1}^{n-1} k_0(x_i,x) 
\end{eqnarray}
to obtain $f_n(\cdot) = \frac{1}{n} \sum_{i=1}^n k_0(x_i,\cdot)$, the embedding of the empirical distribution of $\{x_i\}_{i=1}^n$.
As in the standard kernel herding algorithm of \cite{Chen2010}, each iteration in practice requires the solution of a global optimisation problem over $X$.

Compared to Eqn. \ref{eq: greedy update}, the greedy algorithm is seen to be a regularised version of herding with regulariser $\frac{1}{2} k_0(x,x)$.
The two algorithms coincide if $k_0(x,x)$ is independent of $x$; however, this is typically not true for a Stein reproducing kernel.
The computational cost of either method is $O(n^2)$; thus we anticipate applications in which evaluation of $p(x)$ (and its gradient) constitute the principal computational bottleneck. 
The performance of both algorithms is studied empirically in Section \ref{sec: results} and theoretically in Section \ref{sec: theory}.
In a similar manner to \emph{Stein Greedy-$n$}, a budget-constrained variant of the above method can be considered, which we call \emph{Stein Herding-$n$} in the sequel.

\subsection{Other Algorithms} \label{subsec: other alg}

The output of either of our algorithms will be called \emph{Stein Points}.
These are \emph{extensible} point sequence $S_n = (x_i)_{i=1}^n$, meaning that $S_n$ can be incrementally extended $S_n = (S_{n-1},x_n)$ as required.
Another recently proposed extensible method is the (sequential) minimum energy design (MED) of \cite{Joseph2015,Joseph2017}, here used as a benchmark.

For some problems the number of points $n$ will be fixed in advance and the aim will instead be to select a single optimal point set $\{x_i\}_{i=1}^n$.
This alternative problem demands different methodologies, and a promising method in this direction is Stein variational gradient descent \citep[SVGD-$n$;][]{Liu2016a,Liu2017}.
A natural point set analogue of our approach would be to optimise KSD for $n$ fixed.
This approach was considered for other discrepancy measures in \cite{Oettershagen2017}, where the Newton method was used.
We instead employ our budget-constrained algorithms Stein Greedy-$n$ and Stein Herding-$n$ for this use case.
\section{Results} \label{sec: results}

In this section, the proposed greedy and herding algorithms are empirically assessed and compared.
In \ref{subsec: mixture} a Gaussian mixture problem is studied in detail, whilst in \ref{subsec: GP model} and \ref{subsec: IGARCH}, respectively, the methods are applied to approximate the parameter posterior in a non-parametric regression model and an IGARCH model.
First, in \ref{sec: implementation} we provide details on the experimental protocol.

\subsection{Experimental Protocol} \label{sec: implementation}

Here we describe the parameters and settings that were varied in the experiments that are presented.

\paragraph{Stein Operator}

To limit scope, we focus on the case $X = \mathbb{R}^d$ and always take $\mathcal{T}_P$ to be the Langevin Stein operator in Eqn. \ref{eq: langevin}.

\paragraph{Choice of Kernel}

For the kernel $k$ in Eqn. \ref{eq: explicit kernel} we considered one standard choice -- the inverse multi-quadratic (IMQ) kernel -- together with two novel alternatives:
\begin{enumerate}[label=({$k_\arabic*$})]
\item \label{enum:imq} (IMQ) $k_1(x,x') \; = \; \left(\alpha + \|x - x'\|_2^2 \right)^{\beta}$
\item \label{enum:log-imq}(inverse log) $k_2(x,x') \; = \; \left(\alpha + \log(1 + \|x - x'\|_2^2)\right)^{-1}$
\item \label{enum:imq-score}(IMQ score) \\ $k_3(x,x') \; = \; \left(\alpha + \| \nabla \log p(x) - \nabla \log p(x')\|_2^2\right)^{\beta}$.
\end{enumerate}
In all cases $\alpha > 0$ and $\beta \in (-1,0)$.
To limit scope, in what follows we considered a finite number of judiciously selected configurations for $\alpha,\beta$, though in principle these could be optimised as in \cite{Jitkrittum2017}.
The best set of parameter values was selected for each algorithm and each target distribution, where the possible values were $\alpha \in \{0.1\eta, 0.5\eta, \eta, 2\eta, 4\eta, 8\eta\}$ and $\beta \in \{0.1, 0.3, 0.5, 0.7, 0.9\}$, with $\eta > 0$ problem-dependent (see the Supplement).
The IMQ kernel, together with the Langevin Stein operator, was proven in \citet[][Theorem 8]{Gorham2017} to provide a KSD that controls weak convergence.
Similar results for novel kernels $k_2$ and $k_3$ are established in Section \ref{sec: theory}.

\paragraph{Numerical Optimisation Method}
Any optimisation procedure could be used to (approximately) solve the global optimisation problem embedded in each iteration of the proposed algorithms.
In our experiments, we considered the following numerical methods, for which full details appear in the Section \ref{sup: optimisation methods}.
\begin{enumerate}
\item Nelder-Mead (\verb+NM+): At iteration $n$, parallel runs of Nelder-Mead were employed, initialised at draws from a Gaussian mixture proposal centred on the current point set $\Pi = \frac{1}{n-1}\sum_{i=1}^{n-1}\mathcal{N}(x_i,\lambda I)$ with problem-specific $\lambda > 0$.
\item Monte Carlo (\verb+MC+): The optimisation problem at iteration $n$ was solved over a sample of points drawn from the same Gaussian mixture proposal $\Pi$.
\item Grid search (\verb+GS+): Through brute force, the optimisation problem at iteration $n$ was solved over a regular grid of width $\frac{1}{\sqrt{n}}$. This required $O(n^{-\frac{d}{2}})$ points; if required, the domain was first truncated with a large bounding box.
\end{enumerate}

\paragraph{Performance Assessment}
To obtain a reasonably objective assessment, we focused on the $1$-Wasserstein distance between the empirical measure and $P$:
\begin{eqnarray*}
W_P(\{x_i\}_{i=1}^n) & = &\textstyle
 \sup_{h \in \mathcal{H}_{\text{Lip}}} \left| \frac{1}{n} \sum_{i=1}^n h(x_i) - \int h \mathrm{d}P \right| ,
\end{eqnarray*}
where $\mathcal{H}_{\text{Lip}}$ is the set of all function $h : X \rightarrow \mathbb{R}$ with Lipschitz constant $\text{Lip}(h) \leq 1$.
By replacing $P$ with the empirical measure $P_N = \frac{1}{N}\sum_{i=1}^N
\delta_{y_i}$ for $y_i \stackrel{\text{iid}}{\sim} P$, the expected error
from using $W_{P_N}(\{x_i\}_{i=1}^n)$ in lieu of $W_P(\{x_i\}_{i=1}^n)$
converges at a $N^{-\frac{1}{2}} \log N$ rate for $d=2$ and $N^{-\frac{1}{d}}$ rate for
$d>2$ \cite{fournier2015rate}.
By employing $L_1$-spanners, the approximation $W_{P_N}(\{x_i\}_{i=1}^n)$ can be
computed in $O((n+N)^2 \log^{2d-1}(n+N))$ time \cite{gudmundsson2007small}.
For all reported results, the $\{y_i\}_{i=1}^N$ were obtained by brute-force Monte Carlo methods applied to $P$, with $N$ sufficiently large that approximation error can be neglected.

The computational cost associated to any given method was quantified as the total number $n_{\text{eval}}$ of times either the log-density $\log p$ or its gradient $\nabla \log p$ were evaluated.
This can be justified since in most applications the `parameter to data' map dominates the computational cost associated with the likelihood.

\paragraph{Benchmarks}

Two existing methods were used as a benchmark:
\begin{enumerate}
\item The MED method of \cite{Joseph2015,Joseph2017} relies on numerical optimisation methods to minimise an energy measure $\mathcal{E}_{\delta,P}(\{x_i\}_{i=1}^n)$, adapted to $P$.
This measure has one tuning parameter $\delta \in [1,\infty)$.
See Section \ref{subsec: MED describe} of the Supplement for full detail.
\item The SVGD method of \cite{Liu2016a,Liu2017} performs a version of gradient descent on the Kullback-Leibler divergence, described in Section \ref{subset: SVGD describe} of the Supplement.
\end{enumerate}
To avoid confounding of the empirical results by incomparable algorithm parameters, (1) the collection of numerical optimisation methods used for KSD were also used for MED, and (2) the same collection of kernels $k_1, \ldots, k_3$ was considered for SVGD as was used for KSD.
Note that, apart from standard Monte Carlo, none of the methods considered in these experiments are re-parametrisation invariant.

\subsection{Gaussian Mixture Test} \label{subsec: mixture}

For our first test, we considered a Gaussian mixture model
\begin{eqnarray*}
P & = & \textstyle
\frac{1}{2} \mathcal{N}(\mu_1 , \Sigma_1) + \frac{1}{2} \mathcal{N}(\mu_2 , \Sigma_2)
\end{eqnarray*}
defined on $X = \mathbb{R}^2$.
Full settings for each of the methods considered are detailed in Section \ref{subsec: ad numerics GMM} in the Supplement.
Typical point sets are displayed  over the contours of $P$ for $\mu_1 = (-1.5,0)$, $\mu_2 = (1.5,0)$, $\Sigma_1 = \Sigma_2 = I$ in Figure \ref{fig: GMM points}.
Additionally, point sets for the $n$ point budget-constrained algorithms Stein Greedy-$n$ and Stein Herding-$n$ are presented in Figure \ref{fig: GMM points v2} in the Supplement.
For each of the methods shown in Figures \ref{fig: GMM points} and \ref{fig: GMM points v2}, tuning parameters were varied and the overall performance was captured in Figure \ref{fig: GMM}.
It was observed that for (a-c) the choice of numerical optimisation method was the most influential tuning parameter, with the simpler Monte Carlo-based method being most successful.
The kernels $k_1,k_2$ were seen to perform well, but in (a,b,d) the kernel $k_3$ was sometimes seen to fail.

A subjectively-selected exemplar was extracted for each method, and these `best' results for each method are overlaid in Figure \ref{fig: GMM 2}.
The total number of points was limited to $n = 100$.
In terms of our proposed methods, two qualitative regimes were observed:
(i) For low computational budget $\log n_{\text{eval}} \leq 7$, the standard Monte Carlo method performed best.
(ii) For a larger computational budget $7 < \log n_{\text{eval}}$, greedy Stein points were not out-performed.

Note that KSD and SVGD are based on the log target and its gradient, whilst for MED the target $p(x)$ itself is required.
As a result, numerical instabilities were sometimes encountered with MED.

Next, we turned our attention to two important posterior approximation problems that occur in the real world.

\begin{figure}[t!]
\centering
\includegraphics[width=0.45\textwidth,clip,trim = 0.5cm 6.8cm 0cm 1.6cm]{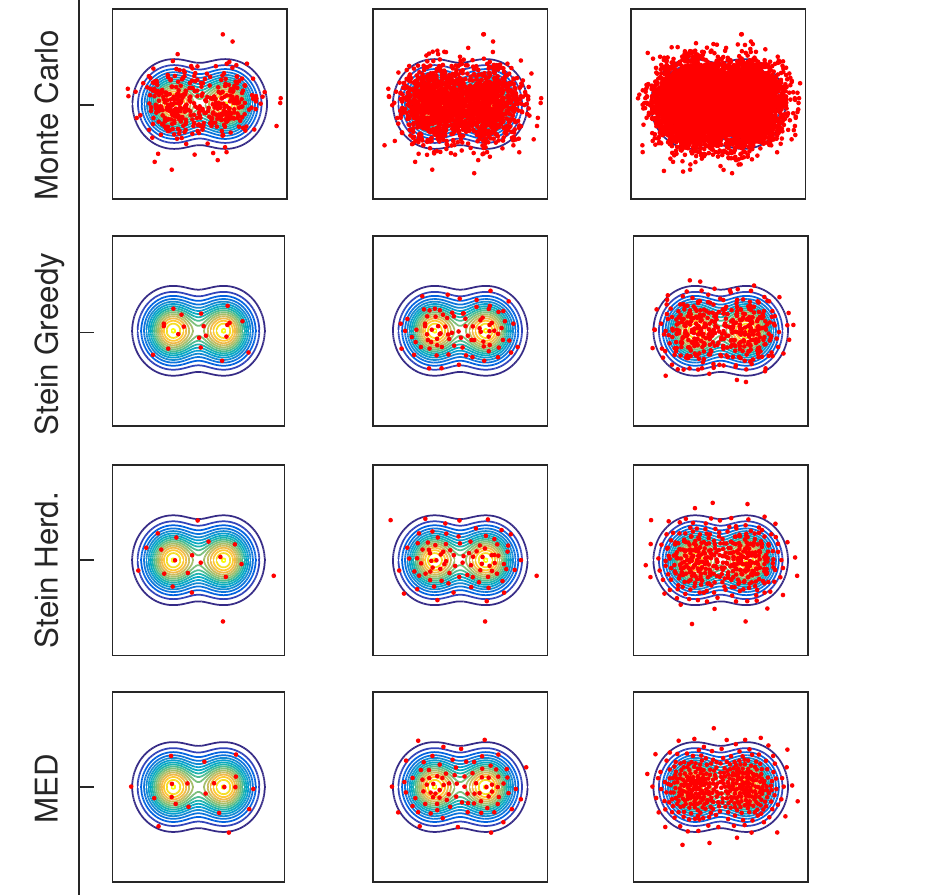}
\includegraphics[width=0.45\textwidth,clip,trim = 0.5cm 0.8cm 0cm 13cm]{figures/gensprpts_2c.pdf}
\caption{Typical point sets obtained in the Gaussian mixture test. [Here the left border of each sub-plot is aligned to the exact value of $\log n_{\mathrm{eval}}$ spent to obtain each point set.]
}
\label{fig: GMM points}
\end{figure}

\begin{figure}[t!]
\centering
\begin{subfigure}[b]{0.23\textwidth}
		\centering
        \includegraphics[width=\textwidth]{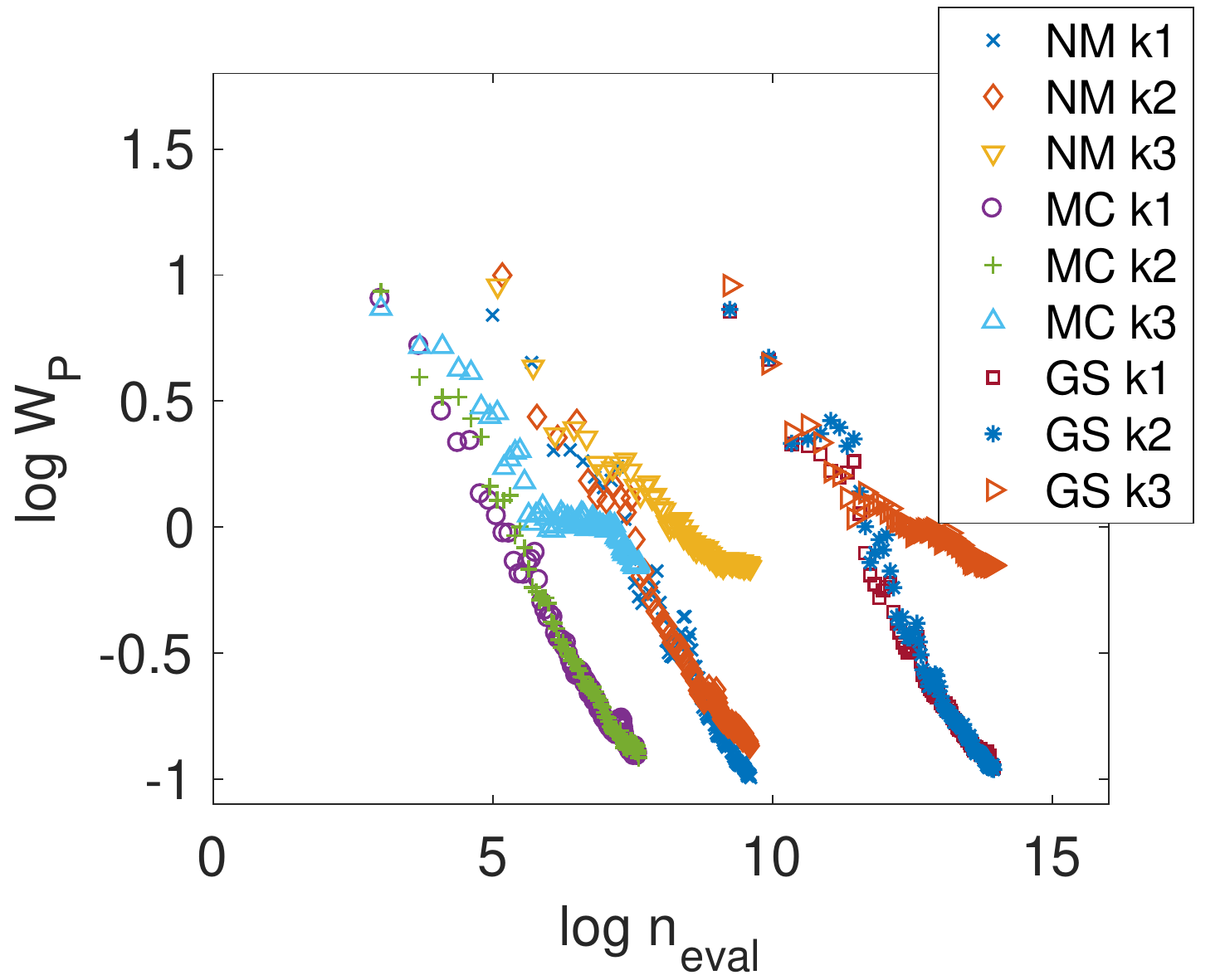}
        \caption{Stein Points (Greedy)}
        \label{fig: GMM greedy}
    \end{subfigure}
\begin{subfigure}[b]{0.23\textwidth}
		\centering
        \includegraphics[width=\textwidth]{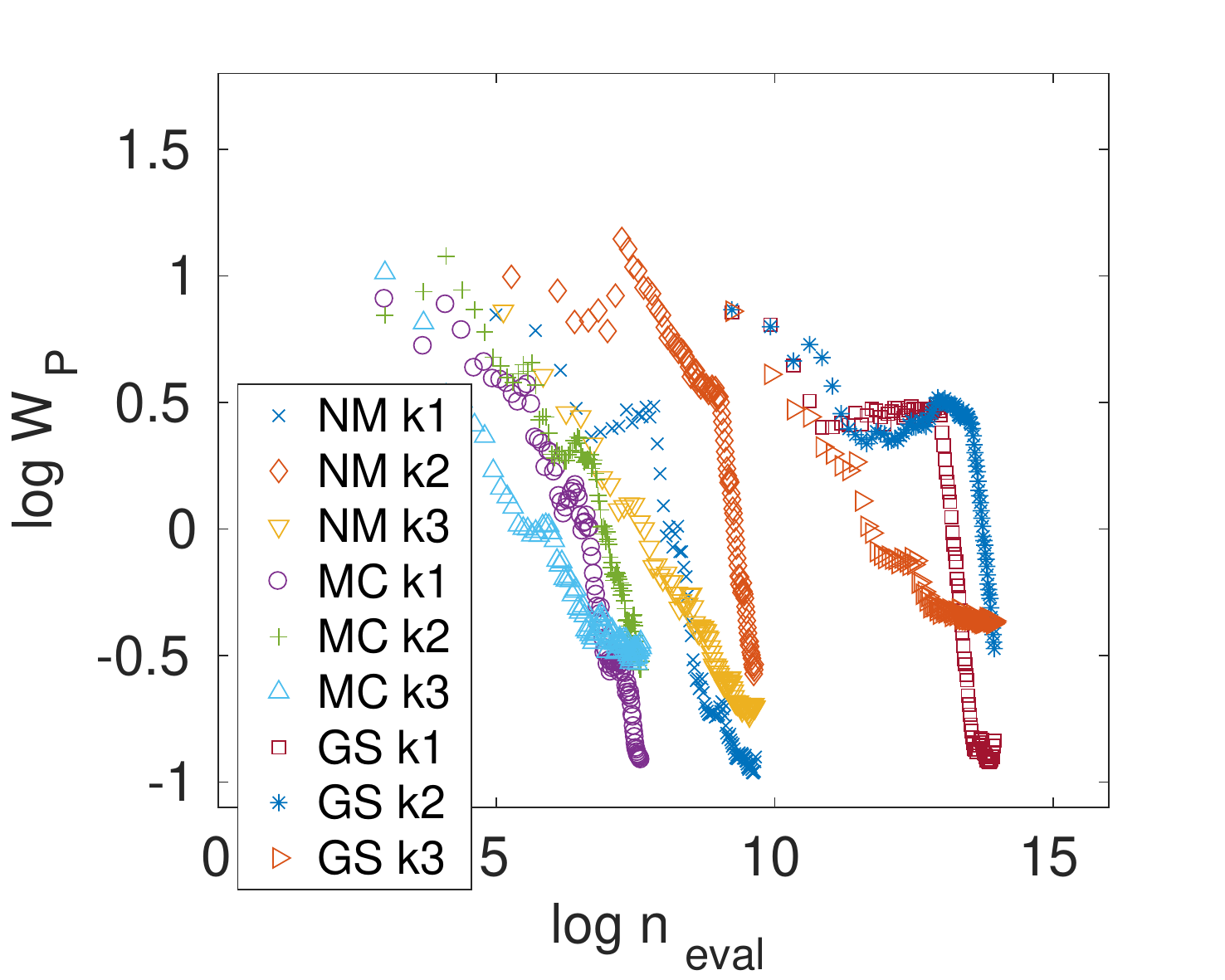}
        \caption{Stein Points (Herding)}
        \label{fig: GMM herding}
    \end{subfigure}

\begin{subfigure}[b]{0.23\textwidth}
		\centering
        \includegraphics[width=\textwidth]{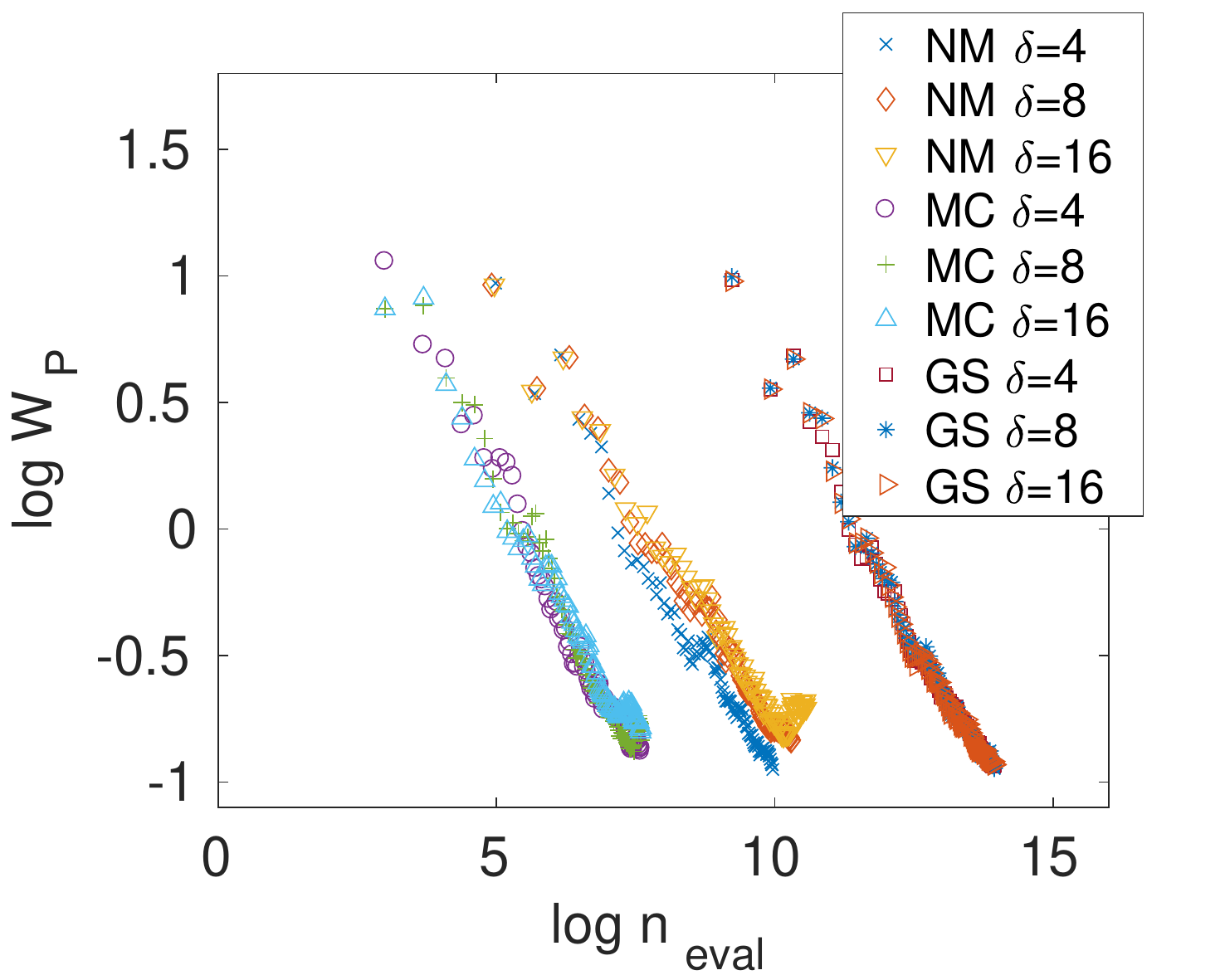}
        \caption{MED}
        \label{fig: GMM MED}
    \end{subfigure}
\begin{subfigure}[b]{0.23\textwidth}
		\centering
        \includegraphics[width=\textwidth]{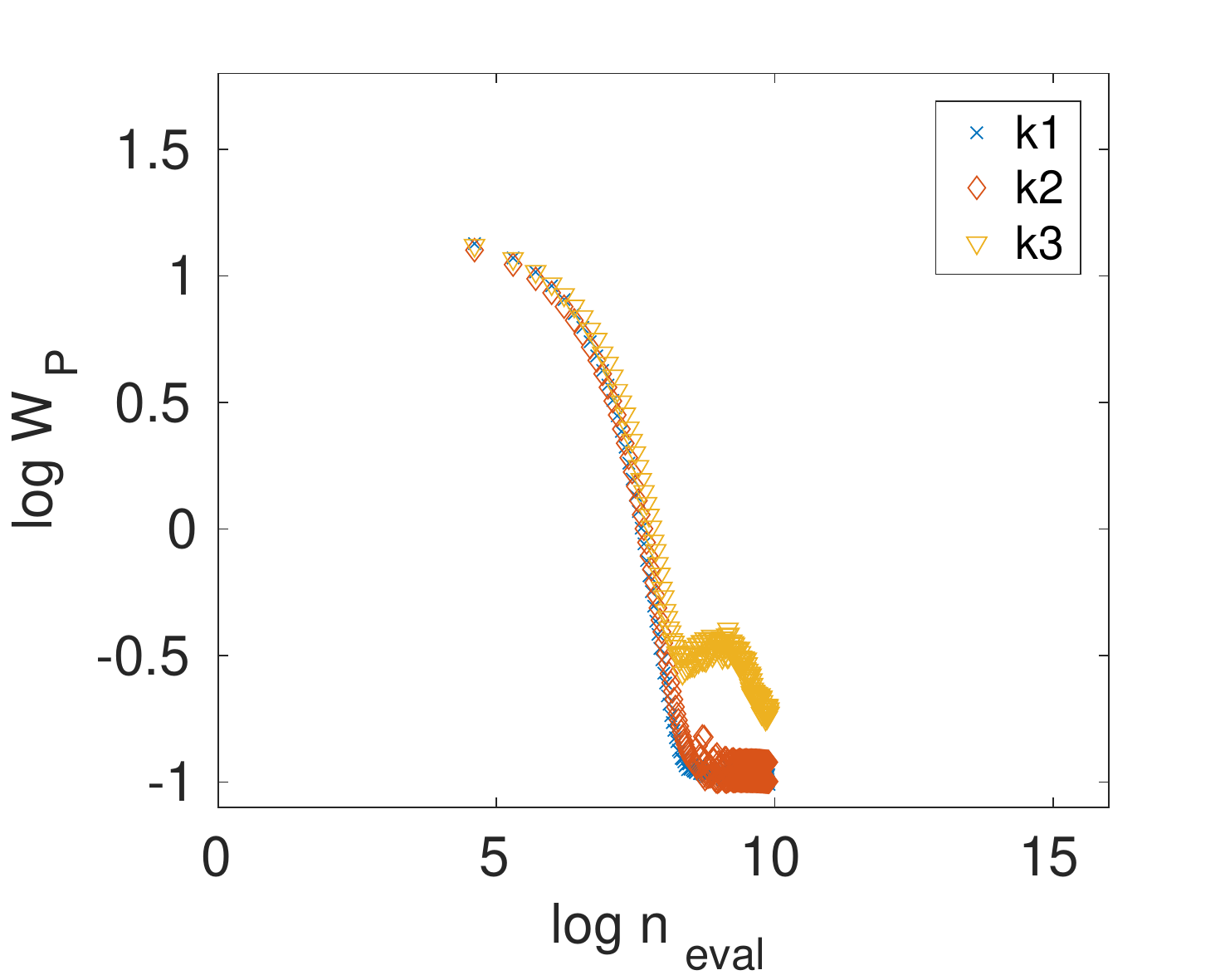}
        \caption{SVGD}
        \label{fig: GMM SVGD}
    \end{subfigure}

\caption{Results for the Gaussian mixture test.
[Here $n = 100$.
x-axis: log of the number $n_{\text{eval}}$ of model evaluations that were used.
y-axis: log of the Wasserstein distance $W_P(\{x_i\}_{i=1}^n)$ obtained.
Kernel parameters $\alpha$, $\beta$ were optimised according to $W_P$ in all cases, with sensitivities reported in Fig. \ref{fig: GP params} of the Supplement.]}
\label{fig: GMM}
\end{figure}

\begin{figure}[t!]
\centering
\includegraphics[width=0.49\textwidth]{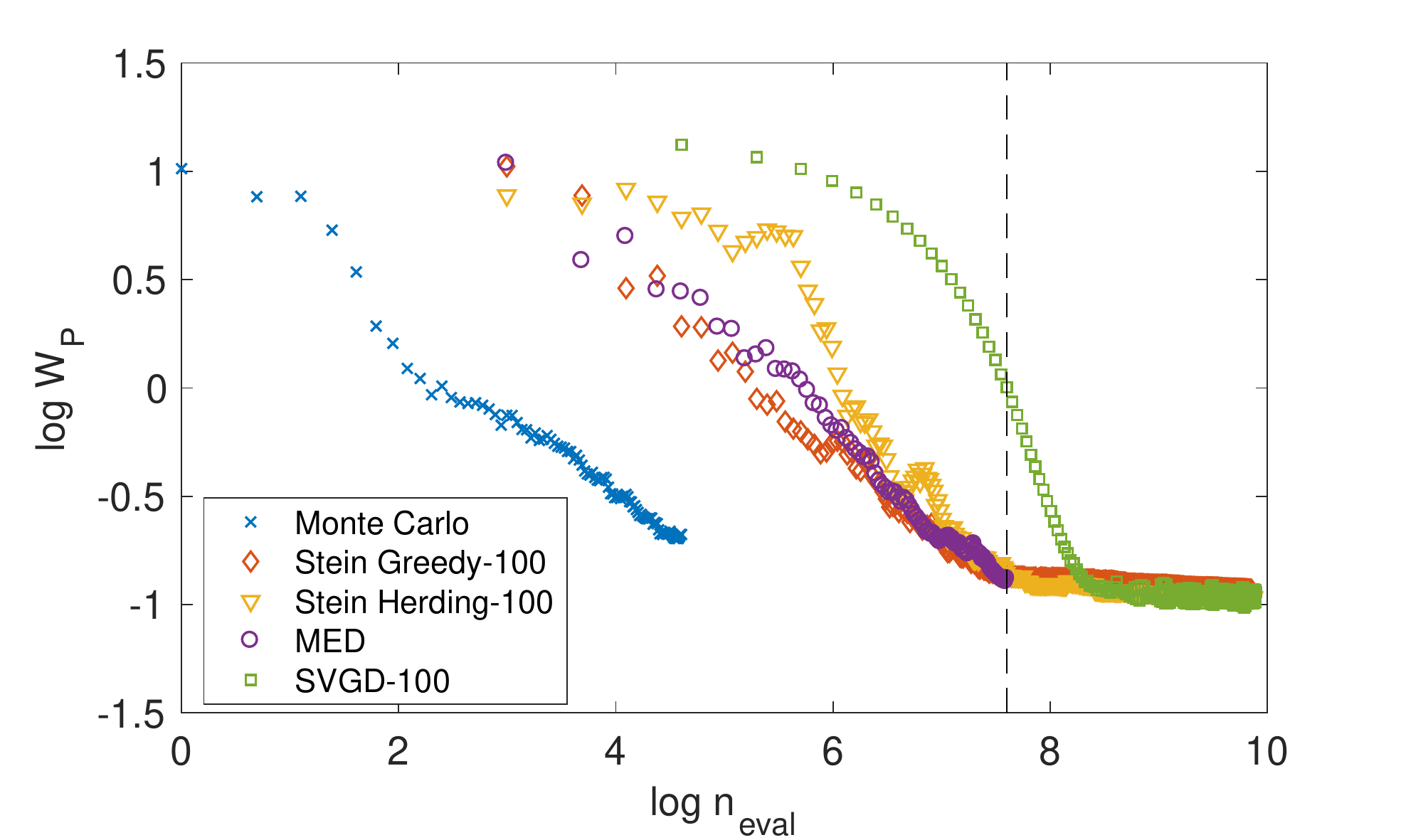}

\caption{Combined results for the Gaussian mixture test.
[Here $n = 100$.
x-axis: log of the number $n_{\text{eval}}$ of model evaluations that were used.
y-axis: log of the the Wasserstein distance $W_P(\{x_i\}_{i=1}^n)$ obtained.
Tuning parameters were selected to minimise $W_P$, as described in the main text.
The dashed line indicates the point at which $n$ Stein Points have been generated; block coordinate descent is performed thereafter to satisfy the $n$ point budget constraint.]}
\label{fig: GMM 2}
\end{figure}

\subsection{Gaussian Process Regression Model} \label{subsec: GP model}

The Gaussian process (GP) model is a popular choice for uncertainty quantification in the non-parametric regression context \citep{Rasmussen2006}.
The data $\mathcal{D} = \{(x_i,y_i)\}_{i=1}^n$ that we considered are from a light detection and ranging (LIDAR) experiment \citep{Ruppert2003}. They consist of 221 realisations of an independent scalar variable $x_i$ (distances travelled before the light is reflected back to its source) and a dependent scalar variable $y_i$ (log-ratios of received light from two laser sources); these were modelled as
$
y_i  =  g(x_i) + \epsilon_i, \text{ for } \epsilon_i \stackrel{\text{i.i.d.}}{\sim} \mathcal{N}(0,\sigma^2)
$
and a known value of $\sigma$.
The unknown regression function $g$ is modelled as a centred GP with covariance function $\text{cov}(x,x') = \theta_1 \exp( - \theta_2 (x - x')^2)$.
The hyper-parameters $\theta_1,\theta_2 > 0$ determine the suitability of the GP model, but appropriate values will be unknown in general.
In this experiment we re-parametrised $\phi_i = \log \theta_i$ and placed a standard multivariate Cauchy prior on $\phi = (\phi_1, \phi_2)$, defined on $X = \mathbb{R}^2$.
The task is thus to approximate the conditional distribution $p(\phi | \mathcal{D})$.
This problem is motivated by the computation of posterior predictive marginal distributions $p(y^* | x^* , \mathcal{D})$ for a new input $x^*$, which is defined as the integral $\int p(y^* | x^* , \phi, \mathcal{D}) p (\phi | \mathcal{D}) \mathrm{d}\phi$.
Note that the density $p(\phi | \mathcal{D})$ can be differentiated, and an explicit formula is provided in \citet[][Eqn. 5.9]{Rasmussen2006}.

For each class of method, `best' tuning parameters were selected and these are presented on the same plot in Figure \ref{fig: GP results}.
In addition, typical point sets provided by each method are presented in Figures \ref{fig: GP points} and \ref{fig: GP points v2} in the Supplement.
MED was not included because the method exhibited severe numerical instability on this task, as earlier discussed.
Results indicated three qualitative regimes where, respectively, Monte Carlo, greedy Stein points and SVGD provided the best performance for fixed cost.

\subsection{IGARCH Model} \label{subsec: IGARCH}

The integrated generalised autoregressive conditional heteroskedasticity (IGARCH) model is widely-used to describe financial time series $(y_t)$ with time-varying volatility $(\sigma_t)$ \citep{Taylor2011}.
The model is as follows:
\begin{eqnarray*}
y_t & = & \sigma_t \epsilon_t, \hspace{40pt} \epsilon_t \stackrel{\text{i.i.d.}}{\sim} \mathcal{N}(0,1) \\
\sigma_t^2 & = & \theta_1 + \theta_2 y_{t-1}^2 + (1-\theta_2) \sigma_{t-1}^2
\end{eqnarray*}
with parameters $\theta = (\theta_1,\theta_2)$, $\theta_1 > 0$ and $0 < \theta_2 < 1$.
The data $y = (y_t)$ that we considered were 2,000 daily percentage returns of the S\&P 500 stock index (from December 6, 2005 to November 14, 2013), and an improper uniform prior was placed on $\theta$.
Thus the task was to approximate the posterior $p(\theta | y)$.
Note that, whilst the domain $X = \mathbb{R}_+ \times (0,1)$ is bounded, for these data the posterior density is negligible on the boundary $\partial X$.
This ensures that Eqn. \ref{eq: stein property} holds essentially to machine precision; see also the discussion in \citet[][Section 3.2]{Oates2018}.
For the IGARCH model, gradients $\nabla \log p(\theta | y)$ can be obtained as the solution of a recursive system of equations for $\partial \sigma_t / \partial \theta_2$.

As before, the `best' performing of each class of method was selected and these are presented on the same plot in Figure \ref{fig: IGARCH results}.
In addition, typical point sets provided by each method are presented in Figures \ref{fig: IGARCH points} and \ref{fig: IGARCH points v2} in the Supplement.
(Numerical instability again prevented results for MED from being obtained.)
Results were consistent with the Gaussian mixture experiment, favouring either Monte Carlo or greedy Stein points depending on the computational budget.

\begin{figure*}[t!]
\centering
\begin{subfigure}[b]{0.49\textwidth}
	\centering
	\includegraphics[width=\textwidth]{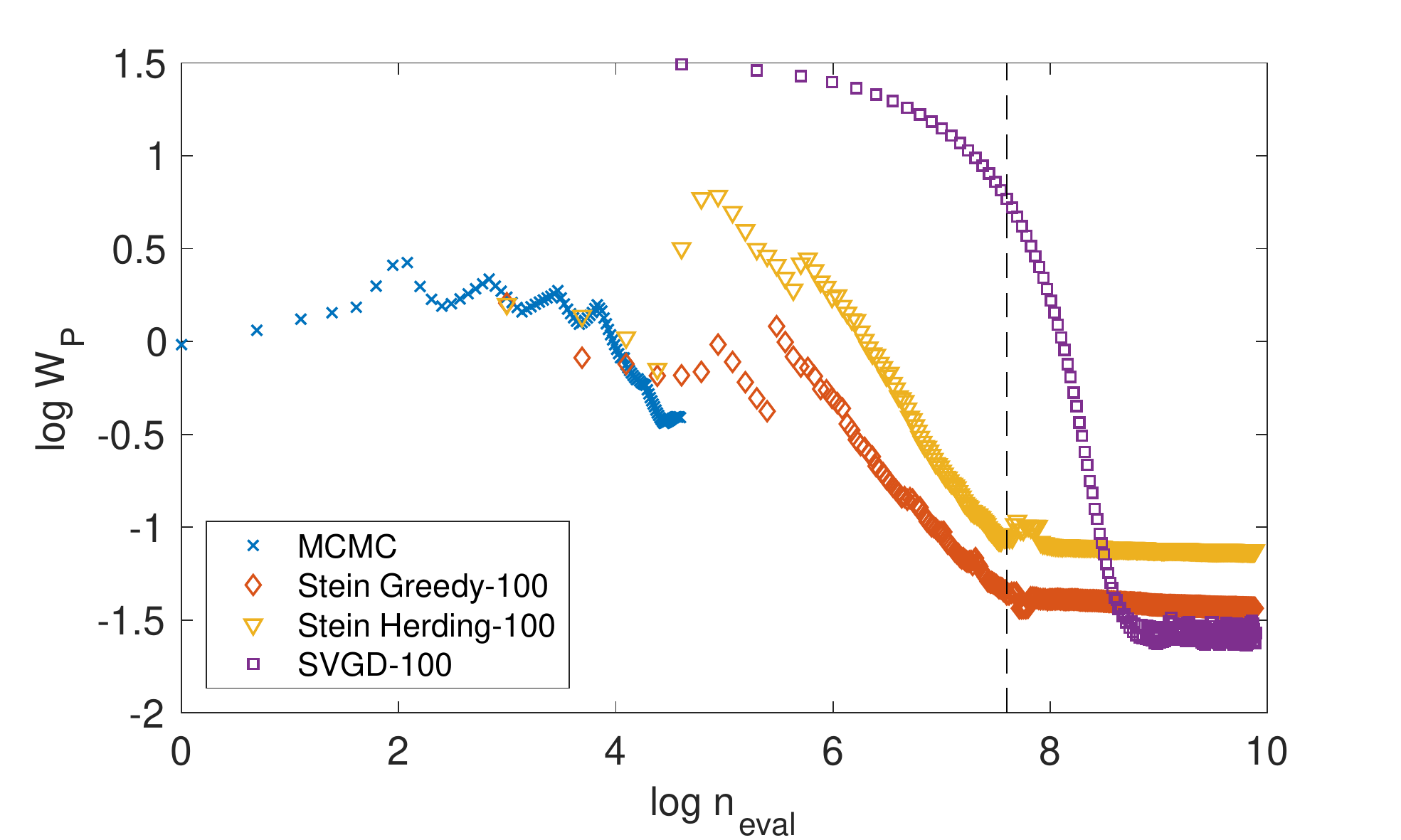}
    \caption{Gaussian Process Test}
	\label{fig: GP results}
\end{subfigure}
\begin{subfigure}[b]{0.49\textwidth}
	\centering
	\includegraphics[width=\textwidth]{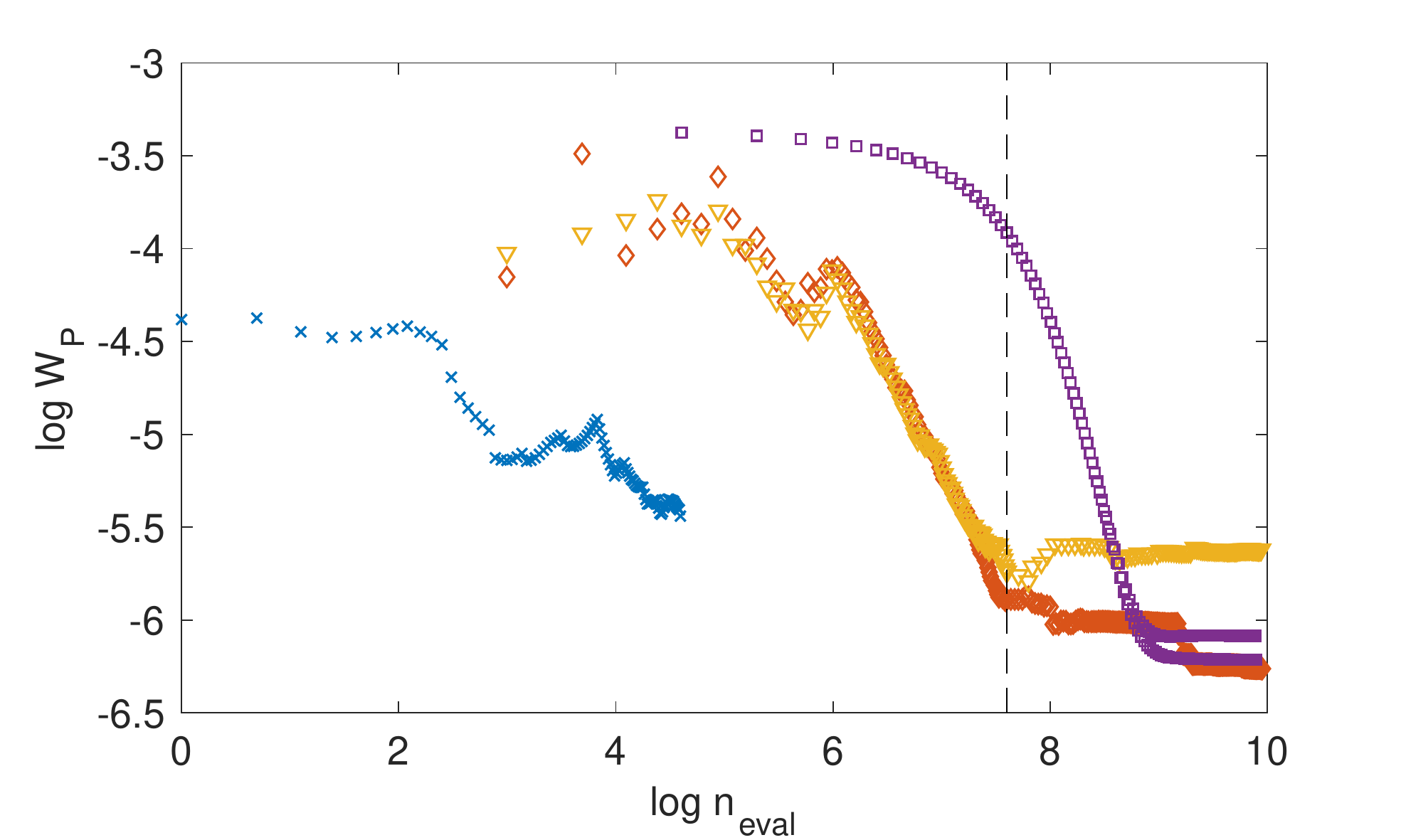}
    \caption{IGARCH Test}
	\label{fig: IGARCH results}
\end{subfigure}

\caption{Combined results for the (a) Gaussian process test and (b) IGARCH test.
[Here $n = 100$.
x-axis: log of the number $n_{\text{eval}}$ of model evaluations that were used.
y-axis: log of the Wasserstein distance $W_P(\{x_i\}_{i=1}^n)$ obtained.
Tuning parameters were selected to minimise $W_P$, as described in the main text.
The dashed line indicates the point at which $n$ Stein Points have been generated; block coordinate descent is performed thereafter to satisfy the $n$ point budget constraint.]}
\end{figure*}

\def\mbi#1{\boldsymbol{#1}} %
\def\mbf#1{\mathbf{#1}}
\def\mbb#1{\mathbb{#1}}
\def\mc#1{\mathcal{#1}}
\def\mrm#1{\mathrm{#1}}
\def\tbf#1{\textbf{#1}}
\def\tsc#1{\textsc{#1}}

\def\indic#1{\mbb{I}\left[{#1}\right]} %
\def\E{\mbb{E}} %
\def\Earg#1{\E\left[{#1}\right]}
\def\Esubarg#1#2{\E_{#1}\left[{#2}\right]}
\def\bigO#1{\mathcal{O}(#1)} %
\def\P{\mbb{P}} %
\def\Parg#1{\P\left({#1}\right)}
\def\Psubarg#1#2{\P_{#1}\left[{#2}\right]}

\def\norm#1{\left\|{#1}\right\|} %
\newcommand{\onenorm}[1]{\norm{#1}_1} %
\newcommand{\twonorm}[1]{\norm{#1}_2} %
\newcommand{\infnorm}[1]{\norm{#1}_{\infty}} %
\newcommand{\opnorm}[1]{\norm{#1}_{op}} %
\newcommand{\fronorm}[1]{\norm{#1}_{F}} %
\def\staticnorm#1{\|{#1}\|} %
\newcommand{\statictwonorm}[1]{\staticnorm{#1}_2} %
\newcommand{\inner}[2]{\langle{#1},{#2}\rangle} %
\newcommand{\binner}[2]{\left\langle{#1},{#2}\right\rangle} %

\newcommand{\qtext}[1]{\quad\text{#1}\quad} 
\def\defeq{\triangleq} %

\newcommand{\grad}{\nabla} %
\newcommand{\Hess}{\nabla^2} %
\newcommand{\lapl}{\triangle} %
\newcommand{\deriv}[2]{\frac{d #1}{d #2}} %
\newcommand{\pderiv}[2]{\frac{\partial #1}{\partial #2}} %

\def\supp#1{\mathrm{supp}({#1})}
\newcommand{\eps}[0]{\epsilon}
\newcommand{\textfrac}[2]{{\textstyle\frac{#1}{#2}}}
\newcommand{\ft}[1]{\mathcal{F}\{#1\}}

\def\Gronwall{Gr\"onwall\xspace}
\def\Holder{H\"older\xspace}
\def\Ito{It\^o\xspace}
\def\Nystrom{Nystr\"om\xspace}
\def\Schatten{Sch\"atten\xspace}
\def\Matern{Mat\'ern\xspace}

\def\reals{\mathbb{R}} %
\def\R{\mathbb{R}}
\def\integers{\mathbb{Z}} %
\def\Z{\mathbb{Z}}
\def\rationals{\mathbb{Q}} %
\def\Q{\mathbb{Q}}
\def\naturals{\mathbb{N}} %
\def\N{\mathbb{N}}
\def\complex{\mathbb{C}} %

\newtheorem{lemma}[theorem]{Lemma}

\def\Lap{\Delta}   %
\def\Re{\textbf{Re}}
\def\Im{\textbf{Im}}

\newcommand{\dt}{\,dt}
\newcommand{\ds}{\,ds}
\newcommand{\dx}{\,dx}
\newcommand{\dw}{\,dw}
\newcommand{\dy}{\,dy}
\newcommand{\dz}{\,dz}
\newcommand{\dr}{\,dr}

\newcommand{\algref}[1]{Algorithm~{\ref{alg:#1}}}
\newcommand{\appref}[1]{Appendix~{\ref{sec:#1}}}
\newcommand{\assumpref}[1]{Assumption~{\ref{assump:#1}}}
\newcommand{\assumpsref}[1]{Assumptions~{\ref{assump:#1}}}
\newcommand{\assumpssref}[1]{{\ref{assump:#1}}}
\newcommand{\chapref}[1]{Chapter~{\ref{chap:#1}}}
\newcommand{\cororef}[1]{Corollary~{\ref{cor:#1}}}
\newcommand{\conjref}[1]{Conjecture~{\ref{conj:#1}}}
\newcommand{\eqnref}[1]{\eqref{eqn:#1}}
\newcommand{\exref}[1]{Example~{\ref{ex:#1}}}
\newcommand{\figref}[1]{Figure~{\ref{fig:#1}}}
\newcommand{\lemref}[1]{Lemma~{\ref{lem:#1}}}
\newcommand{\lemsref}[1]{Lemmas~{\ref{lem:#1}}}
\newcommand{\lemssref}[1]{{\ref{lem:#1}}}
\newcommand{\defref}[1]{Definition~{\ref{def:#1}}}
\newcommand{\secref}[1]{Section~{\ref{sec:#1}}}
\newcommand{\secsref}[1]{Sections~{\ref{sec:#1}}}
\newcommand{\secssref}[1]{{\ref{sec:#1}}}
\newcommand{\propref}[1]{Proposition~{\ref{prop:#1}}}
\newcommand{\tabref}[1]{Table~{\ref{tab:#1}}}
\newcommand{\thmref}[1]{Theorem~{\ref{thm:#1}}}
\newcommand{\thmsref}[1]{Theorems~{\ref{thm:#1}}}
\newcommand{\thmssref}[1]{{\ref{thm:#1}}}
\newcommand{\algreflow}[1]{Algorithm~\lowercase{\ref{alg:#1}}}
\newcommand{\appreflow}[1]{Appendix~\lowercase{\ref{sec:#1}}}
\newcommand{\assumpreflow}[1]{Assumption~\lowercase{\ref{assump:#1}}}
\newcommand{\chapreflow}[1]{Chapter~\lowercase{\ref{chap:#1}}}
\newcommand{\cororeflow}[1]{Corollary~\lowercase{\ref{cor:#1}}}
\newcommand{\figreflow}[1]{Figure~\lowercase{\ref{fig:#1}}}
\newcommand{\lemreflow}[1]{Lemma~\lowercase{\ref{lem:#1}}}
\newcommand{\defreflow}[1]{Definition~\lowercase{\ref{def:#1}}}
\newcommand{\secreflow}[1]{Section~\lowercase{\ref{sec:#1}}}
\newcommand{\secsreflow}[1]{Sections~\lowercase{\ref{sec:#1}}}
\newcommand{\secssreflow}[1]{\lowercase{\ref{sec:#1}}}
\newcommand{\propreflow}[1]{Proposition~\lowercase{\ref{prop:#1}}}
\newcommand{\tabreflow}[1]{Table~\lowercase{\ref{tab:#1}}}
\newcommand{\thmreflow}[1]{Theorem~\lowercase{\ref{thm:#1}}}

\newcommand{\eqdist}{\stackrel{d}{=}}
\newcommand{\todist}{\stackrel{d}{\to}}
\newcommand{\iid}{\stackrel{\mathrm{iid}}{\sim}}
\newcommand{\ind}{\stackrel{\mathrm{ind}}{\sim}}
\newcommand{\eqd}{\stackrel{d}{=}}
\def\KL#1#2{\textnormal{KL}({#1}\Vert{#2})}
\newcommand\indep{\protect\mathpalette{\protect\independenT}{\perp}}
\def\independenT#1#2{\mathrel{\rlap{$#1#2$}\mkern4mu{#1#2}}}

\newcommand{\kset}{\mathcal{K}}
\newcommand{\pset}{\mathcal{P}}

\section{Theoretical Results} \label{sec: theory}
In this section we establish two important forms of theoretical guarantees: 
(1) discrepancy control, i.e., $D_{\mathcal{K}_0,P}(\{x_i\}_{i=1}^n) \to 0$ as $n\to\infty$ for our extensible Stein Point sequences
and
(2) distributional convergence control, i.e., for our kernel choices and appropriate choices of target, $D_{\mathcal{K}_0,P}(\{x_i\}_{i=1}^n) \to 0$ 
implies that the empirical distribution $\frac{1}{n}\sum_{i=1}^n\delta_{x_i}$ converges in distribution to $P$.

\subsection{Discrepancy Control}
Earlier work has shown that, when a kernel is uniformly bounded (i.e., $\sup_{x\in X} k_0(x,x) \leq R^2$), 
the greedy and kernel herding algorithms decrease the associated discrepancy
$D_{\mathcal{K}_0,P}$ at an $O(n^{-\frac{1}{2}})$ rate \citep{Lacoste-julien2015,Jones1992}.
We extend these results to cover all growing, $P$-sub-exponential kernels. 
\begin{definition}[$P$-sub-exponential reproducing kernel]
We say a reproducing kernel $k_0$ is \emph{$P$-sub-exponential} if
\[
\Psubarg{Z\sim P}{ k_0(Z,Z) \geq t } \leq c_1 e^{-c_2t}
\]
for some constants $c_1, c_2 > 0$ and all $t \geq 0$.
\end{definition}
Notably, any uniformly bounded reproducing kernel is $P$-sub-exponential, and, when $P$ is a sub-Gaussian distribution, 
any kernel with at most quadratic growth (i.e., $k_0(x,x) = O(\twonorm{x}^2)$) is also $P$-sub-exponential.
Our first result, proved in \secref{herding-convergence-proof}, shows that if we truncate the search domain suitably in each step, Stein Herding decreases the discrepancy at an $O(\sqrt{\log(n)/n})$ rate.
This result holds even if each point $x_i$ is selected suboptimally with error $\delta/2$.
This extra degree of freedom allows a user to conduct a grid search or a search over appropriately generated random points on each step \citep[see, e.g.,][]{Lacoste-julien2015} and still obtain a rate of convergence.
\begin{theorem}[Stein Herding Convergence]
\label{thm:herding-convergence}
Suppose $k_0$ with $k_{0,P} = 0$ is a $P$-sub-exponential reproducing kernel.
Then there exist constants $c_1, c_2 > 0$ depending only on $k_0$ and $P$ 
such that any point sequence $\{x_i\}_{i=1}^n$ satisfying 
\[
\textstyle
\sum_{i=1}^{j-1} k_0(x_i, x_j) \leq \frac{\delta}{2} 
	+ \displaystyle\min_{x\in X: k_0(x,x)\leq R_j^2}
	\textstyle\sum_{i=1}^{j-1} k_0(x_i, x) 
\]
with ${k_0(x_j,x_j)} \leq R_j^2\in[{2\log(j)/c_2},{2\log(n)/c_2}]$ for each $1 \leq j \leq n$ also satisfies 
$$\textstyle
D_{\mc{K}_0,P}(\{x_i\}_{i=1}^n) 
	\leq e^{\pi/2} \sqrt{\frac{2\log(n)}{c_2n} +\frac{c_1}{n} + \frac{\delta}{n}}.
$$
\end{theorem}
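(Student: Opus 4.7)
The plan is to adapt the standard Frank--Wolfe / kernel-herding convergence argument to unbounded kernels by combining an ``expectation dominates the minimum'' trick with a sub-exponential tail bound that controls the effect of restricting the search to the truncated domain $A_n := \{x : k_0(x,x) \leq R_n^2\}$. Setting $f_j := \frac{1}{j}\sum_{i=1}^j k_0(x_i,\cdot)$, the identity $k_{0,P} = 0$ gives $D_{\mathcal{K}_0,P}(\{x_i\}_{i=1}^j)^2 = \|f_j\|_{\mathcal{K}_0}^2 =: V_j$, and splitting the double sum over $\{1,\dots,n\}$ into the $\{1,\dots,n-1\}$ block, the cross terms, and the diagonal yields the exact recursion
\[
n^2 V_n = (n-1)^2 V_{n-1} + 2\sum_{i=1}^{n-1} k_0(x_i, x_n) + k_0(x_n, x_n).
\]

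The crux is the cross term. Setting $p_n := P(A_n)$ and noting that the minimum over $A_n$ is dominated by the average under the conditional law $P|_{A_n}$, the suboptimality hypothesis combined with $\int k_0(x_i,z)\, dP(z) = k_{0,P}(x_i) = 0$ gives
\[
\sum_{i=1}^{n-1} k_0(x_i, x_n) \;\leq\; \frac{\delta}{2} \;-\; \frac{1}{p_n}\int_{A_n^c}\sum_{i=1}^{n-1}k_0(x_i,z)\,dP(z).
\]
Because each previously chosen $x_i$ lies in $A_i \subseteq A_n$, Cauchy--Schwarz in $\mathcal{K}_0$ gives $|k_0(x_i,z)| \leq \sqrt{k_0(x_i,x_i)\,k_0(z,z)} \leq R_n \sqrt{k_0(z,z)}$, so the tail integral is controlled by $\tau_n := \mathbb{E}_{Z\sim P}[\sqrt{k_0(Z,Z)}\, \mathbb{I}[k_0(Z,Z) > R_n^2]]$. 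A layer-cake computation together with the sub-exponential tail $P(k_0(Z,Z) \geq t) \leq c_1 e^{-c_2 t}$ yields $\tau_n \leq c_1 e^{-c_2 R_n^2}(R_n + 1/(2 c_2 R_n))$, and the choice $R_n^2 = 2\log(n)/c_2$ makes $e^{-c_2 R_n^2} = n^{-2}$, so that $(n-1) R_n \tau_n / p_n = O(\log(n)/n)$.

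Feeding these estimates back into the recursion shows that each step increments $n^2 V_n$ by at most $\delta + O(\log n)$, with the diagonal $k_0(x_n, x_n) \leq R_n^2 = 2\log(n)/c_2$ term dominating the cross-term contribution. Telescoping from $V_1 \leq R_1^2 \leq 2\log(n)/c_2$ and dividing by $n^2$ gives a bound of the form $V_n \leq 2\log(n)/(c_2 n) + c_1/n + \delta/n$ up to constants, and taking square roots delivers the theorem, with the constant $e^{\pi/2}$ absorbing the explicit Gaussian-tail prefactors produced by the $\tau_n$ estimate. The main obstacle is the tight bookkeeping in this tail-integral step: even after multiplying by the inflation factor $(n-1)/p_n$, the truncation contribution must remain at most $O(\log n)$ per step so that the diagonal $R_n^2$ term is not overwhelmed, which pins down the calibration $R_n^2 \asymp \log(n)/c_2$ and explains how the sub-exponential constants $c_1, c_2$ propagate into the final bound. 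A secondary subtlety is verifying $p_n$ stays bounded away from zero, which is automatic since $1 - p_n \leq c_1 e^{-c_2 R_n^2} \leq c_1/n^2$.
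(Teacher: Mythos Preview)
Your approach is sound and delivers the stated bound, but it takes a genuinely different path from the paper. The paper unifies herding and greedy into one lemma and follows a Jones-style argument: at step $j$ it bounds $\min_{x\in A_j}f_{j-1}(x)$ by $\langle h_j,f_{j-1}\rangle_{\mathcal{K}_0}$ for $h_j$ the norm-minimiser over the convex hull of $\{k_0(x,\cdot):x\in A_j\}$, then uses Cauchy--Schwarz plus the AM--GM split $2(j-1)\langle h_j,f_{j-1}\rangle\le j^2\|h_j\|^2+\tfrac{(j-1)^2}{j^2}\epsilon_{j-1}^2$, producing a multiplicative $(1+1/j^2)$ in the recursion whose product $\prod_j(1+1/j^2)\le e^\pi$ is precisely the source of $e^{\pi/2}$ --- it is not a Gaussian-tail artefact as you suggest. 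The remaining tail $\|h_j\|^2\le\mathbb{E}[Y\,\mathbb{I}\{Y>S_j^2\}]$ (with $Y=k_0(Z,Z)$) is then handled by a size-biasing device showing that $Y^*\propto Y$ is again sub-exponential. Your ``min $\le$ conditional mean'' trick sidesteps both devices: rewriting the truncated average as $-\tfrac{1}{p_j}\int_{A_j^c}$ via $k_{0,P}=0$ gives a purely \emph{additive} recursion (so the $e^{\pi/2}$ in your final bound is pure slack, not something you derive), and your tail quantity $\tau_j=\mathbb{E}[\sqrt{Y}\,\mathbb{I}\{Y>R_j^2\}]$ succumbs to a direct layer-cake bound without size biasing. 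Two loose ends worth tightening: for $i<j<n$ you cannot assume $A_i\subseteq A_j$ (only $A_i\subseteq A_n$ is forced, since $R_n^2=2\log(n)/c_2$ is maximal), so at step $j$ the Cauchy--Schwarz step should use the uniform bound $k_0(x_i,x_i)\le 2\log(n)/c_2$ rather than $R_j^2$; and to keep $p_j$ bounded away from zero for \emph{all} $j\ge 2$ (not just large $j$) you must choose the theorem's $c_2$ small enough relative to the sub-exponential rate constant.
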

Our next result, proved in \secref{greedy-convergence-proof}, shows that Stein Greedy decreases the discrepancy at an $O(\sqrt{\log(n)/n})$ rate whether we choose to truncate ($R_j < \infty$) or not ($R_j = \infty$).  This highlights an advantage of the Stein Greedy algorithm over Stein Herding: the extra $k_0(x,x)/2$ term acts as a regularizer ensuring that no truncation is necessary. The result also accommodates points $x_i$ selected suboptimally with error $\delta/2$.
\begin{theorem}[Stein Greedy Convergence]
\label{thm:greedy-convergence}
Suppose $k_0$ with $k_{0,P} = 0$ is a $P$-sub-exponential reproducing kernel.
Then there exist constants $c_1, c_2 > 0$ depending only on $k_0$ and $P$ 
such that any point sequence $\{x_i\}_{i=1}^n$ satisfying 
\begin{align*}
\textstyle
\frac{k_0(x_j,x_j)}{2}&+\textstyle\sum_{i=1}^{j-1} k_0(x_i, x_j) \\
	&\leq \textstyle\frac{\delta}{2} + 
	\displaystyle\min_{x\in X: k_0(x,x)\leq R_j^2}\textstyle\frac{k_0(x,x)}{2}+\sum_{i=1}^{j-1} k_0(x_i, x) 
\end{align*}
with $\sqrt{2\log(j)/c_2}\leq R_j \leq\infty$ for each $1 \leq j \leq n$ also satisfies 
$$\textstyle
D_{\mc{K}_0,P}(\{x_i\}_{i=1}^n) 
	\leq e^{\pi/2} \sqrt{\frac{2\log(n)}{c_2n} +\frac{c_1}{n} + \frac{\delta}{n}}.
$$
\end{theorem}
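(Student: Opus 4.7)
The plan is to parallel the proof of Theorem 1 (Stein Herding), but to exploit the fact that the $\tfrac{1}{2}k_0(x,x)$ regularizer built into the Stein Greedy selection rule automatically damps the diagonal contribution $k_0(x_j,x_j)$, so that no upper bound on the truncation radius $R_j$ is required.

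Set $g_j := \sum_{i=1}^j k_0(x_i,\cdot) \in \mathcal{K}_0$, so that $n^2 D_{\mathcal{K}_0,P}(\{x_i\}_{i=1}^n)^2 = \|g_n\|_{\mathcal{K}_0}^2$. Expanding the squared norm and applying the greedy hypothesis of the theorem yields
\[
\|g_j\|_{\mathcal{K}_0}^2 - \|g_{j-1}\|_{\mathcal{K}_0}^2 \;\leq\; \delta \;+\; \min_{x:\,k_0(x,x)\leq R_j^2}\Big[\,k_0(x,x) + 2\sum_{i=1}^{j-1} k_0(x_i,x)\,\Big].
\]
Bound this minimum by the expectation under $P_j$, the law of $X\sim P$ conditioned on $k_0(X,X)\leq R_j^2$. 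Writing $\mu_j := \mathbb{E}_{P_j}[k_0(X,\cdot)]$, the cross term reduces to $2\langle g_{j-1},\mu_j\rangle_{\mathcal{K}_0} \leq 2\|g_{j-1}\|_{\mathcal{K}_0}\|\mu_j\|_{\mathcal{K}_0}$.

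The $P$-sub-exponential hypothesis guarantees $\mathbb{P}(k_0(X,X)>R_j^2)\leq c_1 e^{-c_2 R_j^2}\leq c_1/j^2$ for $R_j^2\geq 2\log(j)/c_2$, and $\mathbb{E}_P[k_0(X,X)] \leq c_1/c_2$ by tail integration. Since $k_{0,P}=0$, one has $\mathbb{P}(k_0(X,X)\leq R_j^2)\,\mu_j = -\mathbb{E}_P[k_0(X,\cdot)\mathbf{1}\{k_0(X,X)>R_j^2\}]$; expanding $\|\mu_j\|_{\mathcal{K}_0}^2$ with the reproducing property and using the RKHS Cauchy–Schwarz inequality $k_0(x,x') \leq \sqrt{k_0(x,x)k_0(x',x')}$ reduces the problem to controlling $\mathbb{E}_P[\sqrt{k_0(X,X)}\mathbf{1}\{k_0(X,X)>R_j^2\}]$, which the sub-exponential tail bounds at the rate $\tilde O(1/j^2)$. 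Combining these pieces gives a recursion of the form
\[
\|g_j\|_{\mathcal{K}_0}^2 \;\leq\; \|g_{j-1}\|_{\mathcal{K}_0}^2 + A + \delta + 2\varepsilon_j\,\|g_{j-1}\|_{\mathcal{K}_0},
\]
where $A$ controls $\mathbb{E}_{P_j}[k_0(X,X)]$ and $\{\varepsilon_j\}$ is summable.

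The last step is to unfold this recursion into a bound on $\|g_n\|_{\mathcal{K}_0}^2$ by a discrete Grönwall-type argument; the stated $e^{\pi/2}$ prefactor and the $\sqrt{2\log(n)/(c_2 n)}$ scaling then follow after dividing by $n^2$ and taking square roots. The main technical obstacle is precisely this bookkeeping: carefully tracing how the constants $c_1, c_2$ of the sub-exponential hypothesis, the regularizer constant, and the truncation bookkeeping combine through the Grönwall step to yield the clean form of the bound. Everything else is a direct adaptation of the Stein Herding argument, made uniformly simpler by the fact that the regularizer renders the upper truncation $R_j \leq \sqrt{2\log(n)/c_2}$ of Theorem 1 unnecessary.
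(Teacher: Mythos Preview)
Your approach is valid but differs from the paper's in a meaningful way. The paper does not prove the greedy case directly; it reduces both Theorem~1 and Theorem~2 to a single ``Stein Point Convergence'' lemma. For greedy, the reduction first shrinks the search region from $\{k_0(x,x)\le R_j^2\}$ to $\{k_0(x,x)\le S_j^2\}$ with $S_j=\sqrt{2\log(j)/c_2}$, and then \emph{discards} the regularizer via the crude bound $k_0(x,x)/2\le S_j^2/2$. The ensuing proof handles the cross term $2(j-1)\langle h_j,f_{j-1}\rangle$ by an AM--GM inequality with carefully chosen weights, producing a \emph{multiplicative} recursion whose unrolled product $\prod_i(1+1/i^2)$ is what yields the $e^{\pi}$. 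The $\log n$ term comes from summing the crude diagonal bounds $S_j^2$.

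You instead keep the regularizer and bound $k_0(x,x)$ by the conditional expectation $\mathbb{E}_{P_j}[k_0(X,X)]$, which is $O(1)$ rather than $O(\log j)$. Combined with plain Cauchy--Schwarz on the cross term, this gives an \emph{additive} recursion $\|g_j\|^2\le\|g_{j-1}\|^2+A+\delta+2\epsilon_j\|g_{j-1}\|$ with $\epsilon_j$ summable. A short induction closes this cleanly: $\|g_j\|\le\sqrt{j(A+\delta)}+\sum_{i\le j}\epsilon_i$, so $D_n=O(\sqrt{(A+\delta)/n})$ with \emph{no} $\log n$ factor. Thus your route, once the Gr\"onwall step is written out, actually delivers a sharper bound, and the stated theorem follows a fortiori.

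Two caveats. First, your sentence ``the stated $e^{\pi/2}$ prefactor and the $\sqrt{2\log(n)/(c_2 n)}$ scaling then follow'' is incorrect: those specific constants are artifacts of the paper's multiplicative-recursion mechanism and will not emerge from your additive one. You should instead say that your bound implies the (weaker) stated one for suitable $c_1$. Second, the conditioning on $\{k_0(X,X)\le R_j^2\}$ need not be well-posed for small $j$ (the sub-exponential tail bound at $R_1\ge 0$ is vacuous); you should handle the first few steps separately or note that one may always average over the smaller set $\{k_0(X,X)\le S_j^2\}\subseteq\{k_0(X,X)\le R_j^2\}$ with $S_j$ chosen large enough to carry positive $P$-mass.
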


\subsection{Distributional Convergence Control}
To present our final results, we overload notation to define the KSD associated with any probability measure $\mu$:
\begin{eqnarray*}
D_{\mathcal{K}_0,P}\left(\mu \right) & = & \sqrt{\Esubarg{(Z,Z')\sim \mu\times\mu}{k_0(Z,Z')}} .
\end{eqnarray*}
Our original $D_{\mathcal{K}_0,P}$ definition (Eq.~\ref{eqn:ksd}) for a point set $\{x_i\}_{i=1}^n$ is recovered when $\mu$ is the empirical measure $\frac{1}{n}\sum_{i=1}^n \delta_{x_i}$.
We also write $\mu_m \Rightarrow P$ to indicate that a sequence of probability measures $(\mu_m)_{m=1}^\infty$ converges in distribution to $P$.

\citet[][Thm. 8]{Gorham2017} showed that KSDs with IMQ base kernel \ref{enum:imq} and Langevin Stein operator $\mc{T}_P$ control distributional convergence 
 whenever $P$ belongs to the set $\pset$ of distantly dissipative distributions (i.e., $\inner{\grad \log p(x) - \grad \log p(y)}{x-y} \leq -\kappa\twonorm{x-y}^2 + C$ for some $C\geq 0, \kappa >0$) with Lipschitz $\grad \log p$.
Surprisingly, Gaussian, \Matern, and other kernels with light tails do not satisfy this property \citep[][Thm. 6]{Gorham2017}.

Our next theorem establishes distributional convergence control for our newly introduced log inverse kernel \ref{enum:log-imq}.
\begin{theorem}[Log Inverse KSD Controls Convergence]\label{thm:log-inverse-implies-weak-convergence}
Suppose $P \in \pset$.
Consider a Stein reproducing kernel $k_0 = \mc{T}_P\overline{\mathcal{T}_P}k_2$
with Langevin operator $\mc{T}_P$ and base kernel
$k_2(x,x') = (\alpha + \log(1 + \twonorm{x-x'}^2))^{\beta}$ for
$\alpha > 0$ and $\beta < 0$. If $D_{\mathcal{K}_0,P}\left(\mu_m \right) \to
0$, then $\mu_m \Rightarrow P$.
\end{theorem}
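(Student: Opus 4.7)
The plan is to mirror the proof of Theorem 8 in \cite{Gorham2017}, which established the analogous conclusion for the IMQ kernel $k_1$. Arguing by contradiction, suppose $\mu_m\not\Rightarrow P$; by passing to a subsequence we may assume either (A) $\mu_m$ is tight with weak limit $\nu\neq P$, or (B) $\mu_m$ fails to be tight. The task is to derive $\liminf_m D_{\mathcal{K}_0,P}(\mu_m)>0$ in both cases, contradicting the hypothesis.

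For case (A), the key step is to verify that $k_2$ yields a characteristic Stein kernel. Write $k_2(x,x')=\phi(\|x-x'\|_2^2)$ with $\phi(s)=(\alpha+\log(1+s))^\beta$. Since $s\mapsto\alpha+\log(1+s)$ is a Bernstein function and $t\mapsto t^\beta$ is completely monotone on $(0,\infty)$ for $\beta<0$, the composition $\phi$ is completely monotone on $[0,\infty)$. By Schoenberg's theorem, $k_2$ is a strictly positive definite radial kernel whose spectral measure has full support, and hence $k_2$ is characteristic. Together with the Stein characterisation of \cite{Chwialkowski2016,Liu2016c}, this yields $D_{\mathcal{K}_0,P}(\nu)>0$ for every $\nu\neq P$, and standard lower-semicontinuity arguments along weakly convergent sequences then supply the required lower bound on the $\liminf$.

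Case (B) is the main obstacle: one must rule out escape of mass to infinity. Here I would exploit coercivity of the diagonal: since $\nabla_x k_2(x,x')|_{x=x'}=0$ for the radial form, a direct computation gives $k_0(x,x)=2d|\beta|\alpha^{\beta-1}+\alpha^\beta\|\nabla\log p(x)\|_2^2$. Distant dissipativity of $P$, applied with $y=0$ and combined with Cauchy--Schwarz, yields $\|\nabla\log p(x)\|_2\geq\kappa\|x\|_2-C$ for large $\|x\|_2$, so $k_0(x,x)$ grows at least quadratically in $\|x\|_2$. Moreover, $k_2$ has strictly heavier tails than the IMQ kernel -- decaying as $(\log\|x-x'\|_2)^\beta$ rather than $\|x-x'\|_2^{2\beta}$ -- so the off-diagonal values $k_0(x,x')$ at large separations remain large enough to prevent destructive cancellation of the coercive diagonal in the double integral defining $D_{\mathcal{K}_0,P}(\mu_m)^2$. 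A quantitative lower bound of the form $D_{\mathcal{K}_0,P}(\mu_m)^2\geq c\int\|x\|_2^2\,d\mu_m(x)-C'$ then contradicts non-tightness.

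The technically delicate aspect is controlling those off-diagonal terms, since $k_0(x,x')$ is in general sign-indefinite. The Gorham--Mackey IMQ proof uses a careful Fourier-side argument that is less clean here because the log inverse kernel lacks a tractable closed-form Fourier transform. A potentially cleaner alternative is to establish a pointwise comparison $D_{\mathcal{K}_0^{(2)},P}(\mu)\geq c\, D_{\mathcal{K}_0^{(1)},P}(\mu)$ (with $k_0^{(i)}$ denoting the Stein kernel built from $k_i$) and invoke Theorem 8 of \cite{Gorham2017} as a black box; the heavier tails of $k_2$ make such a comparison plausible but verifying it still requires nontrivial real-side estimates on $\phi$ and its derivatives on $[0,\infty)$.
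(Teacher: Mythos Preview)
Your case (A) sketch is in the right spirit and indeed the paper also begins by showing $\phi$ is completely monotone. But your case (B) contains a genuine gap: you never actually produce a working tightness argument. The heuristic that ``heavier tails prevent destructive cancellation'' does not yield the claimed inequality $D_{\mathcal{K}_0,P}(\mu_m)^2\geq c\int\|x\|_2^2\,d\mu_m(x)-C'$, and you acknowledge that both of your proposed routes (direct off-diagonal control, or a discrepancy comparison $D^{(2)}\geq cD^{(1)}$) remain open. Since $k_0$ is sign-indefinite, there is no reason to expect the diagonal coercivity alone to dominate the double integral, and the comparison inequality would require $k_0^{(2)}-ck_0^{(1)}$ to be positive definite, which you have not verified.

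The paper sidesteps the double-integral entirely by working on the \emph{dual side}: it exhibits an explicit $g^\circ\in\mathcal{K}^d$ such that $\mathcal{T}_Pg^\circ$ is coercive and bounded below. Since $|\int\mathcal{T}_Pg^\circ\,d\mu_m|\leq\|g^\circ\|_{\mathcal{K}^d}\,D_{\mathcal{K}_0,P}(\mu_m)$, coercivity of $\mathcal{T}_Pg^\circ$ forces tightness whenever the KSD stays bounded. Concretely, $g^\circ$ is a bandlimited smoothing of $g_j(x)=2\alpha_0 x_j(a_0^2+\|x\|_2^2)^{\alpha_0-1}$; the real work is showing $g^\circ$ lies in the log-inverse RKHS, which via the Fourier characterisation of $\mathcal{K}$ reduces to a \emph{lower bound} on $\hat\Phi(w)$ near $w=0$. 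The paper proves $\liminf_{r\to0^+}r^d(\alpha+\log(1+1/r^2))^{-\beta+1}\phi_\wedge(r)>0$, a delicate estimate obtained by exploiting the Bochner representation of $\Phi$ and an asymptotic expansion $(\alpha+\log(1+\lambda^2))(\phi(\lambda s)/\phi(\lambda)-1)\to2\beta\log s$. This is precisely the ``Fourier-side argument'' you dismissed as intractable; it is indeed the technical heart of the proof, and there is no known way to avoid it.
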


Our final theorem, proved in \secref{imq-score-implies-weak-convergence-proof}, guarantees distributional convergence control for the new IMQ score kernel \ref{enum:imq-score} under the additional assumption that $\log p$ is strictly concave.
\begin{theorem}[IMQ Score KSD Controls Convergence]\label{thm:imq-score-implies-weak-convergence}
Suppose $P \in \pset$ has strictly concave log density. 
Consider a Stein reproducing kernel $k_0 = \mc{T}_P\overline{\mathcal{T}_P}k_3$
with Langevin operator $\mc{T}_P$ and base kernel
$k_3(x,x') = (c^2 + \twonorm{\grad\log p(x)-\grad\log p(x')}^2)^{\beta}$ for $c>0$ and
$\beta\in (-1,0)$. If $D_{\mathcal{K}_0,P}\left(\mu_m \right) \to 0$, then $\mu_m \Rightarrow P$.
\end{theorem}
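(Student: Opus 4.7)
The plan is to adapt \citet[][Theorem 8]{Gorham2017}, which handles the IMQ base kernel $k_1$, by exploiting that $k_3(x,x') = k_1(T(x),T(x'))$ (with $\alpha = c^2$) is the pullback of the IMQ kernel under the score map $T(x) \defeq \grad\log p(x)$. The strategy is to transport Gorham-Mackey's RKHS-norm bounds on Stein-equation solutions from the IMQ setting to our $k_3$ setting via composition with $T$.

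First I would verify that $T$ is a homeomorphism of $\R^d$ onto itself. Strict concavity of $\log p$ gives strict monotonicity of $T$, hence injectivity; Lipschitzness of $\grad\log p$ (implicit in $P\in\pset$) gives continuity; distant dissipativity combined with Cauchy-Schwarz gives $\twonorm{T(x)}\to\infty$ as $\twonorm{x}\to\infty$, so $T$ is proper. A continuous proper injection $\R^d\to\R^d$ is a closed map, and by invariance of domain it is also open; hence $T(\R^d)$ is both open and closed, and connectedness of $\R^d$ forces $T$ to be surjective. Moreover, the Hessian $H(x)\defeq\Hess\log p(x) = \grad T(x)$ is uniformly bounded above in operator norm (by the Lipschitz constant of $\grad\log p$) and uniformly negative-definite outside a compact set (by distant dissipativity); a compactness argument on the bounded region then provides a uniform spectral lower bound on $-H$ everywhere.

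Next I would follow the standard Gorham-Mackey recipe for showing that KSD controls weak convergence. For any bounded $1$-Lipschitz $h:\R^d\to\R$, construct a Stein-equation solution $u_h\in\mc{K}^d$ of $\mc{T}_P u_h = h - \int h\,dP$ via the overdamped Langevin semigroup (well-defined for $P\in\pset$). The pullback $f\mapsto f\circ T$ is an isometric isomorphism from the IMQ RKHS $\mc{K}_1$ onto the $k_3$-RKHS $\mc{K}$, so Gorham-Mackey's norm bounds on IMQ Stein-equation solutions can be transferred once we account for the non-commutativity of $\mc{T}_P$ with the pullback: a chain-rule computation gives
\begin{equation*}
\mc{T}_P(v\circ T)(x) = \mathrm{tr}(H(x)\,J_v(T(x))) + T(x)\cdot v(T(x)),
\end{equation*}
which differs from $(\mc{T}_Q v)\circ T$ (with $Q\defeq T_\ast P$) by a $\grad\log\det H$-type correction. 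With $\|u_h\|_{\mc{K}^d}$ controlled uniformly in $h$, Cauchy-Schwarz then yields $\left|\int h\,d\mu_m - \int h\,dP\right| \leq \|u_h\|_{\mc{K}^d}\,D_{\mc{K}_0,P}(\mu_m)\to 0$, giving $\mu_m\Rightarrow P$ by the Portmanteau theorem.

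The main obstacle I expect is precisely the non-commutativity between $\mc{T}_P$ and the pullback $\cdot\circ T$: one cannot simply invoke Theorem 8 on the pushforward $Q$ and be done. Handling the $\grad\log\det H$ correction requires quantitative control of the third derivative of $\log p$ beyond what is explicitly assumed, so the argument will likely need to absorb this correction into the RKHS-norm estimate using distant dissipativity and strict concavity. A secondary technical issue is obtaining the uniform spectral lower bound on $-H$: strict concavity yields it pointwise, distant dissipativity yields it outside a compact set, and a continuity/compactness gluing argument is needed to close the gap.
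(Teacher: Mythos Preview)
Your pullback observation $k_3(x,x') = k_1(T(x),T(x'))$ with $T = \nabla\log p$ is correct and is also the paper's starting point. But the Stein-equation route you propose has a genuine gap that you yourself flag: the correction term involving $\nabla\log\det H$ requires control of third derivatives of $\log p$, which is not part of the hypothesis $P\in\mathcal{P}$, and you offer no concrete mechanism for ``absorbing'' it. Two further issues compound this. First, the uniform spectral lower bound on $-H$ does not follow from the stated assumptions: distant dissipativity in the $+C$ form yields no pointwise Hessian bound whatsoever (the $C$ term swamps the limit as $y\to x$), so your gluing argument has nothing to glue to outside the compact region. Second, the ``Gorham--Mackey norm bounds on IMQ Stein-equation solutions'' you intend to transfer do not exist in the form you describe --- Theorem~8 of \citet{Gorham2017} does not proceed by placing exact Langevin-semigroup Stein solutions in the IMQ RKHS with uniformly bounded norm.

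The paper sidesteps all of this by never transporting a Stein operator through the pullback. It instead invokes the approximation criterion from the proof of \citet[][Thm.~13]{Gorham2017}: it suffices to show (i) every $h\in C^1\cap C_c$ admits $h_\epsilon$ in the $k_3$-RKHS with $\sup_x|\mathcal{T}_P h(x)-\mathcal{T}_P h_\epsilon(x)|\le\epsilon$, and (ii) $D_{\mathcal{K}_0,P}(\mu_m)\to 0$ forces tightness. For (i) the paper proves a general Composition Kernel Approximation lemma: if $b$ is invertible, Lipschitz, and $\det\nabla b\neq 0$ everywhere, then the required approximation property transfers from the IMQ RKHS to the $k_b$-RKHS. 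The chain rule is applied only to the approximand $h_\epsilon$ (to control $\nabla(h-h_\epsilon)$ and $b\cdot(h-h_\epsilon)$), not to any Stein solution, so no $\log\det H$ term ever appears, and only the pointwise condition $\det\nabla b\neq 0$ --- immediate from strict concavity --- is needed rather than a uniform spectral bound. For (ii) the paper pulls back the Gorham--Mackey tightness witnesses $g_j(x)=2\alpha x_j(a^2+\twonorm{x}^2)^{\alpha-1}$ to obtain $g_{b,j}=g_j\circ b$ in the $k_3$-RKHS and computes $\mathcal{T}_P g_b$ directly; coercivity follows using only boundedness of $\nabla b$ and norm-coercivity of $b$.
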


\section{Conclusion} \label{sec: conclusion}

This paper proposed and studied Stein Points, extensible point sequences rooted in minimisation of a KSD, building on the recent theoretical work of \cite{Gorham2017}.
Although we focused on KSD to limit scope, our methods could in fact be applied to any computable Stein discrepancy, even those not based on reproducing kernels \citep[see, e.g.,][]{Gorham2015,Gorham2016}.
Stein Points provide an interesting counterpoint to other recent work focussing on point sequences \citep{Joseph2015,Joseph2017} and point sets \citep{Liu2016a,Liu2017}.
Moreover, when $X$ is a finite set $\{y_i\}_{i=1}^N$ (e.g., an inexpensive initial point set generated by MCMC), Stein Points provide a compact and convergent approximation to the optimally weighted probability measure $\sum_{i=1}^N w_i\delta_{y_i}$ with minimum KSD (see Section \ref{subsec: fixed set} for more details).

Theoretical results were provided which guarantee the asymptotic correctness of our methods.
However, we were only able to establish an $O(\sqrt{\log(n)/n})$ rate, which leaves a theoretical gap between the faster convergence that was sometimes empirically observed.
Relatedly, the $O(n^2)$ computational cost could be reduced to $O(n)$ by using finite-dimensional kernels \citep[see, e.g.,][]{Jitkrittum2017}, but the associated distributional convergence control results must first be developed.

Our experiments were relatively comprehensive, but we did not consider other Stein operators, nor higher-dimensional or non-Euclidean manifolds $X$.
Related methods not considered in this work include those based on optimal transport \cite{Marzouk2016} and self-avoiding particle-based samplers \citep{Robert2003}.
The comparison against these methods is left for future work.

\section*{Acknowledgements}
WYC was supported by the ARC Centre of Excellence for Mathematical and Statistical Frontiers.
FXB was supported by EPSRC [EP/L016710/1, EP/R018413/1].
CJO was supported by the Lloyd's Register Foundation programme on data-centric engineering at the Alan Turing Institute, UK.
This material was based upon work partially supported by the National Science Foundation under Grant DMS-1127914 to the Statistical and Applied Mathematical Sciences Institute. Any opinions, findings, and conclusions or recommendations expressed in this material are those of the author(s) and do not necessarily reflect the views of the National Science Foundation.
\nocite{langley00}

\bibliography{bibliography}
\bibliographystyle{icml2018}

\newpage
\onecolumn
\appendix
{\bf \Large Supplement}

This electronic supplement is organised as follows:
In Section \ref{sup: proof sec} proofs for the theoretical results in the main text are provided.
In Section \ref{sup: benchmark} we provide details for the two existing methods (MED, SVGD) that formed our experimental benchmark.
Then, in Section \ref{sup: additional numerics}, we provide additional numerical results that elaborate on those reported in the main text.

\paragraph{Code}

Code to reproduce these experiments is available from:
\begin{center}
\url{github.com/wilson-ye-chen/stein_points}
\end{center}

\section{Proof of Theoretical Results in the Main Text} \label{sup: proof sec}

\subsection{Proofs of \thmsref{herding-convergence} and \thmssref{greedy-convergence}: Stein Herding and Stein Greedy Convergence}
We will show that both \thmref{herding-convergence} and \thmref{greedy-convergence} follow from the following unified Stein Point convergence result, proved in \secref{stein-point-convergence-proof}.
\begin{theorem}[Stein Point Convergence]
\label{thm:stein-point-convergence}
Suppose $k_0$ with $k_{0,P} = 0$ is a $P$-sub-exponential reproducing kernel.
Then there exist constants $c_1, c_2 > 0$ depending only on $k$ and $P$ 
such that any point sequence $\{x_i\}_{i=1}^n$ satisfying 
\begin{align*}
\frac{k_0(x_j,x_j)}{2}&+\displaystyle\sum_{i=1}^{j-1} k_0(x_i, x_j) 
	\leq \frac{\delta}{2} + \frac{S_j^2}{2}
	+ \displaystyle\min_{x\in X: k_0(x,x)\leq S_j^2}\displaystyle\sum_{i=1}^{j-1} k_0(x_i, x) 
\end{align*}
with $S_j \in [\sqrt{2\log(j)/c_2}, \sqrt{2\log(n)/c_2}]$ for each $1 \leq j \leq n$ and $\delta \geq 0$ also satisfies 
$$\textstyle
D_{\mc{K}_0,P}(\{x_i\}_{i=1}^n) 
	\leq e^{\pi/2} \sqrt{\frac{2\log(n)}{c_2n} +\frac{c_1}{n} + \frac{\delta}{n}}.
$$
\end{theorem}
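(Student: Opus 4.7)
My proof plan begins by reducing to a one-step recursion. Let $f_n := \frac{1}{n}\sum_{i=1}^n k_0(x_i,\cdot) \in \mathcal{K}_0$; since $k_{0,P}=0$ we have $D_{\mathcal{K}_0,P}(\{x_i\}_{i=1}^n) = \|f_n\|_{\mathcal{K}_0}$. The reproducing property yields
\[
n^2 \|f_n\|_{\mathcal{K}_0}^2 = (n-1)^2 \|f_{n-1}\|_{\mathcal{K}_0}^2 + k_0(x_n,x_n) + 2\sum_{i=1}^{n-1} k_0(x_i,x_n),
\]
and the hypothesis bounds the last two summands by $\delta + S_n^2 + 2 M_n$, where $M_n := \min_{x \in A_n}\sum_{i=1}^{n-1} k_0(x_i,x)$ with $A_n := \{x \in X : k_0(x,x) \leq S_n^2\}$. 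The aim is to express $M_n$ in a form controlled by $\|f_{n-1}\|_{\mathcal{K}_0}$ so that a recursion on $\eta_n := n\|f_n\|_{\mathcal{K}_0}$ can be closed.

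Next I would replace the minimum by a conditional expectation under $P$: $M_n \leq \frac{1}{P(A_n)}\int_{A_n} \sum_{i=1}^{n-1} k_0(x_i,z)\, dP(z)$. Since $k_{0,P}=0$, the integrand over $A_n$ equals minus the integrand over $A_n^c$. Writing $\sum_{i=1}^{n-1} k_0(x_i,z) = (n-1)\langle f_{n-1}, k_0(z,\cdot)\rangle_{\mathcal{K}_0}$ and applying Cauchy-Schwarz with $\|k_0(z,\cdot)\|_{\mathcal{K}_0}^2 = k_0(z,z)$ gives
\[
M_n \leq \frac{(n-1)\|f_{n-1}\|_{\mathcal{K}_0}}{P(A_n)} \, \mathbb{E}_P\!\left[\sqrt{k_0(Z,Z)}\,\mathbf{1}\{k_0(Z,Z) > S_n^2\}\right].
\]
By the layer-cake identity, the $P$-sub-exponential property of $k_0$, and the Gaussian-type tail $\int_a^\infty e^{-c_2 t^2}\,dt \leq \frac{1}{2 c_2 a}e^{-c_2 a^2}$, this tail expectation is at most $c_1\bigl(S_n + \frac{1}{2 c_2 S_n}\bigr) e^{-c_2 S_n^2}$. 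The constraint $S_n \geq \sqrt{2\log(n)/c_2}$ forces $e^{-c_2 S_n^2} \leq 1/n^2$, making the expectation of order $\sqrt{\log n}/n^2$, while the same tail estimate gives $P(A_n) \geq 1 - c_1/n^2$.

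Defining $\eta_n := n\|f_n\|_{\mathcal{K}_0}$, the previous two steps assemble into a recursion of the form
\[
\eta_n^2 \leq \eta_{n-1}^2 + \delta + S_n^2 + \varepsilon_n \eta_{n-1}, \qquad \varepsilon_n = O(\sqrt{\log n}/n^2).
\]
Telescoping the deterministic portion gives $\sum_{k=2}^n(\delta + S_k^2) \leq n\delta + 2n\log(n)/c_2$, supplying the $(2\log n)/(c_2 n) + \delta/n$ contribution in the final bound; the remaining $c_1/n$ term comes from bounding the boundary iterate $\eta_1^2 = k_0(x_1,x_1)$ via the sub-exponential moment $\mathbb{E}_P[k_0(Z,Z)] \leq c_1/c_2$ and from the $1/P(A_n)$ prefactor at small $n$. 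The cross term $\sum_k \varepsilon_k \eta_{k-1}$ is absorbed via a Gronwall-style bootstrap: the inductive hypothesis $\eta_k^2 \leq C\,k\log n$ yields $\varepsilon_{k+1}\eta_k = O((\log n)/k^{3/2})$ whose partial sums are uniformly bounded by a $\zeta(3/2)$-type constant, and iterating the multiplicative closure $\prod_k(1 + O(1/k^{3/2}))$ produces the $e^{\pi/2}$ factor. Taking square roots yields the stated inequality. The principal obstacle is handling the early iterates, where $S_j$ is small, $P(A_j)$ may not be close to $1$, and the tail estimate is weak; these must be treated by a separate elementary bound and absorbed into the constants $c_1, c_2$, and arranging these so the bootstrap delivers the sharp constant $e^{\pi/2}$ uniformly in $n$ is where most of the technical work lies.
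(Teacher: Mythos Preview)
Your approach mirrors the paper's almost exactly. Both set up the recursion on $\eta_n^2 := n^2\|f_n\|_{\mathcal{K}_0}^2$, bound the minimum over the truncated set $A_n$ by a $P$-average, use $k_{0,P}=0$ to flip that average to the complementary tail, and control the tail via the sub-exponential hypothesis. The paper phrases the averaging step through the minimum-norm element $h_n$ of the convex hull $\mathcal{M}_n = \mathrm{conv}\{k_0(x,\cdot):x\in A_n\}$ and then bounds $\|h_n\|_{\mathcal{K}_0}$ by the norm of the truncated embedding $k_n^- = \int_{A_n}k_0(z,\cdot)\,dP(z) = -k_n^+$; your conditional-expectation bound is the same object up to the $1/P(A_n)$ normalisation you carry explicitly. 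For the tail itself the paper bounds $\mathbb{E}[k_0(Z,Z)\mathbf{1}\{k_0(Z,Z)>S_i^2\}]$ by introducing the size-biased variable $Y^*$ and arguing it is again sub-exponential, whereas you bound $\mathbb{E}[\sqrt{k_0(Z,Z)}\mathbf{1}\{\cdot\}]$ directly by layer-cake; either works.

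The one substantive divergence is how the cross term is closed, and here your sketch does not quite reach the stated constant. The paper applies the specific AM--GM splitting
\[
2(n-1)\langle h_n,f_{n-1}\rangle_{\mathcal{K}_0} \;\le\; n^2\|h_n\|_{\mathcal{K}_0}^2 + \tfrac{(n-1)^2}{n^2}\epsilon_{n-1}^2,
\]
which produces the multiplicative recursion with factor $(1+1/n^2)$; the unrolled product is then bounded via $\sum_{j}\log(1+1/j^2)\le\int_0^\infty\log(1+1/x^2)\,dx=\pi$, yielding exactly $e^{\pi/2}$. Your Gronwall/bootstrap on the linear term $\varepsilon_n\eta_{n-1}$ with $\varepsilon_n=O(\sqrt{\log n}/n^2)$ correctly establishes the rate, but the product $\prod_k(1+O(1/k^{3/2}))$ you cite is not $e^{\pi}$. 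This matters because the coefficient of $\delta/n$ in the final bound is tied to this multiplicative constant and cannot be absorbed into $c_1,c_2$ (which multiply only the $\log n$ and constant terms). To recover the exact statement, apply the same AM--GM splitting to your cross term; it converts your recursion into the paper's, and the early-iterate concerns you flag then disappear since no $1/P(A_j)$ factor survives in the additive remainder $n^2\|h_n\|^2$.
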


\subsubsection{Proof of \thmref{herding-convergence}: Stein Herding Convergence}
\label{sec:herding-convergence-proof}
Instantiate the constants $c_1,c_2>0$ from \thmref{stein-point-convergence}, and 
consider any point sequence $\{x_i\}_{i=1}^n$ satisfying 
\[
\sum_{i=1}^{j-1} k_0(x_i, x_j) \leq \frac{\delta}{2} 
	+ \displaystyle\min_{x\in X: k_0(x,x)\leq R_j^2}
	\textstyle\sum_{i=1}^{j-1} k_0(x_i, x) 
\]
with ${k_0(x_j,x_j)} \leq R_j^2\in[{2\log(j)/c_2},{2\log(n)/c_2}]$.
We immediately have
\begin{align*}
\frac{k_0(x_j,x_j)}{2}&+\displaystyle\sum_{i=1}^{j-1} k_0(x_i, x_j) 
	\leq \frac{\delta}{2} + \frac{R_j^2}{2}
	+ \displaystyle\min_{x\in X: k_0(x,x)\leq R_j^2}\displaystyle\sum_{i=1}^{j-1} k_0(x_i, x)
\end{align*}
so the desired conclusion follows from \thmref{stein-point-convergence}.

\subsubsection{Proof of \thmref{greedy-convergence}: Stein Greedy Convergence}
\label{sec:greedy-convergence-proof}
Instantiate the constants $c_1,c_2>0$ from \thmref{stein-point-convergence}, and 
consider any point sequence $\{x_i\}_{i=1}^n$ satisfying 
\begin{align*}
\frac{k_0(x_j,x_j)}{2}&+\displaystyle\sum_{i=1}^{j-1} k_0(x_i, x_j) 
	\leq \frac{\delta}{2} + 
	\displaystyle\min_{x\in X: k_0(x,x)\leq R_j^2}\frac{k_0(x,x)}{2}+\displaystyle\sum_{i=1}^{j-1} k_0(x_i, x) 
\end{align*}
with $S_j = \sqrt{2\log(j)/c_2}\leq R_j \leq\infty$ for each $1 \leq j \leq n$.
We immediately have
\begin{align*}
\frac{k_0(x_j,x_j)}{2}&+\displaystyle\sum_{i=1}^{j-1} k_0(x_i, x_j) 
	\leq \frac{\delta}{2} + 
	\displaystyle\min_{x\in X: k_0(x,x)\leq S_j^2}\frac{k_0(x,x)}{2}+\displaystyle\sum_{i=1}^{j-1} k_0(x_i, x) 
	\leq \frac{\delta}{2} + \frac{S_j^2}{2}
	+ \displaystyle\min_{x\in X: k_0(x,x)\leq S_j^2}\displaystyle\sum_{i=1}^{j-1} k_0(x_i, x),
\end{align*}
so the desired conclusion follows from \thmref{stein-point-convergence}.

\subsubsection{Proof of \thmref{stein-point-convergence}: Stein Point Convergence}
\label{sec:stein-point-convergence-proof}
Our high-level strategy is to show that, when $k_0$ is $P$-sub-exponential, 
optimizing over a suitably truncated search space on each step is sufficient to optimize the discrepancy globally.
To obtain an explicit rate of convergence, we adapt the greedy approximation error analysis of \citet{Jones1992}, which applies to uniformly bounded kernels.
We begin by fixing any sequence of truncation levels $(S_j)_{j=1}^\infty$ with each $S_j \in [0,\infty)$, defining the truncation sets $B_j = \{ x\in X: k_0(x,x) \leq S_j^2\}$,
and letting $\mathcal{M}_j$ denote the convex hull
of $\{ k_0(x,\cdot) \}_{x\in B_j}$.
Next we identify a truncation-optimal $h_j \in \argmin_{f \in \mathcal{M}_j} J(f)$.
Now, fix any point sequence $\{x_i\}_{i=1}^n$ satisfying 
\begin{align*}
\frac{k_0(x_j,x_j)}{2}&+\displaystyle\sum_{i=1}^{j-1} k_0(x_i, x_j) 
	\leq \frac{\delta}{2} + \frac{S_j^2}{2}
	+ \displaystyle\min_{x\in X: k_0(x,x)\leq S_j^2}\displaystyle\sum_{i=1}^{j-1} k_0(x_i, x)
\end{align*}
for some approximation level $\delta \geq 0$ and each $1 \leq j \leq n$.
In the remainder, we will recursively bound the discrepancy of this point sequence in terms of each $S_j$ and $\norm{h_j}_{\kset_0}$, bound each $\norm{h_j}_{\kset_0}$ in terms of $S_j$ using the $P$-sub-exponential tails of $k_0$, and show that an appropriate setting of each $S_j$ delivers the advertised claim.

\paragraph{Bounding discrepancy}
For each $j$, let $f_j = \frac{1}{j}\sum_{i=1}^j k_0(x_i,\cdot)$ and $\eps_j = D_{\mc{K}_0,P}(\{x_i\}_{i=1}^j) = \| f_j\|_{\mc{K}_0}$.
By Cauchy-Schwarz and the arithmetic-geometric mean inequality, we have the estimates
\begin{align*}
n^2 \eps_n^2 - \delta 
	&= k_0(x_n,x_n) + (n-1)^2 \eps_{n-1}^2 + 2(n-1)f_{n-1}(x_n) - \delta \\
	&\leq S_n^2 + (n-1)^2 \eps_{n-1}^2 + 2(n-1)\min_{x \in B_n} f_{n-1}(x)\\
	&= S_n^2 + (n-1)^2 \eps_{n-1}^2 + 2(n-1)\inf_{f \in \mc{M}_n} \inner{f}{f_{n-1}}_{\mc{K}_0}\\
	&\leq S_n^2 + (n-1)^2 \eps_{n-1}^2 + 2(n-1)\inner{h_n}{f_{n-1}}_{\mc{K}_0}\\
	&\leq S_n^2 + (n-1)^2 \eps_{n-1}^2 + n^2\norm{h_n}_{\mc{K}_0}^2 + \frac{(n-1)^2}{n^2}\eps_{n-1}^2.
\end{align*}
Unrolling the recursion, we obtain
\begin{align*}
n^2 \eps_n^2 
	&\leq \sum_{i=0}^{n-1} (S_{n-i}^2 + \delta + \norm{h_{n-i}}_{\mc{K}_0}^2 (n-i)^2) \prod_{j=1}^i (1+1/(n-j+1)^2).
\end{align*}
Moreover, the products in this expression are uniformly bounded in $i$ as
\[
\log(\prod_{j=1}^i (1+1/(n-j+1)^2))
	= \sum_{j=1}^i\log( (1+1/(n-j+1)^2))
	\leq \int_0^\infty \log(1+1/x^2)\dx = \pi.
\]
Therefore, 
\begin{align*}
n^2 \eps_n^2 
	& \leq e^\pi \sum_{i=1}^{n}S_i^2 + \delta + i^2\norm{h_i}_{\mc{K}_0}^2.
\end{align*}

\paragraph{Bounding $\norm{h_i}_{\kset_0}$}
To bound each $\norm{h_i}_{\kset_0}$, we consider the truncated mean embeddings
\begin{align*}
k_{i}^- &:= \int k_0(x,\cdot) \indic{k_0(x,x) \leq S_i^2} \mathrm{d}P(x) \qtext{and} \\
k_{i}^+ &:= \int k_0(x,\cdot) \indic{k_0(x,x) > S_i^2} \mathrm{d}P(x) = k_P - k_i^-.
\end{align*}
Since $k_P = 0$, we have $\norm{k_i^+}_{\mc{K}_0} = \norm{k_i^-}_{\mc{K}_0}$.
Moreover, since $k_i^- \in \mc{M}_i$, we deduce that
\begin{align*}
\norm{h_i}_{\mc{K}_0}^2
	&\leq \norm{k_i^-}_{\mc{K}_0}^2 = \norm{k_i^+}_{\mc{K}_0}^2\\
	&= \iint k_0(x,y) \indic{k_0(x,x) > S_i^2} \mathrm{d}P(x) \indic{k_0(y,y) > S_i^2} \mathrm{d}P(y) \\
	&\leq \left(\int \sqrt{k_0(x,x)}\indic{k_0(x,x) > S_i^2} \mathrm{d}P(x) \right)^2 \\
	&\leq \int k_0(x,x)\indic{k_0(x,x) > S_i^2} \mathrm{d}P(x)
\end{align*}
where the final two inequalities follow by Cauchy-Schwarz and Jensen's inequality.

Let $Y = k_0(Z,Z)$ for $Z \sim P$.
We will bound the tail expectation in the final display by considering the biased random variable $Y^* = k_0(Z^*,Z^*)$
for $Z^*$ with density $\rho(z^*) = \frac{k_0(z^*,z^*) p(z^*)}{\E[Y]}$.
By \citep[Thm. 2.2]{Wainwright17}, since $Y$ is sub-exponential, there exists $c_0> 0$ such that
$\E[e^{\lambda Y}] < \infty$ for all $|\lambda| \leq c_0$.
For any $\lambda \neq 0$ with $|\lambda| \leq c_0/2$, we have, by the relation $x \leq e^x$,
\[
\E[e^{\lambda Y^*}] 
	= \E[e^{\lambda k_0(Z^*,Z^*)}] 
	= \frac{\E[k_0(Z,Z)e^{\lambda k_0(Z,Z)}]}{\E[Y]}
	= \frac{\E[\lambda Y e^{\lambda Y}] }{\lambda \E[Y]}
	\leq \frac{ \E[e^{2\lambda Y}] }{\lambda \E[Y]}
	< \infty.
\]
Hence, by \citep[Thm. 2.2]{Wainwright17}, $Y^*$ is also sub-exponential and satisfies, for some ${\tilde{c}_1}, c_2 > 0$,
$\Parg{ Y^* \geq t } \leq {\tilde{c}_1} e^{-c_2t}$ for all $t > 0$.

Applying this finding to the bounding of $h_i$, we obtain
\[
\norm{h_i}_{\mc{K}_0}^2
	\leq \int k_0(x,x)\indic{k_0(x,x) > S_i^2} \mathrm{d}P(x)
	= {\mathbb{E}[Y]} \int \indic{k_0(x,x) > S_i^2} \rho(x)\dx
	= {\mathbb{E}[Y]} \Parg{Y^* \geq S_i^2}
	\leq c_1 e^{-c_2S_i^2}
\]
{where $c_1 = \tilde{c}_1 \mathbb{E}[Y]$.}
Hence
\begin{align*}
D_{\mc{K}_0,P}(\{x_i\}_{i=1}^n) 
	&\leq e^{\pi/2} \sqrt{\frac{1}{n^2}\sum_{i=1}^{n}S_i^2 + \delta + i^2c_1 e^{-c_2S_i^2}}.
\end{align*}

\paragraph{Setting each $S_i$}
By choosing $S_i \in [\sqrt{2\log(i)/c_2},\sqrt{2\log(n)/c_2}]$ for each $i$ we obtain 
\begin{align*}
D_{\mc{K}_0,P}(\{x_i\}_{i=1}^n) 
	&\leq e^{\pi/2} \sqrt{\frac{1}{n^2}\sum_{i=1}^{n}\frac{2\log(n)}{c_2} + \delta + c_1}
	\leq e^{\pi/2} \sqrt{\frac{2\log(n)}{c_2 n} + \frac{\delta}{n} + \frac{c_1}{n}}.
\end{align*}

\subsection{Proof of \thmref{log-inverse-implies-weak-convergence}: Log Inverse KSD Controls Convergence}
Fix any $\alpha > 0$ and $\beta < 0$.
Our proof will leverage \citep[][Thm. 7]{Gorham2017}. This requires demonstrating two
separate properties for the log inverse kernel: first, the log inverse function $\Phi(z) \defeq (\alpha + \log(1 +
\twonorm{z}^2))^{\beta}$ has a nonvanishing
generalized Fourier transform, and second, whenever
$D_{\mathcal{K}_0,P}\left(\mu_m \right)\to 0$, the measures $\mu_m$ are
uniformly tight. We will repeatedly use the notation $\gamma(r) \defeq (\alpha +
\log(1 + r))^{\beta}$ and $\phi(r) \defeq \gamma(r^2)$ throughout the
proof. Moveover, we will use $\hat{f}$ to denote the
(generalized) Fourier transform of a function $f$, and $V_d$ will represent
the volume of the unit Euclidean ball in $d$ dimensions. Finally, we write $f^{(m)}$ for the $m$-th derivative of any sufficiently differentiable function $f:\reals\to\reals$.

To demonstrate the first property, we begin with the following lemma.

\begin{lemma}[Log Inverse Function Is Completely Monotone]\label{lem:log-inverse-is-completely-monotone}
Fix any $\alpha > 0$ and $\beta < 0$.
The function $\gamma(r) \defeq (\alpha +\log(1 + r))^{\beta}$ is \emph{completely monotone}, i.e.,
$\gamma\in C^{\infty}$ and $(-1)^m \gamma^{(m)}(r) \ge 0$ for all
$m\in\mbb{N}_0$ and all $r \ge 0$,
and hence
the function $k_2:\mbb{R}^d\times\mbb{R}^d\to\mbb{R}$ given by
$k_2(x,x') \defeq \gamma(\twonorm{x-x'}^2)$ is a
kernel function for all dimensions $d\in\mbb{N}$.
\end{lemma}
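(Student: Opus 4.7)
The plan is to establish complete monotonicity of $\gamma$ by a composition argument and then to invoke Schoenberg's characterization of radial positive definite functions to obtain the kernel claim. More precisely, I would write $\gamma = g \circ h$, where $h(r) = \alpha + \log(1+r)$ and $g(u) = u^{\beta}$, and then show that $h$ is a Bernstein function (nonnegative on $[0,\infty)$ with completely monotone derivative) and $g$ is completely monotone on $(0,\infty)$. The classical closure result that the composition of a completely monotone function with a Bernstein function is again completely monotone (see, e.g., Schilling--Song--Vondra{\v c}ek) then yields the first half of the lemma.

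The two ingredients are elementary. For $h$, nonnegativity on $[0,\infty)$ follows from $\alpha > 0$ and $\log(1+r) \geq 0$, while $h'(r) = 1/(1+r)$ has $m$-th derivative $(-1)^m m!/(1+r)^{m+1}$, so $(-1)^m (h')^{(m)}(r) \geq 0$; hence $h'$ is completely monotone. For $g(u) = u^\beta$ with $\beta < 0$, the $m$-th derivative equals $\beta(\beta-1)\cdots(\beta-m+1)\, u^{\beta-m}$; each factor $\beta - j$ with $0 \le j \le m-1$ is negative, so the product has sign $(-1)^m$, giving $(-1)^m g^{(m)}(u) \geq 0$ on $(0,\infty)$. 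Since $h(r) \geq \alpha > 0$ for all $r \geq 0$, the composition $g \circ h$ is well defined and lies in $C^\infty([0,\infty))$, so the closure rule applies and $\gamma$ is completely monotone on $[0,\infty)$.

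Given complete monotonicity of $\gamma$, the kernel statement is immediate from Schoenberg's theorem: a function $\phi:[0,\infty)\to\mathbb{R}$ is completely monotone if and only if $(x,x') \mapsto \phi(\|x-x'\|_2^2)$ is positive definite on $\mathbb{R}^d$ for every $d \in \mathbb{N}$. Thus $k_2(x,x') = \gamma(\|x-x'\|_2^2)$ is a valid kernel in all dimensions.

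The only real obstacle is justifying the composition rule; if one prefers to avoid citing it, a self-contained alternative is to proceed via Bernstein--Widder. Writing $u^\beta = \Gamma(-\beta)^{-1} \int_0^\infty t^{-\beta-1} e^{-tu}\,dt$ and $(1+r)^{-t} = \Gamma(t)^{-1}\int_0^\infty s^{t-1} e^{-s} e^{-rs}\,ds$, one obtains, after Fubini, a representation $\gamma(r) = \int_0^\infty e^{-rs}\,d\mu(s)$ for an explicit nonnegative measure $\mu$, from which complete monotonicity follows directly and Schoenberg's theorem again finishes the argument.
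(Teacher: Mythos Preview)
Your argument is correct and conceptually cleaner than the paper's. The paper proves complete monotonicity by a direct induction on $m$, establishing the explicit formula
\[
\gamma^{(m)}(r) = (-1)^{m} \sum_{l=1}^{m} c_{l,m}\, (\alpha + \log(1 + r))^{\beta-l}(1 + r)^{-m}
\]
with positive coefficients $c_{l,m}$, and then cites Schoenberg's theorem (Wendland, Thm.~7.13) for the kernel claim. Your composition argument ($\gamma = g\circ h$ with $g$ completely monotone and $h$ Bernstein) reaches the same conclusion more efficiently, and your Bernstein--Widder alternative is a nice self-contained fallback.

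One trade-off worth noting: the paper's explicit formula for $\gamma^{(m)}$ is not just for this lemma's sake. It is invoked again in the proof of the next lemma (that the log inverse function has nonvanishing generalized Fourier transform), where the asymptotic $|\gamma^{(u)}(r)| = O(r^{-u}\log^{\beta-1}(r))$ is read off directly from the inductive form to show $\Delta^m\Phi \in L^1(\mathbb{R}^d)$. Your approach proves the present lemma, but if you were to continue with the paper's subsequent arguments you would still need to extract that derivative asymptotic separately; the paper's less elegant induction buys that for free.
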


\begin{proof}
By \citep[Theorem  7.13]{Wendland2004} we know that $\Phi$ is positive semidefinite for
all dimensions $d\in\mbb{N}$ if and only if $\gamma$ is completely monotone. Thus it remains to show that $\gamma$ is
completely monotone.

Since $\alpha > 0$, $\gamma(r) > 0$ for all $r \ge 0$. To verify $(-1)^{m}\gamma^{(m)}(r)\ge 0$ for all $m\ge
1$, we will proceed by induction. Let us suppose that for some $m\ge 1$,
\begin{align}\label{eqn:inverse-log-radial-derivs}
\gamma^{(m)}(r) = (-1)^{m} \sum_{l=1}^{m} c_{l,m} (\alpha + \log(1 + r))^{\beta-l}
(1 + r)^{-m}
\end{align}
where each $c_{l,m}\in\mbb{R}$ is positive. Taking another derivative
yields
\begin{align*}
  \gamma^{(m+1)}(r)
  &= (-1)^{m+1} \sum_{l=1}^{m+1} c_{l,m+1} (\alpha + \log(1 + r))^{\beta-l} (1 + r)^{-m-1},
\end{align*}
where $c_{1, m+1} \defeq m\, c_{1,m}$, $c_{l,m+1} \defeq m\, c_{l, m} +
(l-\beta-1)\, c_{l-1,m}$ for $l > 1$ and $c_{l,m}\defeq 0$ for all $l > m$,
completing the induction step.

As for the base case, notice $\gamma'(r) = \beta(\alpha + \log(1 + r))^{\beta-1} (1 +
r)^{-1}$, which establishes the identity for $l = 1$ by setting
$c_{1,1}\defeq -\beta$. The conclusion of this
proof by induction implies $(-1)^m \gamma^{(m)}(r)\ge 0$ for all $m$ and all
$r \ge 0$. By \eqnref{inverse-log-radial-derivs}, $\gamma\in C^{\infty}$,
establishing the lemma.
\end{proof}

Knowing that $\gamma$ is a completely monotone function, we can now demonstrate
$\hat{\Phi}$ has a nonvanishing generalized Fourier transform.

\begin{lemma}[Log Inverse Function Has Nonvanishing GFT]\label{lem:log-inverse-nonvanishing-gft}
Consider the function $\Phi:\reals^d\to\reals$ given by $\Phi(z) = (\alpha + \log(1 +
\twonorm{z}^2))^{\beta}$ for some $\alpha > 0$ and $\beta < 0$. Its
generalized Fourier transform $\hat{\Phi}(w)$ is radial, nonvanishing, and
continuous for $w\neq 0$. Moreover, $\hat{\Phi}(w)\to 0$ as $\twonorm{w}\to\infty$.
\end{lemma}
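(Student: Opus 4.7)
The plan is to leverage the complete monotonicity of $\gamma$ established in the previous lemma by invoking the Bernstein--Widder theorem to express $\Phi$ as a superposition of Gaussians. Taking Fourier transforms term-by-term then reduces the claim to elementary properties of Gaussian integrals, for which radiality, positivity, continuity, and decay are all transparent.

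Concretely, since $\gamma$ is completely monotone on $[0,\infty)$ and $\gamma(r)\to 0$ as $r\to\infty$ (using $\beta<0$), Bernstein--Widder supplies a finite positive Borel measure $\mu$ on $(0,\infty)$ such that
\[
\gamma(r) \;=\; \int_0^\infty e^{-rt}\,d\mu(t), \qquad r\geq 0.
\]
Setting $r = \twonorm{z}^2$ gives $\Phi(z) = \int_0^\infty e^{-t\twonorm{z}^2}\,d\mu(t)$, which exhibits $\Phi$ as a mixture of Gaussian kernels. The next step is to commute the generalized Fourier transform with this integral, yielding
\[
\hat\Phi(w) \;=\; \pi^{d/2}\int_0^\infty t^{-d/2} e^{-\twonorm{w}^2/(4t)}\,d\mu(t), \qquad w\neq 0,
\]
where I have used the standard identity $\widehat{e^{-t\twonorm{\cdot}^2}}(w) = (\pi/t)^{d/2} e^{-\twonorm{w}^2/(4t)}$.

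From this representation each claim follows directly. Radiality is immediate because the integrand depends on $w$ only through $\twonorm{w}$. Strict positivity for $w\neq 0$ holds because the integrand is pointwise positive for every $t>0$ and $\mu$ is not the zero measure (as $\gamma$ is nonconstant). Continuity on $\reals^d\setminus\{0\}$ and decay as $\twonorm{w}\to\infty$ both follow from dominated convergence: on any region $\twonorm{w}\geq\rho>0$ the integrand is dominated by $t\mapsto t^{-d/2}e^{-\rho^2/(4t)}$, which is bounded on $(0,\infty)$ and hence $\mu$-integrable since $\mu$ is finite; letting $\twonorm{w}\to\infty$, the integrand tends to $0$ pointwise for each $t>0$, yielding the decay.

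The main obstacle is the rigorous justification of exchanging the generalized Fourier transform with the Bernstein integral, since $\Phi$ has only logarithmically slow decay and is not in $L^1(\reals^d)$. I would handle this by pairing against an arbitrary Schwartz test function $\psi$: using $\int\hat\Phi(w)\psi(w)\,dw = \int\Phi(z)\hat\psi(z)\,dz$ together with the rapid decay of $\hat\psi$, the iterated integral $\int_0^\infty\int_{\reals^d} e^{-t\twonorm{z}^2}\hat\psi(z)\,dz\,d\mu(t)$ is absolutely convergent, so Fubini's theorem applies and identifies $\hat\Phi$ with the claimed integral in the sense of tempered distributions, and hence pointwise on $\reals^d\setminus\{0\}$ where the right-hand side is already continuous.
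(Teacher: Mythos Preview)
Your argument is correct and takes a cleaner route than the paper. The paper does not exploit the Bernstein--Widder representation at this stage; instead it shows by induction that the iterated Laplacian $\Delta^m\Phi$ lies in $L^1(\reals^d)$ once $2m\ge d$, observes that $\Delta^m\Phi$ is positive definite (being a positive combination of terms of the form $\twonorm{z}^{2v}\gamma^{(u)}(\twonorm{z}^2)$ with $\gamma$ completely monotone), invokes Wendland's Bochner-type theorems to conclude $\widehat{\Delta^m\Phi}$ is continuous, radial and nonvanishing, and then recovers $\hat\Phi(w)=\twonorm{w}^{-2m}\widehat{\Delta^m\Phi}(w)$ for $w\ne 0$. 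Your approach bypasses the Laplacian bookkeeping entirely and yields the explicit formula $\hat\Phi(w)=\pi^{d/2}\int_0^\infty t^{-d/2}e^{-\twonorm{w}^2/(4t)}\,d\mu(t)$, from which all four properties are read off directly; the Fubini justification you sketch goes through since $\Phi$ is bounded and $\mu$ is finite. As a bonus, your representation makes $r\mapsto\phi_\wedge(r)$ manifestly strictly decreasing, which the paper establishes separately (and with considerably more effort) at the start of the next lemma on the GFT lower bound.
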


\begin{proof}
We will first use induction to prove an intermediate result
that states for any $m\in\mbb{N}_0$,
\begin{align}\label{eqn:lap-expansion}
\Lap^m \Phi(z) = \sum_{(u,v)\in S_m} \tau_{u,v} \twonorm{z}^{2v} \gamma^{(u)}(\twonorm{z}^2)
\end{align}
where $\tau_{u,v} > 0$ are positive reals, $S_m = \{(u, v)\in\mbb{N}_0^2
\,\mid\, v\le u - m, u\le 2m\}$ and $\gamma(r)\defeq (\alpha + \log(1 + r))^{\beta}$.

Note for the base case $m=0$, the claim above for $\Lap^0 \Phi = \Phi$
clearly holds. Now suppose it holds from some $m\in\mbb{N}_0$.
If $A:\mbb{R}^d\to\mbb{R}$ is a function that can decomposed as
$A(z) \defeq f(\twonorm{z}^2)\,g(\twonorm{z}^2)$
where $f,g \in C^{\infty}([0,\infty))$, then we have
\begin{align}\label{eqn:laplacian-product-rule}
\Lap A(z)
  &= \left [2d g'(\twonorm{z}^2) + 4\twonorm{z}^2 g''(\twonorm{z}^2)\right ]
  f(\twonorm{z}^2) +
  \left [2d g(\twonorm{z}^2) + 4\twonorm{z}^2 g'(\twonorm{z}^2)\right ]
  f'(\twonorm{z}^2) +
  4\twonorm{z}^2 g(\twonorm{z}^2) f''(\twonorm{z}^2).
\end{align}
Consider each term in the decomposition of $\Lap^m \Phi(z)$ from the induction
hypothesis. If we let $g(r) = r^{v}$ and $f(r) = \phi^{(u)}(r)$, we see that
each term from \eqnref{laplacian-product-rule} is of the form $\tau_{u,v}'
\twonorm{z}^{2v'} \phi^{(u')}(\twonorm{z}^2)$ where the values for $(u',
v')$ are $(u, v-1), (u, v-1), (u+1, v), (u+1, v), (u+2, v+1)$
respectively. Notice that when $v=0$ or $v=1$, the first or second
derivative of $g$ will be zero and these terms may disappear altogether. Thus all
these tuples will lie in $S_{m+1}$ for any $(u,v)\in S_m$, and so we must have
$\Lap^{m+1} \Phi(z)$ satisfies the induction hypothesis as well, completing
the proof by induction.

Now we can prove the lemma. Suppose $2m \ge d$. Then by the triangle inequality and a
radial substitution \cite{Baker1999},
\begin{align*}
  \int_{\mbb{R}^d} |\Lap^m \Phi(z)|\dz
    \le \sum_{(u,v)\in S_m} \int_{\mbb{R}^d} \tau_{u,v} \twonorm{z}^{2v}
    |\phi^{(u)}(\twonorm{z}^2)| \dz
    = d\,V_d \sum_{(u,v)\in S_m} \int_{0}^{\infty} \tau_{u,v} r^{2v + d - 1}
    |\phi^{(u)}(r^2)| \dr.
\end{align*}
Because $|\phi^{(u)}(r)| = O(r^{-u}\log^{\beta-1}(r))$ as $r\to \infty$ for $u\in\mbb{N}$ by
\eqnref{inverse-log-radial-derivs}, we see that each integrand above is $O(r^{2(v-u)
  + d - 1} \log^{\beta-1}(r))$. But since $v \le u - m$, this will imply that
each integrand is $O(r^{-2m + d - 1} \log^{\beta-1}(r))$, which is integrable for
large $r$ yielding $\Lap^m \Phi \in L^1(\mbb{R}^d)$.

By \citep[Lemma 4.34]{Christmann2008} and the fact that positive definiteness is preserved by summation,
we have $\Lap^m \Phi$ is a positive definite
function. This along with the fact that $\Lap^m \Phi \in L^1(\reals^d)$
allows us to invoke \citep[Theorem 6.11]{Wendland2004} and \citep[Theorem
  6.18]{Wendland2004} to obtain $\widehat{\Lap^m \Phi}$ is continuous,
radial and nonvanishing. Moreover, $\Lap^m \Phi$ belonging to
$L^1(\reals^d)$ implies its Fourier transform belongs to $L^{\infty}(\reals^d)$.
The lemma follows by noticing $\widehat{\Lap^m
  \Phi}(w) = \twonorm{w}^{2m}\hat{\Phi}(w)$, i.e.,
$\hat{\Phi}(w) = \twonorm{w}^{-2m} \widehat{\Lap^m \Phi}(w)$
for all $w\neq 0$.
\end{proof}

We now need to demonstrate the second property to complete the proof of
\thmref{log-inverse-implies-weak-convergence}, but in order to do so, we
first will establish the lemma below.  By
\lemref{log-inverse-nonvanishing-gft}, we know $\hat{\Phi}$ is radial and
thus can write $\hat{\Phi}(w) = \phi_{\wedge}(\twonorm{w})$ for some
continuous function $\phi_{\wedge}:(0,\infty)\to (0, \infty)$. Our first
priority will be to lower bound $\phi_{\wedge}$ near the origin.

\begin{lemma}[Log Inverse GFT Lower Bound]\label{lem:log-inverse-gft-lower-bound}
If $\Phi$ is the log inverse function on $\reals^d$ from
\lemref{log-inverse-nonvanishing-gft},
then $\liminf_{r\to 0^{+}}
r^{d}(\alpha + \log(1+1/r^2))^{-\beta+1}\phi_{\wedge}(r) > 0$ where
$\hat{\Phi}(w) = \phi_{\wedge}(\twonorm{w})$ for all $w\neq 0$.
\end{lemma}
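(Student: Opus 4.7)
My plan is to represent $\Phi$ as a superposition of integrable Matern-type functions whose radial Fourier transforms are classically known, and then isolate the dominant low-frequency contribution. For any $\beta<0$ and $x>0$, the Laplace identity $x^\beta = \frac{1}{\Gamma(-\beta)}\int_0^\infty t^{-\beta-1}e^{-tx}\dt$ applied with $x = \alpha + \log(1 + \twonorm{z}^2)$ yields
\begin{align*}
\Phi(z) = \frac{1}{\Gamma(-\beta)}\int_0^\infty t^{-\beta-1}e^{-t\alpha}\,\Phi_t(z)\dt, \qquad \Phi_t(z)\defeq(1+\twonorm{z}^2)^{-t}.
\end{align*}
The $d$-dimensional Fourier transform of $\Phi_t$ is classical,
$\hat\Phi_t(w) = \frac{(2\pi)^{d/2}}{2^{t-1}\Gamma(t)}\twonorm{w}^{t-d/2} K_{d/2-t}(\twonorm{w})$,
and $\hat\Phi_t\geq 0$ by Schoenberg's theorem together with the complete monotonicity of $(1+r)^{-t}$. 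My next step is to interchange the $t$-integral with the Fourier transform to obtain $\hat\Phi(w) = \frac{1}{\Gamma(-\beta)}\int_0^\infty t^{-\beta-1}e^{-t\alpha}\hat\Phi_t(w)\dt$. Since $\Phi\notin L^1(\reals^d)$ this interchange is not immediate; I would justify it by applying Fubini to $\Lap^m\Phi$ for the same $m$ used in \lemref{log-inverse-nonvanishing-gft} (for which $\Lap^m\Phi\in L^1$), noting that $\Lap^m\Phi_t$ admits the same product-rule expansion \eqnref{laplacian-product-rule} and is likewise integrable, and finally dividing by $\twonorm{w}^{2m}$ via $\widehat{\Lap^m\Phi}(w) = \twonorm{w}^{2m}\hat\Phi(w)$.

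To extract the asymptotic I would combine the small-argument expansion $K_\nu(r)\sim \tfrac{1}{2}\Gamma(\nu)(r/2)^{-\nu}$ for $\nu>0$ with the pole $1/\Gamma(t) = t + O(t^2)$ as $t\to 0^+$ to deduce a lower bound of the form $\hat\Phi_t(w)\geq c_d\,t\,\twonorm{w}^{2t-d}$, valid uniformly over all sufficiently small $t$ and $\twonorm{w}$, where $c_d>0$ depends only on $d$. Setting $L\defeq \log(1/\twonorm{w}^2)$ so that $\twonorm{w}^{2t-d} = \twonorm{w}^{-d}e^{-tL}$, restricting the $t$-integral to a small interval $(0,\epsilon)$ on which this bound applies (the tail contribution is exponentially small in $L$ and so absorbable), and evaluating the resulting integral via $\Gamma(1-\beta) = -\beta\,\Gamma(-\beta)$, I obtain
\begin{align*}
\hat\Phi(w) \;\geq\; \frac{c_d\,\twonorm{w}^{-d}}{\Gamma(-\beta)}\int_0^\epsilon t^{-\beta}e^{-t(\alpha+L)}\dt
\;\sim\; c_d(-\beta)\,\twonorm{w}^{-d}(\alpha+L)^{\beta-1}
\end{align*}
as $\twonorm{w}\to 0^+$. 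Since $\log(1+1/r^2) - \log(1/r^2) = o(1)$ as $r\to 0^+$, this delivers $r^d(\alpha + \log(1+1/r^2))^{-\beta+1}\phi_\wedge(r) \geq c_d(-\beta) + o(1)$, which is strictly positive because $-\beta>0$.

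The main obstacle will be the rigorous bookkeeping of the interchange of integration in the generalized Fourier transform framework, together with obtaining a lower bound uniform in $\twonorm{w}$ rather than merely pointwise asymptotics on $\hat\Phi_t$. Passing through $\Lap^m\Phi\in L^1$ mirrors the framework already set up in \lemref{log-inverse-nonvanishing-gft}, and the pointwise positivity $\hat\Phi_t\geq 0$ ensures that discarding the tail contribution $t\geq\epsilon$ costs nothing on the lower-bound side.
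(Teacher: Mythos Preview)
Your approach is correct and genuinely different from the paper's. The paper never writes $\Phi$ as a Laplace superposition of Mat\'ern kernels; instead it invokes the Bernstein--Widder representation $\Phi(z)=\int_0^\infty e^{-t\twonorm{z}^2}\,dv(t)$ only to prove that $\phi_\wedge$ is strictly \emph{decreasing}, then runs a Tauberian-style sandwich argument: for a bump $\psi$ supported on $[a,b]$ it bounds $\phi_\wedge(b/\lambda)\le \int\lambda\phi_\wedge(r)\psi(\lambda r)\,dr\le\phi_\wedge(a/\lambda)$, rewrites the middle term via Plancherel, and shows that after rescaling by $\lambda^{-d}\phi(\lambda)^{-1}(\alpha+\log(1+\lambda^2))$ it converges to a constant multiple of $\int_0^\infty s^{d-1}(\log s)\xi_\wedge(s)\,ds$. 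A final duality step identifies this with $\int\psi\,l_\wedge$ for $l_\wedge$ the radial part of $\widehat{\log\twonorm{\cdot}}$, which cannot vanish for all $\psi$. Your route avoids the monotonicity step, the sandwich, and the $\log$-Fourier argument entirely: the Laplace identity $x^\beta=\Gamma(-\beta)^{-1}\int_0^\infty t^{-\beta-1}e^{-tx}\,dt$ gives an explicit decomposition into $(1+\twonorm{z}^2)^{-t}$, whose Fourier transforms are known and nonnegative, and the small-$t$ pole $1/\Gamma(t)\sim t$ together with $K_{d/2-t}(r)\sim\tfrac12\Gamma(d/2-t)(r/2)^{-(d/2-t)}$ yields the lower bound after a one-line Gamma integral. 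Your argument is shorter and produces the explicit constant $c_d(-\beta)$; the paper's is less computational and does not require uniform-in-$\nu$ Bessel estimates.

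Two minor points. First, your proposed justification of the interchange via $\Lap^m$ is heavier than necessary: since $0\le\Phi_t\le 1$ pointwise and $\int_0^\infty t^{-\beta-1}e^{-t\alpha}\,dt<\infty$, Fubini applies directly to the pairing $\langle\Phi,\hat\varphi\rangle$ for any Schwartz $\varphi$, giving the distributional identity $\hat\Phi=\Gamma(-\beta)^{-1}\int_0^\infty t^{-\beta-1}e^{-t\alpha}\hat\Phi_t\,dt$; the pointwise equality on $\reals^d\setminus\{0\}$ then follows from the continuity of $\hat\Phi$ established in \lemref{log-inverse-nonvanishing-gft}. Second, your parenthetical ``the tail contribution is exponentially small in $L$ and so absorbable'' is superfluous---as you note at the end, positivity of $\hat\Phi_t$ means restricting to $t\in(0,\epsilon)$ already gives a valid lower bound with nothing to absorb. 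The uniform lower bound $\hat\Phi_t(w)\ge c_d\,t\,\twonorm{w}^{2t-d}$ for $(t,\twonorm{w})\in(0,\epsilon)\times(0,r_0)$ follows from compactness once you observe that $(r/2)^{\nu}K_\nu(r)$ extends continuously to $r=0$ with positive value $\Gamma(\nu)/2$ for $\nu$ in the compact interval $[d/2-\epsilon,d/2]$, and that $1/\Gamma(t)=t/\Gamma(t+1)$ with $\Gamma(t+1)$ bounded on $[0,\epsilon]$.
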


\begin{proof}
First we will show that $\phi_{\wedge}$ is strictly decreasing. Since
$r \mapsto (\alpha + \log(1 + r))^{\beta}$ was shown to be completely
monotone in \lemref{log-inverse-is-completely-monotone}, by \citep[Theorem
  7.14]{Wendland2004} we must have
$\Phi(z) = \int_0^{\infty} e^{-t \twonorm{z}^2}\partial v(t)$
for some finite, non-negative Borel measure $v$ on $[0,\infty)$ that is not
  concentrated at zero. Let $(\varphi_m)_{m=1}^\infty$ be a sequence of Schwartz functions \citep[Definition
  5.17]{Wendland2004} defined on
  $\reals^d$. Then, for each $m$, both $\hat{\varphi}_m$ and $\Phi \hat{\varphi}_m$ are also Schwartz
functions, and thus
\begin{align*}
\int_{\reals^d} \left [\int_0^{\infty} |e^{-t \twonorm{x}^2} \hat{\varphi}_m(x)|
\partial v(t) \right ] \dx
  = \int_{\reals^d} \left [\int_0^{\infty} e^{-t \twonorm{x}^2} |\hat{\varphi}_m(x)|
    \partial v(t) \right ] \dx
  = \int_{\reals^d} \Phi(x) |\hat{\varphi}_m(x)| \dx < \infty,
\end{align*}
as all Schwartz functions are integrable.  This allows us to use Fubini's
theorem in conjunction with Plancherel's Theorem to argue
\begin{align*}
  \int_{\reals^d} \hat{\Phi}(w)\varphi_m(w)\dw
  = \int_{\reals^d} \Phi(x)\hat{\varphi}_m(x)\dx
  &= \int_{\reals^d} \left [\int_0^{\infty} e^{-t \twonorm{x}^2}\partial
    v(t)\right ] \hat{\varphi}_m(x)\dx \\
  &= \int_{\reals^d} \int_0^{\infty} e^{-t \twonorm{x}^2} \hat{\varphi}_m(x) \partial
    v(t) \dx \\
  &= \int_0^{\infty}\int_{\reals^d}  e^{-t \twonorm{x}^2} \hat{\varphi}_m(x)\dx
  \partial v(t) \\
  &= \int_0^{\infty} \int_{\reals^d} (2t)^{d/2} e^{-\frac{1}{4t} \twonorm{w}^2}\varphi_m(w)\dw
  \partial v_+(t) + v_0 \varphi_m(0),
\end{align*}
where we have used the decomposition $v \defeq v_+ + v_0 \delta_0$ for  $v_0
\geq 0$ and $v_+$ non-zero and absolutely continuous with respect to
Lebesgue measure on $[0,\infty)$. Let $\mathcal{B}:\reals^d\to\reals$ be a bump
function, e.g., $\mathcal{B}(x) \defeq \mathcal{Z}^{-1}
\exp\{-1/(1-\twonorm{x}^2)\}\indic{\twonorm{x}<1}$ where $\mathcal{Z}$ is the
normalization constant chosen such that $\int_{\reals^d} \mathcal{B}(x)\dx
= 1$. Then let us define $\varphi_m:\reals^d\to\reals$
via the mapping $\varphi_m(w) \defeq m^d \mathcal{B}(m(w-w_0e_1)) - m^d \mathcal{B}(m(w-w_1e_1))$,
where $0 < w_0 < w_1$ and $e_1\in\reals^d$ is the first standard basis vector. Then $\int_{\reals^d}
\hat{\Phi}(w)\varphi_m(w)\dw \to \hat{\Phi}(w_0e_1) - \hat{\Phi}(w_1e_1) =
\phi_{\wedge}(w_0) - \phi_{\wedge}(w_1)$
since $\hat{\Phi}$ is a continuous in neighborhoods of $w_0e_1$ and $w_1e_1$
\citep[Theorem 5.22]{Wendland2004}.

Because $v_+$ cannot be the zero measure, there must be some finite interval
$[a_0,b_0]\subset (0,\infty)$ such that $v_+([a_0,b_0])>0$.
For each $t > 0$ and $m > \max(\frac{1}{w_0},\frac{2}{w_1 - w_0})$, we have
\[
A_m(t)\defeq \int_{\reals^d} (2t)^{d/2} e^{-\frac{1}{4t} \twonorm{w}^2}\varphi_m(w)\dw
	=  \int_{\reals^d} (2t)^{d/2} (e^{-\frac{1}{4t} \twonorm{w-w_0e_1}^2} - e^{-\frac{1}{4t} \twonorm{w-w_1e_1}^2})m^d \mathcal{B}(mw) \dw
	> 0,
\]
since $\twonorm{w-w_0e_1} < \twonorm{w-w_1e_1}$ when $\twonorm{w} < \min(\frac{w_0 - w_1}{2}, w_0)$.
Using \citep[Theorem 5.22]{Wendland2004} again,
we have $A_m(t)\to (2t)^{d/2}(e^{-\frac{1}{4t}w_0^2} -
e^{-\frac{1}{4t}w_1^2})$ as $m\to\infty$ for any $t > 0$.
Moreover, for all $t \in [a_0,b_0]$ and $m\geq 1$, we have
\begin{align}\label{eqn:schwartz-upper-bound}
|A_m(t)| \leq \int_{\reals^d} (2t)^{d/2} e^{-\frac{1}{4t}
  \twonorm{w-w_0e_1}^2}m^d \mathcal{B}(mw) \dw
  &\leq (2b_0)^{d/2}  \sup_{\twonorm{w} < 1} e^{-\frac{1}{4b_0}
  \twonorm{w-w_0e_1}^2} < \infty.
\end{align}
Hence, the dominated convergence theorem allows us to exchange the limit over $m$ and integral
over $t$ below to conclude
\begin{align*}
  &\phi_{\wedge}(w_0) - \phi_{\wedge}(w_1)
  = \lim_{m\to\infty} \int_{0}^{\infty} A_m(t)  \partial v_+(t) + v_0 \varphi_m(0)
  \geq \lim_{m\to\infty} \int_{a_0}^{b_0} A_m(t)  \partial v_+(t) 
  = \int_{a_0}^{b_0} \lim_{m\to\infty} A_m(t)  \partial v_+(t) \\
  &= \int_{a_0}^{b_0} (2t)^{d/2}(e^{-\frac{1}{4t}w_0^2} -
e^{-\frac{1}{4t}w_1^2}) \partial v_+(t)
  \ge v_+([a_0,b_0]) \min_{t\in [a_0,b_0]} \left \{(2t)^{d/2}
  (e^{-\frac{1}{4t}w_0^2} -
  e^{-\frac{1}{4t}w_1^2})\right \} > 0,
\end{align*}
showing $\phi_{\wedge}$ is strictly decreasing as claimed.

Suppose $\psi:[0,\infty)\to\reals$ is a $C^{\infty}$ function
with support $[a,b]$ for $0 < a < b$ such that $\psi(r) > 0$ for all $r \in (a,b)$ and
$\int_{0}^{\infty} \psi(r) \dr = 1$. Then because $\phi_{\wedge}$ is strictly decreasing,
by the mean value theorem we have
\begin{align}\label{eqn:sandwich-gft-near-origin}
\phi_{\wedge}(b/\lambda) \le \int_0^{\infty} \lambda
\phi_{\wedge}(r)\psi(\lambda r)\dr \le \phi_{\wedge}(a/\lambda)
\end{align}
for all $\lambda > 0$. If we assign $\Psi(w)\defeq \psi(\twonorm{w})$ to be
the radial continuation of $\psi$, by \cite{Baker1999} the quantity
sandwiched above becomes
\begin{align*}
\int_0^{\infty} \lambda \phi_{\wedge}(r)\psi(\lambda r)\dr
  = \int_0^{\infty} \phi_{\wedge}(s/\lambda)\psi(s)\ds
  = \frac{1}{d V_d}\int_{\reals^d}\hat{\Phi}(w/\lambda) \frac{\Psi(w)}{\twonorm{w}^{d-1}}\dw.
\end{align*}
Next suppose that $\xi:[0, \infty)\to\reals$ is a Schwartz function
satisfying $\xi^{(k)}(0) = 0$ for all integral $k\ge 0$, and
let $\Xi:\reals^d\to\reals$ given by $\Xi(x) \defeq \xi(\twonorm{x})$ be
the radial continuation of $\xi$. Then by Plancherel's Theorem,
scaling the input of a Fourier transform as in \citep[Theorem 5.16]{Wendland2004},
and the change to spherical coordinates in \cite{Baker1999}, for any
$\lambda > 0$, we have
\begin{align}\label{eqn:phi-xi}
\int_{\reals^d} \hat{\Phi}(w/\lambda) \Xi(w)\dw
  = \int_{\reals^d} \Phi(w) \hat{\Xi}(w/\lambda)\dw
  = d\,V_d\,\int_0^{\infty} r^{d-1} \phi(r) \xi_{\wedge}(r/\lambda) \dr
  = d\,V_d\,\lambda^d \int_0^{\infty} s^{d-1} \phi(\lambda s) \xi_{\wedge}(s) \,d s,
\end{align}
where $s = r/\lambda$ and $\xi_{\wedge}$ is the radial function associated with $\hat{\Xi}$,
i.e., $\hat{\Xi}(w) = \xi_{\wedge}(\twonorm{w})$ for all $w$. 

Let us define $\omega :[0,\infty)\to\reals$ by the mapping $\omega(t) \defeq
  (\alpha + t)^{\beta}$. Then by the mean value theorem and the fact that
  $\omega'$ is increasing, we have for all $s > 1$
\begin{align*}
-\omega'(\log(1 + \lambda^2s^2)) \le -\frac{\omega(\log(1 + \lambda^2s^2)) -
  \omega(\log(1 + \lambda^2))}{\log(1 + \lambda^2s^2) - \log(1 + \lambda^2)}
\le -\omega'(\log(1 + \lambda^2)).
\end{align*}
By rearranging terms, this implies for all $\lambda > 0$
\begin{align*}
  (-\beta)\left(\frac{\alpha + \log(1 + \lambda^2s^2)}{\alpha + \log(1 +
    \lambda^2)}\right )^{\beta-1}\log\left (\frac{1 + \lambda^2s^2}{1 + \lambda^2}\right )
  \le -\frac{\omega(\log(1 + \lambda^2s^2)) -
    \omega(\log(1 + \lambda^2))}
      {\omega(\log(1 + \lambda^2))(\alpha + \log(1 + \lambda^2))^{-1}}
  \le (-\beta)\log\left (\frac{1 + \lambda^2s^2}{1 + \lambda^2}\right ).
\end{align*}
Since $\log(\frac{1+\lambda^2s^2}{1+\lambda^2})\to 2\log s$ as
$\lambda\to\infty$, and the sandwiched term above is $-(\alpha +
\log(1+\lambda^2))(\phi(\lambda s)/\phi(\lambda) - 1)$, we have
$(\alpha + \log(1+\lambda^2))(\phi(\lambda s)/\phi(\lambda) - 1)\to 2\beta \log s$
as $\lambda\to\infty$ for all $s > 1$. The case for $s \in (0,1]$ is
analogous and yields the same asymptotic limit.

With this new asymptotic expansion in hand, we will revisit \eqnref{phi-xi}.
We have
\begin{align*}
\lambda^{-d}\phi(\lambda)^{-1}(\alpha + \log(1+\lambda^2))\int_{\reals^d}
\hat{\Phi}(w/\lambda) \Xi(w) \dw
 &= d\, V_d \phi(\lambda)^{-1}(\alpha + \log(1+\lambda^2))\int_{0}^{\infty}
 \phi(\lambda s) s^{d-1} \xi_{\wedge}(s)\ds \\
 &= d\, V_d\, (\alpha + \log(1+\lambda^2))\int_{0}^{\infty}
 \frac{\phi(\lambda s)}{\phi(\lambda)} s^{d-1} \xi_{\wedge}(s)\ds \\
 &= d\, V_d \int_{0}^{\infty}
 (\alpha + \log(1+\lambda^2)) \left [\frac{\phi(\lambda
     s)}{\phi(\lambda)} - 1 \right ] s^{d-1} \xi_{\wedge}(s)\ds.
\end{align*}
Notice that final integrand converges to $2\beta s^{d-1} (\log s) \xi_{\wedge}(s)$
pointwise for all $s \ge 0$ as $\lambda \to \infty$. Since $\xi_{\wedge}$ is
a Schwartz function on $[0,\infty)$, we can utilize the fact that $s\mapsto
\log s$ is integrable near the origin to reason that $s^{d-1} (\log s)
\xi_{\wedge}(s)$ is a Schwartz function as well, and thus integrable. Hence 
by the dominated convergence theorem, we have the integral above converges to
$2\, \beta\, d\, V_d \int_{0}^{\infty}  s^{d-1} (\log s) \xi_{\wedge}(s)\ds$
as $\lambda\to\infty$.

Now suppose we choose $\Xi(x)\defeq \twonorm{x}^{1-d}\Psi(x)$. By
\eqnref{sandwich-gft-near-origin} we have
\begin{align*}
\lim_{\lambda \to\infty}
\lambda^{-d}\phi(\lambda)^{-1}(\alpha + \log(1+\lambda^2)) \phi_{\wedge}(b/\lambda) \le
2\beta \int_0^{\infty} s^{d-1}(\log s)\xi_{\wedge}(s)\ds
\le \lim_{\lambda \to\infty} \lambda^{-d}\phi(\lambda)^{-1}(\alpha + \log(1+\lambda^2)) \phi_{\wedge}(a/\lambda).
\end{align*}
By \lemref{log-inverse-nonvanishing-gft}, we know $\phi_{\wedge}(r) > 0$ for
all $r > 0$, and thus the left-hand side above must be non-negative. Hence if we can
show for some choice of $\psi$ that the sandwiched term is non-zero, then the proof of
the lemma will follow from choosing $r = a/\lambda$.

Let us define $L(x) = \log \twonorm{x}$ with generalized Fourier transform $\hat{L}$. 
As usual, let
$l:[0,\infty)\to\reals$ and $l_{\wedge}:[0,\infty)\to\reals$ be the radial
    functions associated with $L$ and $\hat{L}$.  Notice that again by
    Plancherel's Theorem
\begin{align}\label{eqn:multivariate-log-gft-limit}
\int_0^{\infty} s^{d-1}(\log s)\xi_{\wedge}(s)\ds
  = \frac{1}{d V_d}\int_{\reals^d} \hat{\Xi}(w)L(w) \dw
  = \frac{1}{d V_d}\int_{\reals^d} \Xi(x)\hat{L}(x) \dx
  &= \frac{1}{d V_d}\int_{\reals^d}
  \frac{\Psi(x)}{\twonorm{x}^{d-1}}\hat{L}(x) \dx \notag \\
  &= \int_0^{\infty} \psi(r) l_{\wedge}(r)\dr.
\end{align}
Since we are free to choose $\psi$ to be any Schwartz function with support
$[a,b]$, if we could not find a function $\psi$ such that
the quantity in \eqnref{multivariate-log-gft-limit} is non-zero, this would
imply the support of $l_{\wedge}$ is a subset of $\{0\}$. But this would
mean $l_{\wedge}$ is some multiple of a point mass at zero, which would
imply $l$ is a constant function, a contradiction. Thus we must be able to
find some $\psi$ such that the integral above is non-zero, completing the
lemma.
\end{proof}

Fix any $a_0 > 0$
and $\alpha_0\in (0,\frac{1}{2})$.
Our strategy for showing the KSD controls tightness will mimic
\citep[Lem. 16]{Gorham2017}: we will
show that a bandlimited approximation of the function $g_j(x) = 2\alpha_0 x_j(a_0^2 +
\twonorm{x}^2)^{\alpha_0-1}$ belongs to the inverse log RKHS and thus enforces tightness.

First note that in the proof of \citep[Lem. 16]{Gorham2017}, it was shown
$h = \mc{T}_P g$ was a coercive, Lipschitz, and bounded-below function for $P\in\pset$. Moreover,
in the proof of \citep[Lem. 12]{Gorham2017}, a random vector $Y$ with
density $\rho(y)$ is constructed such that
the support of $\hat{\rho}$ belongs to $[-4,4]^d$ and also $\twonorm{Y}$ is
integrable. Consider the new function $g^{\circ}(x) \defeq \Earg{g(x +
  Y)}$ for all $x\in\reals^d$. By the convolution theorem, $\hat{g}^{\circ}_j
= \hat{g}_j\,\hat{\rho}$ and so $g^{\circ}_j$ is bandlimited for all $j$.
In the proof of \citep[Lem. 16]{Gorham2017}, $\hat{g}_j$ was shown to grow
asymptotically at the rate $(iw_j) \twonorm{w}^{-d-2\alpha_0}$ as
$\twonorm{w}\to 0$. Thus
\begin{align*}
  \sum_{j=1}^d \int_{\reals^d} \frac{\hat{g}^{\circ}_j(w)\overline{\hat{g}^{\circ}_j(w)}}{\hat{\Phi}(w)} \dw
    = \sum_{j=1}^d \int_{[-4,4]^d} \frac{\hat{g}_j(w)\overline{\hat{g}_j(w)}
      \hat{\rho}(w)^2}{\hat{\Phi}(w)} \dw
    &\le \kappa_0 \int_{[-4,4]^d}
    \frac{\twonorm{w}^{-2d-4\alpha_0+2}}{\hat{\Phi}(w)} \dw \\
    &\le \kappa_1 \int_0^{4\sqrt{d}} r^{-4\alpha_0+1}\log^{-\beta+1}(1 + r^{-2})\dr,
\end{align*}
for some constants $\kappa_0,\kappa_1 > 0$ where we used
\lemref{log-inverse-gft-lower-bound} in the final inequality.
This integral is finite for
all $\alpha_0 \in (0,\frac{1}{2})$ and any $\beta < 0$, which implies
$g^{\circ}$ is in the log inverse RKHS by \citep[Theorem 10.21]{Wendland2004}.

Finally, notice that via the argument proving \citep[Lemma
  12]{Gorham2017},
\begin{align*}
  \sup_{x\in\reals^d} |\mc{T}_P g^{\circ}(x) - h(x)|
  \le \frac{3d\log 2}{\pi}\left(\sup_{x\in\reals^d} \twonorm{\grad h(x)} +
  \sup_{x\in\reals^d} \opnorm{\grad^2\log p(x)} \cdot
  \sup_{x\in\reals^d} \twonorm{g(x)}\right) < \infty.
\end{align*}
Since $h$ is bounded below and coercive, these properties are inherited by
$\mc{T}_Pg^{\circ}$. This allows us to apply \citep[Lemma 17]{Gorham2017} to argue $D_{\mathcal{K}_0,P}\left(\mu_m \right)\to 0$
implies the measures $\mu_m$ are uniformly tight. Combining this with 
\lemref{log-inverse-nonvanishing-gft} allows us to utilize \citep[Theorem
  7]{Gorham2017} for the log inverse kernel, thereby concluding the proof.

\subsection{Proof of \thmref{imq-score-implies-weak-convergence}: IMQ Score KSD Convergence Control}
\label{sec:imq-score-implies-weak-convergence-proof}
For $b = \grad \log p$, introduce the alias $k_b = k_3$, let $\kset_b$ denote the RKHS of $k_b$, and let $C_c$ represent the set of continuous compactly supported functions on $X$.
Since $P\in\pset$, the proof of Thm. 13 in \citep{Gorham2017} shows that 
if, for each $h\in C^1\cap C_c$ and $\eps > 0$, there exists $h_\eps \in \kset_b$
such that $\sup_{x \in X} |(\mc{T}_Ph)(x) - (\mc{T}_Ph_\eps)(x)| \leq \eps$, then $\mu_m \Rightarrow P$
whenever $D_{\mathcal{K}_0,P}\left(\mu_m \right) \to 0$ and $(\mu_m)_{m=1}^\infty$ is uniformly tight.
Hence, to establish our result, it suffices to show 
(1) that, for each $h\in C^1\cap C_c$ and $\eps > 0$, there exists $h_\eps \in \kset_b$
such that $\sup_{x \in X} \max(\twonorm{\grad (h-h_\eps)(x)}, \twonorm{b(x) (h-h_\eps)(x)}) \leq \epsilon$
and 
(2) that $D_{\mathcal{K}_0,P}\left(\mu_m \right) \to 0$ implies $(\mu_m)_{m=1}^\infty$ is uniformly tight.
\subsubsection{Approximating $C^1 \cap C_c$ with $\kset_b$}
Fix any $f\in C^1\cap C_c$ and $\eps > 0$, and let $\kset$ denote the RKHS of $k(x,y) = (c^2 + \twonorm{x-y}^2)^{\beta}$.
Since $p$ is strictly log-concave, $b$ is invertible with $\det(\grad b(x))$ never zero.
Since $P \in \pset$, $b$ is Lipschitz.
By the following theorem, proved in \secref{composition-kernel-properties-proof}, it therefore suffices to show that there exists $f_\eps \in\kset$ such that $\sup_{x \in X} \max(\twonorm{\grad (f-f_\eps)(x)},  \twonorm{x (f-f_\eps)(x)})\leq \epsilon$.
\begin{theorem}[Composition Kernel Approximation]\label{thm:composition-kernel-properties}
For $b : X \to X$ invertible and $k$ a reproducing kernel on $X$ with induced RKHS $\kset$, define the composition kernel $k_b(x,y) = k(b(x), b(y))$
with induced RKHS $\kset_b$.
Suppose that, for each $f\in C^1\cap C_c$ and $\eps > 0$, there exists $f_\eps \in\kset$ such that \[\sup_{x \in X} \max(\twonorm{\grad (f-f_\eps)(x)},  \twonorm{x (f-f_\eps)(x)})\leq \epsilon.\]
If $b$ is Lipschitz and $\det(\grad b(x))$ is never zero, then, for each $h\in C^1\cap C_c$ and $\eps > 0$, there exists $h_\eps \in \kset_b$
such that \[\sup_{x \in X} \max(\twonorm{\grad (h-h_\eps)(x)}, \twonorm{b(x) (h-h_\eps)(x)}) \leq \epsilon.\]
\end{theorem}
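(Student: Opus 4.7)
My plan is to approximate $h$ by the pullback through $b$ of a $\kset$-approximation of $f := h \circ b^{-1}$. Concretely, I would set $f := h \circ b^{-1}$, apply the hypothesis to obtain an approximant $f_\eta \in \kset$, and then take $h_\eps := f_\eta \circ b \in \kset_b$ for a suitable choice of tolerance $\eta = \eta(\eps, \mathrm{Lip}(b))$.

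The first step is to verify $f \in C^1 \cap C_c$ so that the hypothesis applies to it. Since $b$ is invertible with $\det(\nabla b)$ never zero, and $b$ is (Lipschitz and hence) continuous, $b^{-1}$ exists; a Lipschitz inverse function argument yields that $b^{-1}$ is itself Lipschitz, and when $b$ is $C^1$ with non-singular Jacobian the inverse function theorem gives $b^{-1} \in C^1$, so $f = h\circ b^{-1} \in C^1$. The compact support of $f$ is automatic: $\mathrm{supp}(f) = b(\mathrm{supp}(h))$ is the continuous image of a compact set, hence compact.

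Given $f \in C^1 \cap C_c$, the hypothesis provides $f_\eta \in \kset$ with
\begin{equation*}
\sup_{y \in X} \max\bigl(\twonorm{\nabla(f-f_\eta)(y)},\; \twonorm{y\,(f-f_\eta)(y)}\bigr) \leq \eta.
\end{equation*}
I would then set $h_\eta := f_\eta\circ b$. Since $k_b(x,y) = k(b(x),b(y))$, the pullback RKHS construction shows that $\kset_b = \{g\circ b : g\in\kset\}$ with $\|f_\eta\circ b\|_{\kset_b} = \|f_\eta\|_{\kset}$, so $h_\eta \in \kset_b$. The chain rule gives $\nabla(h-h_\eta)(x) = (\nabla b(x))^{\top}\nabla(f-f_\eta)(b(x))$, whence $\twonorm{\nabla(h-h_\eta)(x)} \leq \mathrm{Lip}(b)\cdot\eta$. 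For the second quantity, $b(x)(h-h_\eta)(x) = b(x)(f-f_\eta)(b(x))$ and, since $b$ is a bijection, the substitution $y = b(x)$ gives $\sup_x \twonorm{b(x)(h-h_\eta)(x)} = \sup_y \twonorm{y(f-f_\eta)(y)} \leq \eta$. Setting $\eta := \eps/\max(\mathrm{Lip}(b),1)$ and $h_\eps := h_\eta$ then closes the argument.

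The main obstacle is the first step: justifying $f = h\circ b^{-1} \in C^1 \cap C_c$ from the stated regularity of $b$. If $b$ is only Lipschitz with non-singular Jacobian a.e., then $b^{-1}$ is Lipschitz but $f$ need not be $C^1$, and the hypothesis does not directly apply. This can be handled either by strengthening the assumption to $b \in C^1$ (which holds in the intended application where $b = \nabla\log p$ for a $C^2$ strictly log-concave $p$) or, alternatively, by first mollifying $h$ into $h_\delta \in C^\infty \cap C_c$ with $\sup_x \twonorm{\nabla(h-h_\delta)(x)} \to 0$ as $\delta\to 0$, running the argument for $h_\delta$ with tolerance $\eta/2$, and absorbing the mollification error into the final $\eps$ via the triangle inequality. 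The remaining steps are then straightforward chain-rule bookkeeping driven entirely by the Lipschitz constant of $b$ and the bijectivity substitution $y = b(x)$.
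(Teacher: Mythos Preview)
Your proposal is correct and follows essentially the same route as the paper: set $f = h\circ b^{-1}$, check $f\in C^1\cap C_c$ via the inverse function theorem and the fact that continuous maps preserve compactness, apply the hypothesis with tolerance $\eps/\max(1,\mathrm{Lip}(b))$, pull back to $\kset_b$ via $h_\eps = f_\eta\circ b$, and bound the two quantities by the chain rule and the substitution $y=b(x)$. The paper packages the first and third steps into two short lemmas (one on composition preserving $C^1\cap C_c$, one on $g\in\kset\Rightarrow g\circ b\in\kset_b$), but the content is identical; you have also correctly flagged the implicit $b\in C^1$ assumption needed for the inverse function theorem, which the paper uses without comment and which is indeed satisfied in the intended application $b=\nabla\log p$.
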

Since the identity map $x\mapsto x$ is Lipschitz and $f \in L^2$ because it is continuous and compactly supported, \citep[Lem. 12]{Gorham2017} provides an explicit construction of $f_\eps \in \kset$ satisfying our desired property whenever $k(x,y) = \Phi(x-y)$ for $\Phi \in C^2$ with non-vanishing Fourier transform.  Our choice of IMQ $k$ satisfies these properties by \citep[Thm. 8.15]{Wendland2004}.

\subsubsection{Controlling Tightness}
Since $P$ is distantly dissipative,
\[
-\twonorm{b(x)}\twonorm{x}
	\leq \inner{b(x)}{x} 
	\leq -\kappa\twonorm{x}^2+C + \inner{b(0)}{x}
	\leq -\kappa\twonorm{x}^2+C + \twonorm{b(0)}\twonorm{x}
\]
by Cauchy-Schwarz.
Hence, $b$ is \emph{norm-coercive}, i.e., $\twonorm{b(x)}\to\infty$ whenever $\twonorm{x} \to\infty$.
Since $\grad b$ is bounded, our desired result follows from the following lemma
which guarantees tightness control on $b$ under weaker conditions.
\begin{lemma}[Coercive Score Kernel KSDs Control Tightness]
If $b : X \to X$ is norm coercive and differentiable, and $\grad_j b_j(x) = o(\twonorm{b(x)}^2)$ as $\twonorm{x}\to\infty$, then $\limsup_m D_{\mathcal{K}_0,P}\left(\mu_m \right) < \infty$ implies $(\mu_m)_{m=1}^\infty$ is tight.
\end{lemma}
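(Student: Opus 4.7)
The plan is to reduce to the Gorham17-style tightness criterion (the one invoked just before the lemma), which asserts that if $h = \mathcal{T}_P g$ for some $g \in \mathcal{K}_b^d$, and if $h$ is continuous, bounded below, and coercive, then $\limsup_m D_{\mathcal{K}_0,P}(\mu_m) < \infty$ forces $(\mu_m)$ to be uniformly tight. The reduction uses Cauchy--Schwarz in $\mathcal{K}_0$ together with Stein's identity $\mathbb{E}_P[\mathcal{T}_P g]=0$ to get $|\mathbb{E}_{\mu_m}[h]| = |\mathbb{E}_{\mu_m}[\mathcal{T}_P g]| \le \|g\|_{\mathcal{K}_b^d}\, D_{\mathcal{K}_0,P}(\mu_m)$, uniformly bounded; Markov then converts this into tightness via the coercive $h$. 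So the task reduces to exhibiting such a $g \in \mathcal{K}_b^d$.

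First, I would borrow the test function used in Gorham17 Lem.~16 but \emph{composed with} $b$: for parameters $\alpha_0 \in (0,\tfrac12)$ and $a_0>0$ set $\tilde g_j(y) = 2\alpha_0 y_j (a_0^2 + \|y\|^2)^{\alpha_0-1}$ and let $g(x) = \tilde g(b(x))$. A direct calculation with the Langevin Stein operator gives
\[
(\mathcal{T}_P g)(x) \;=\; \operatorname{tr}\!\bigl(D\tilde g(b(x))\, Db(x)\bigr) \;+\; b(x)\cdot\tilde g(b(x)),
\]
and the second summand equals $2\alpha_0 \|b(x)\|^2(a_0^2 + \|b(x)\|^2)^{\alpha_0-1}$, which is nonnegative and $\Theta(\|b(x)\|^{2\alpha_0})$ as $\|b(x)\|\to\infty$; by norm-coerciveness of $b$ this is coercive in $x$. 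The divergence contribution splits into a diagonal piece of order $(a_0^2+\|b\|^2)^{\alpha_0-1}$ times $\sum_j \partial_j b_j$ and an off-diagonal piece of order $(a_0^2+\|b\|^2)^{\alpha_0-2}$ times $b^\top Db\, b$. Invoking the hypothesis $\nabla_j b_j(x) = o(\|b(x)\|^2)$, both pieces are $o(\|b(x)\|^{2\alpha_0})$, so the coercive term dominates and $\mathcal{T}_P g$ is coercive and bounded below for large $\|x\|$; continuity is automatic from differentiability of $b$.

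Second, $\tilde g$ itself need not lie in the base IMQ RKHS $\mathcal{K}$. I would therefore smooth it with the bandlimiting convolution trick of Gorham17 Lem.~12: convolving with a smooth density of compactly supported Fourier transform yields $\tilde g^\circ \in \mathcal{K}^d$ such that $\|\tilde g - \tilde g^\circ\|_\infty$ and $\|D\tilde g - D\tilde g^\circ\|_\infty$ are as small as desired, because IMQ's generalized Fourier transform is nonvanishing and has sufficient decay. Composition with $b$ via the Composition Kernel Approximation theorem places $g^\circ = \tilde g^\circ \circ b$ in $\mathcal{K}_b^d$ with $\|g^\circ\|_{\mathcal{K}_b^d} = \|\tilde g^\circ\|_{\mathcal{K}^d} < \infty$, and, since Lipschitzness of $b$ lets us bound $\|\mathcal{T}_P g - \mathcal{T}_P g^\circ\|_\infty$ by a small multiple of $\|\tilde g-\tilde g^\circ\|_\infty + \|D\tilde g - D\tilde g^\circ\|_\infty$, the approximant $h := \mathcal{T}_P g^\circ$ inherits continuity, lower boundedness, and coercivity. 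Applying the Gorham17 tightness criterion to this $h$ completes the proof.

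The main obstacle I anticipate is the divergence term $\operatorname{tr}(D\tilde g(b(x))\, Db(x))$: the Jacobian $D\tilde g$ has an off-diagonal component of order $\|b\|^{2\alpha_0-2}$ that pairs with $b^\top Db\, b$, so even under the stated hypothesis on diagonal entries $\partial_j b_j$ I will need to verify that the relevant contracted quantities are still $o(\|b\|^{2\alpha_0})$, possibly by choosing $\alpha_0$ small enough that this off-diagonal contribution is automatically dominated by the coercive $\|b\|^{2\alpha_0}$ term up to the assumed growth of $Db$.
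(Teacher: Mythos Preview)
Your overall strategy—exhibit $g\in\mathcal{K}_b^d$ with $\mathcal{T}_P g$ coercive and bounded below, then invoke the tightness criterion of \cite[Lem.~17]{Gorham2017}—matches the paper exactly, and your choice of test function $\tilde g_j(y)=2\alpha_0 y_j(a_0^2+\|y\|^2)^{\alpha_0-1}$ composed with $b$ is the same one the paper uses.

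The main divergence is your second step. You assert that ``$\tilde g$ itself need not lie in the base IMQ RKHS $\mathcal{K}$'' and therefore route through the bandlimiting approximation of \cite[Lem.~12]{Gorham2017} and the Composition Kernel Approximation theorem. In fact, for the IMQ base kernel $k(x,y)=(c^2+\|x-y\|^2)^\beta$ with $\beta\in(-1,0)$, the proof of \cite[Lem.~16]{Gorham2017} shows directly that $\tilde g_j\in\mathcal{K}$ once you tighten your parameter choice to $\alpha_0\in(0,\tfrac12(\beta+1))$ and $a_0>c/2$. The paper uses exactly this, and then only needs the elementary fact that $f\in\mathcal{K}$ implies $f\circ b\in\mathcal{K}_b$ (\lemref{composition-rkhs}) to conclude $g_{b,j}\in\mathcal{K}_b$. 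No approximation is required, and the argument collapses to a single direct computation of $\mathcal{T}_P g_b$.

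Your detour also costs you a hypothesis: both the Composition Kernel Approximation theorem and your step bounding $\|\mathcal{T}_P g-\mathcal{T}_P g^\circ\|_\infty$ require $b$ to be Lipschitz, but the lemma as stated does not assume this. The paper's direct route avoids this entirely.

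Regarding your flagged obstacle on the off-diagonal piece $b^\top(Db)\,b$: the paper handles the derivative term in one line, writing the expansion and declaring the non-leading pieces $o(\|b\|^{2\alpha_0})$ under the stated growth condition on $\nabla_j b_j$. Your worry that the hypothesis speaks only to diagonal entries is legitimate in full generality, but note that the second-order piece carries an extra factor $(a_0^2+\|b\|^2)^{\alpha_0-2}$, so $|b^\top(Db)\,b|\cdot\|b\|^{2\alpha_0-4}$ is dominated whenever each $\partial_j b_k$ is $o(\|b\|^2)$; in the actual application (Theorem~\ref{thm:imq-score-implies-weak-convergence}) $\nabla b=\nabla^2\log p$ is bounded, so this holds trivially. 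You do not need to ``choose $\alpha_0$ small enough''—the order of the off-diagonal contribution is already $\|b\|^{2\alpha_0-2}\cdot\max_{j,k}|\partial_j b_k|$, independent of $\alpha_0$.
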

\begin{proof}
Fix any $a >c/2$ and $\alpha \in(0,\frac{1}{2}(\beta + 1))$. 
The proof of \citep[Lem. 16]{Gorham2017} showed that the function $g_j(x) = 2\alpha x_j (a^2 + \twonorm{x}^2)^{\alpha - 1} \in \kset$ for each $j \in \{1,\dots, d\}$.
Hence $g_{b,j}(x) \defeq g_j(b(x)) \in \kset_b$ for each $j \in \{1,\dots, d\}$ by \lemref{composition-rkhs}.
By our assumptions on $\grad b$, we have
\begin{align*}
(\mc{T}_P g_{b})(x) 
	&= 2\alpha (\twonorm{b(x)}^2 (a^2 + \twonorm{b(x)}^2)^{\alpha - 1} 
	+ \sum_{j=1}^d \grad_j b_j(x)  (a^2 + \twonorm{b(x)}^2)^{\alpha - 1}
	+ b_j(x)^2 2(\alpha - 1)(a^2 + \twonorm{b(x)}^2)^{\alpha - 2} \grad_j b_j(x))\\
	&= 2\alpha \twonorm{b(x)}^2 (a^2 + \twonorm{b(x)}^2)^{\alpha - 1}  + o(\twonorm{b(x)}^{2\alpha}),
\end{align*}
so $\mc{T}_P g_{b}$ is coercive, and the proof of \citep[Lem. 17]{Gorham2017} therefore gives the result $(\mu_m)_{m=1}^\infty$ is uniformly tight whenever $\limsup_m D_{\mathcal{K}_0,P}\left(\mu_m \right)$ finite. 
\end{proof}

\subsection{Proof of \thmref{composition-kernel-properties}: Composition Kernel Approximation}
\label{sec:composition-kernel-properties-proof}

Let $c=b^{-1}$ represent the inverse of $b$, and for any function $f$ on $X$, let $f_c(y) = f(c(y))$ denote the composition of $f$ and $c$ so that $f_c(b(x)) = f(x)$.
The following lemma shows that $f_c$ inherits many of the properties of $f$ under suitable restrictions on $b$.
\begin{lemma}[Composition Properties] \label{lem:composition-properties}
For any function $f$ on $X$ and invertible function $b$ on $X$, define $f_c(y) = f(c(y))$ for $c = b^{-1}$.
The following properties hold.
\begin{enumerate}
	\item If $f$ has compact support and $b$ is continuous, then $f_c$ has compact support.
	\item If $f\in C^1$, $b \in C^1$, and $\det(\grad b(x))$ is never zero, then $f_c \in C^1$. 
\end{enumerate}
\end{lemma}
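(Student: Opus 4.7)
The plan is to handle the two conclusions separately, since each rests on a different standard fact about continuous or continuously differentiable maps.

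For part (1), I would start from the observation that $f_c(y) = f(c(y)) \neq 0$ implies $c(y) \in \supp(f)$, equivalently $y \in b(\supp(f))$. Since $b$ is continuous and $\supp(f)$ is compact, the image $b(\supp(f))$ is compact. Taking closures, $\supp(f_c) \subseteq b(\supp(f))$ is a closed subset of a compact set, hence compact.

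For part (2), the real content is establishing $c = b^{-1} \in C^1(X)$; once this is in hand, $f_c = f\circ c \in C^1$ follows immediately from the chain rule, with $\grad f_c(y) = (\grad c(y))^\top \grad f(c(y))$ continuous in $y$. To show $c \in C^1$, I would apply the inverse function theorem at each $x \in X$: because $b \in C^1$ and $\det(\grad b(x))\neq 0$, there exists a neighborhood $U$ of $x$ and a neighborhood $V$ of $b(x)$ on which $b|_U : U\to V$ is a $C^1$ diffeomorphism with $C^1$ inverse. Since $b$ is \emph{globally} invertible on $X$ by hypothesis, the global inverse $c$ must coincide with each local $C^1$ inverse on $V$. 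Covering $X$ by such neighborhoods shows $c \in C^1(X)$.

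The main step to handle with some care is the globalization of the inverse function theorem: one must verify that the local $C^1$ inverses produced by the theorem agree with the given global bijective inverse $c$, which follows from uniqueness of inverses on each local chart. Beyond that, everything reduces to the chain rule and continuity of compositions, so I do not expect any substantial obstacle.
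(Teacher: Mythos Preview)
Your proposal is correct and follows essentially the same approach as the paper: continuous images of compact sets for part (1), and the inverse function theorem plus the chain rule for part (2). Your treatment is in fact slightly more careful---you make explicit the closure argument in (1) and the globalization of the local $C^1$ inverses in (2)---but the underlying ideas are identical.
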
 
\begin{proof}
We prove each claim in turn.
\begin{enumerate}
\item If $f$ is compactly supported and $b$ is continuous, then $\supp{f_c} = b(\supp{f})$ is also compact, since continuous functions are compact-preserving \citep[Prop. 1.8]{Joshi1983}.
\item If $f\in C^1$, $b \in C^1$, and $\det(\grad b(x))$ is never zero, then $c$ is continuous by the inverse function theorem \citep[Thm. 2-11]{Spivak1965}, $x\mapsto (\grad b(x))^{-1}$ is continuous,
and hence $\grad f_c(y) = (\grad c(y)) (\grad f)(c(y)) = ((\grad b)(c(y)))^{-1} (\grad f)(c(y))$ is continuous.
\end{enumerate}
\end{proof}   
Our next lemma exposes an important relationship between the RKHSes $\kset$ and $\kset_b$.
\begin{lemma}\label{lem:composition-rkhs}
Suppose $f$ is in the RKHS $\kset$ of a reproducing kernel $k$ on $X$ and $b : X \to X$ is invertible.
Then $f_b$ is in the RKHS $\kset_b$ of $k_b$ for $f_b(x) = f(b(x))$ and $k_b(x,y) = k(b(x), b(y))$.
\end{lemma}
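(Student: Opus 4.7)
The plan is to exhibit an isometric embedding of $\mathcal{K}$ into $\mathcal{K}_b$ given by the composition map $\Phi(f) = f \circ b$, and then conclude $f_b = \Phi(f) \in \mathcal{K}_b$. I would organize the argument in three steps: verify the claim on kernel sections using invertibility of $b$, extend by linearity with a norm identity, and pass to general $f \in \mathcal{K}$ by continuity.

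The first step is direct: for any $y \in X$, write $y = b(z)$ with $z = b^{-1}(y)$, so that $k(\cdot, y) \circ b$ evaluated at $x$ equals $k(b(x), b(z)) = k_b(x,z)$, which lies in $\mathcal{K}_b$. By linearity, for any finite linear combination $g = \sum_i \alpha_i k(\cdot, b(z_i))$ in the pre-Hilbert subspace of $\mathcal{K}$, we have $g \circ b = \sum_i \alpha_i k_b(\cdot, z_i) \in \mathcal{K}_b$. The key norm identity follows from the reproducing properties of $k$ and $k_b$:
\[
\|g \circ b\|_{\mathcal{K}_b}^2 = \sum_{i,j} \alpha_i \alpha_j k_b(z_i, z_j) = \sum_{i,j} \alpha_i \alpha_j k(b(z_i), b(z_j)) = \|g\|_{\mathcal{K}}^2 .
\]
Thus $\Phi$ restricted to finite linear combinations is a well-defined linear isometry into $\mathcal{K}_b$, where invertibility of $b$ is used precisely to ensure that every kernel section $k(\cdot, y)$ in $\mathcal{K}$ arises as the image of some section $k_b(\cdot, z)$ in $\mathcal{K}_b$.

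The final step is to extend $\Phi$ to all of $\mathcal{K}$. Given $f \in \mathcal{K}$, the Moore--Aronszajn construction provides a sequence $g_n$ of finite linear combinations of kernel sections with $g_n \to f$ in $\mathcal{K}$. Because $\Phi$ is an isometry on the pre-Hilbert subspace, $(g_n \circ b)$ is Cauchy in $\mathcal{K}_b$ and converges to some $h \in \mathcal{K}_b$. Two uses of the reproducing property then identify $h$ with $f_b$: on the one hand, norm convergence in $\mathcal{K}$ implies pointwise convergence $g_n(y) \to f(y)$ for every $y \in X$, and specializing to $y = b(x)$ yields $g_n(b(x)) \to f(b(x)) = f_b(x)$; on the other hand, norm convergence in $\mathcal{K}_b$ similarly gives $(g_n \circ b)(x) \to h(x)$ pointwise. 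Hence $f_b = h \in \mathcal{K}_b$ with $\|f_b\|_{\mathcal{K}_b} = \|f\|_{\mathcal{K}}$.

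The only delicate point is the passage to the limit: one must combine the isometry obtained on the dense pre-Hilbert subspace with the pointwise convergence automatically supplied by the reproducing kernel structure, in order to identify the abstract limit in $\mathcal{K}_b$ with the concrete composition $f \circ b$. Everything else reduces to bookkeeping on kernel sections, and no regularity of $b$ beyond invertibility is required.
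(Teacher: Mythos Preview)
Your proposal is correct and follows essentially the same argument as the paper: approximate $f$ by finite linear combinations of kernel sections, use invertibility of $b$ to rewrite each $k(\cdot,y)\circ b$ as a kernel section $k_b(\cdot,b^{-1}(y))$, verify the norm identity to obtain a Cauchy sequence in $\mathcal{K}_b$, and identify the limit with $f_b$ via pointwise convergence. The only cosmetic difference is that you package the norm identity as an isometric embedding $\Phi$, whereas the paper computes the cross inner products $\langle f_{m,b},f_{m',b}\rangle_{\mathcal{K}_b}=\langle f_m,f_{m'}\rangle_{\mathcal{K}}$ directly.
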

\begin{proof}
Since $f\in\kset$, there exist $f_m = \sum_{j=1}^{J_m} a_{m,j} k(x_{m,j},\cdot)$ for $m\in\naturals, a_{m,j}\in\reals,$ and $x_{m,j}\in X$ such that $\lim_{m\to\infty}\norm{f_m - f}_{\kset} = 0$ and $\lim_{m\to\infty} f_m(x) = f(x)$ for all $x\in X$.
Now let $c=b^{-1}$, and define 
\[
f_{m,b}(x) = f_{m}(b(x)) = \sum_{j=1}^{J_m} a_{m,j} k(x_{m,j},b(x)) = \sum_{j=1}^{J_m} a_{m,j} k_b(c(x_{m,j}),x).
\]
Since $\kset_b = \overline{\{\sum_{j=1}^{J} a_{j} k_b(y_j,\cdot) : J\in\naturals, a_j\in\reals, y_j\in X\}}$,
each $f_{m,b} \in \kset_b$. 
Since $(f_m)_{m=1}^\infty$ is a Cauchy sequence, and $\inner{f_{m,b}}{f_{m',b}}_{\kset_b}^2 =  \sum_{j=1}^{J_m} a_{m,j} \sum_{j'=1}^{J_{m'}} a_{m',j'}k_b(c(x_{m,j}),c(x_{m',j'}))
= \inner{f_{m}}{f_{m'}}_{\kset}$ so that $\norm{ f_{m,b}-f_{m',b}}_{\kset_b} = \norm{ f_{m}-f_{m'}}_{\kset}$ for all $m,m'$,
the sequence $(f_{m,b})_{m=1}^\infty$ is also Cauchy and converges in $\norm{\cdot}_{\kset_b}$ to its pointwise limit $f_b$.
Since an RKHS is complete, $f_b \in \kset_b$.
\end{proof}
With our lemmata in hand, we now prove the advertised claim. %
Suppose $b$ is Lipschitz, $\det(\grad b(x))$ is never zero, and for each $f\in C^1\cap C_c$ and $\eps > 0$ there exists $f_\eps \in\kset$ such that $\sup_{x \in X} \max(\twonorm{\grad (f-f_\eps)(x)},  \twonorm{x (f-f_\eps)(x)})\leq \epsilon$.
Select any $h \in C^1\cap C_c$ and any $\eps > 0$.
By \lemref{composition-properties}, $h_c \in C^1\cap C_c$,
and hence there exists $h_{c,\eps} \in \kset$ such that 
$\sup_{y \in X} \max(\twonorm{\grad (h_c-h_{c,\eps})(y)},  \twonorm{y (h_c-h_{c,\eps})(y)})\leq \eps/\max(1,M_1(b))$.
Now define $h_\eps(x) = h_{c,\eps}(b(x))$ so that $h_\eps \in \kset_b$ by \lemref{composition-rkhs}.
We have $\sup_{x\in X} \twonorm{b(x)(h_\eps -h)(x)} \leq \sup_{y\in X} \twonorm{y(h_{c,\eps}-h_c)(y)} \leq \eps$, and  
\[
\sup_{x\in X}\twonorm{\grad h_\eps(x) - \grad h(x)} 
	= \sup_{x\in X}\twonorm{(\grad b(x)) ((\grad h_{c,\eps})(b(x)) - (\grad h_c)(b(x)))} 
	\leq M_1(b) \eps/\max(1,M_1(b)) \leq \eps.
\]

\section{Implementational Detail} \label{sup: benchmark}

\subsection{Benchmark Methods}

In this section we briefly describe the MED and SVGD methods used as our empirical benchmark, as well as the (block) coordinate descent method that was used in conjunction with Stein Points.

\subsubsection{Minimum Energy Designs} \label{subsec: MED describe}

The first class of method that we consider is due to \cite{Joseph2015}.
That work restricted attention to $X = [0,1]^d$ and constructed an energy functional:
\begin{eqnarray*}
\mathcal{E}_{\delta,P}(\{x_i\}_{i=1}^n) & := & \sum_{i \neq j} \left[ \frac{ p(x_i)^{-\frac{1}{2d}} p(x_j)^{-\frac{1}{2d}} }{\|x_i - x_j\|_2 } \right]^\delta 
\end{eqnarray*}
for some tuning parameter $\delta \in [1,\infty)$ to be specified.
In \cite{Joseph2017} the rule-of-thumb $\delta = 4d$ was recommended.
A heuristic argument in \cite{Joseph2015} suggests that the points $\{x_i\}_{i=1}^n$ that minimise $\mathcal{E}_{\delta,P}(\{x_i\}_{i=1}^n)$ form an empirical approximation that converges weakly to $P$.
The argument was recently made rigorous in \cite{Joseph2017}.

Minimisation of $\mathcal{E}_{\delta,P}$ does not require knowledge of how $p$ is normalised.
However, the actual minimisation of $\mathcal{E}_{\delta,P}$ can be difficult.
In \cite{Joseph2015} an extensible (greedy) method was considered, wherein the first point is selected as
$$
x_1 \in \argmax_{x \in X} \quad p(x)
$$
and subsequent points are selected as
\begin{eqnarray*}
x_n & \in & \argmin_{x \in X} \quad p(x)^{-\frac{\delta}{2d}} \sum_{i=1}^{n-1} \frac{p(x_i)^{-\frac{\delta}{2d}}}{\|x_i - x\|_2^\delta } .
\end{eqnarray*}
However, alternative approaches could easily be envisioned.
For instance, if $n$ were fixed then one could consider e.g. applying the Newton method for optimisation over the points $\{x_i\}_{i=1}^n$.

{\it Remark:} There is a connection between certain minimum energy methods and discrepancy measures in RKHS; see \citep{Sejdinovic2013}.

{\it Remark:} Several potential modifications to $\mathcal{E}_{\delta,P}$ were suggested in \cite{Joseph2017}, but that report appeared after this work was completed.
These could be explored in future work.

{\it Remark:} The MED objective function is typically numerically unstable due to the fact that the values of the density $p(\cdot)$ can be very small.
In contrast, our proposed methods operate on $\log p(\cdot)$ and its gradient, which is more numerically robust.

\subsubsection{Stein Variational Gradient Descent} \label{subset: SVGD describe}

The second method that we considered was due to \cite{Liu2016a,Liu2017} and recently generalised in \cite{Liu2017a}.
The idea starts by formulating a continuous version of gradient descent on $\mathcal{P}(X)$ with the Kullback-Leibler divergence $\text{KL}(\cdot || P)$ as a target.
To this end, restrict attention to $X = \mathbb{R}^d$ and consider the dynamics
\begin{eqnarray*}
S_f(x) & = & x + \epsilon f(x)
\end{eqnarray*}
parametrised by a function $f \in \mathcal{K}^d$.
For infinitesimal values of $\epsilon$ we can lift $S_f$ to a pushforward map on $\mathcal{P}(X)$; i.e. $Q \mapsto S_f Q$.
It was then shown in \cite{Liu2016a} that
\begin{eqnarray}
- \left. \frac{\mathrm{d}}{\mathrm{d} \epsilon} \mathrm{KL}(S_f Q || P) \right|_{\epsilon = 0} & = & \int \mathcal{T}_P f \; \mathrm{d}Q  \label{eq: KL diff}
\end{eqnarray}
where $\mathcal{T}_P$ is the Langevin Stein operator in Eqn. \ref{eq: langevin}.
Recall that this operator can be decomposed as $\mathcal{T}_P f = \sum_{j=1}^d \mathcal{T}_{P,j} f_j$ with $\mathcal{T}_{P,j} = \nabla_j + \nabla_j \log p$, where $\nabla_j$ denotes differentiation with respect to the $j$th coordinate in $X$.
Then the direction of fastest descent
\begin{eqnarray*}
f^*(\cdot) & := & \argmax_{f \in B(\mathcal{K}^d)} \; - \left. \frac{\mathrm{d}}{\mathrm{d} \epsilon} \mathrm{KL}(S_f Q || P) \right|_{\epsilon = 0}
\end{eqnarray*}
has a closed-form, with $j$th coordinate
\begin{eqnarray*}
f_j^*(\cdot ; Q) & = & \int \mathcal{T}_{P,j} k(x,\cdot) \; \mathrm{d}Q(x) .
\end{eqnarray*}
The algorithm proposed in \cite{Liu2016a} discretises this dynamics in both space $X$, through the use of $n$ points, and in time, through the use of a positive step size $\epsilon > 0$, leading to a sequence of empirical measures based on point sets $\{x_i^m\}_{i=1}^n$ for $m \in \mathbb{N}$.
Thus, given an initialisation $\{x_i^0\}_{i=1}^n$ of the points, at iteration $m \geq 1$ of the algorithm we update 
\begin{eqnarray*}
x_i^m & = & x_i^{m-1} + \epsilon f^*(x_i^{m-1} ; Q_n^m) 
\end{eqnarray*}
in parallel, where
\begin{eqnarray*}
Q_n^m & = & \frac{1}{n} \sum_{i=1}^n \delta_{x_i^{m-1}}
\end{eqnarray*}
is the empirical measure, at a computational cost of $O(n)$.
The output is the empirical measure $Q_n^m$.

{\it Remark:} The step size $\epsilon$ is a tuning parameter of the method.

{\it Remark:} At present there are not theoretical guarantees for this method.
Initial steps toward this goal are presented in \cite{Liu2017}.

\subsubsection{Block Coordinate Descent} \label{subset: block describe}

The Stein Point methods developed in the main text can be adapted to return a fixed number $n$ of points for a given finite computational budget by first iteratively generating a size $n$ point set, as described in the main text, and then performing (block) coordinate descent on this point set.
The (block) coordinate descent procedure is now described:

Fix an initial configuration $\{x_i^0\}_{i=1}^n$.
Then at iteration $m \geq 1$ of the algorithm, perform the following sequence of operations:
\begin{eqnarray*}
\forall i & & x_i^m \leftarrow x_i^{m-1} \quad \text{then:} \\
\text{for } i = 1,\dots,n & & x_i^m \leftarrow  \argmin_{x \in X} D_{\mathcal{K}_0,P}(\{x_j^m\}_{j \neq i} \cup \{x\})
\end{eqnarray*}
The output is the point set $\{x_i^m\}_{i=1}^n$.

{\it Remark:} The block coordinate descent method can equally be applied to MED; this was not considered in our empirical work.

{\it Remark:} Any numerical optimisation method can be used to solve the global optimisation problem in the inner loop.
In this work we considered the same three candidates in the main text; Monte Carlo, Nelder-Mead and grid search.
These are described next.

\subsection{Numerical Optimisation Methods}  \label{sup: optimisation methods}

Computation of the $n$th term in the proposed Stein Point sequences, given the previous $n-1$ terms, requires that a global optimisation is performed over $x_n \in X$.
The same is true for both MED and KSD in the coordinate descent context.
For all experiments reported in the main text, three different numerical methods were considered for this task, denoted \verb+NM+, \verb+MC+, \verb+GS+ in the main text.
In this section we provide full details for how these methods were implemented.

\subsubsection{Nelder-Mead}  \label{subsec: optim NM}

The Nelder-Mead (\verb+NM+) method \cite{Nelder1965} proceeds as in Algorithm \ref{alg: nm}.
The function $\mathrm{NM}$ takes the following inputs: $f$ is the objective function; $t$ is the iteration count; $n_{\mathrm{init}}$ is the number of initial points to be drawn from a proposal distribution; $n_{\mathrm{delay}}$ is the number of iterations after which the proposal distribution becomes adaptive; $\mu_0$ and $\Sigma_0$ are the mean vector and the covariance matrix of the initial proposal distribution; $\{x^{\mathrm{curr}}_j\}_{j=1}^{n_{\mathrm{curr}}}$ is the set of existing points; $\lambda$ is the variance of each mixture component of the adaptive proposal distribution; $l$ and $u$ are the lower- and upper-bounds of the search space. The non-adaptive initial proposal distribution is a truncated multivariate Gaussian $\mathcal{N}(\mu_0, \Sigma_0)$ whose support is bounded by the hypercube $[l,u]$. The adaptive proposal distribution is a truncated Gaussian mixture $\Pi(\{x^{\mathrm{curr}}_j\}_{j=1}^{n_{\mathrm{curr}}}, \lambda) := \frac{1}{n_{\mathrm{curr}}-1}\sum_{j=1}^{n_{\mathrm{curr}}-1}\mathcal{N}(x^{\mathrm{curr}}_j,\lambda I)$ with $\lambda > 0$ and support $[l,u]$. The expression $\mathrm{NelderMead}_{x}\left[f(x), x^{\mathrm{init}}_{i}, l, u\right]$ denotes the standard Nelder-Mead procedure for objective function $f$, initial point $x^{\mathrm{init}}_{i}$, and bound constraint $x \in [l,u]$. We use the symbol $\leftlsquigarrow$ to denote the assignment of a realised independent draw. The operator $\mathrm{trunc}_{l}^{u}[\cdot]$ bounds the support of a distribution by the hypercube $[l,u]$.

\begin{algorithm}[h!]
\caption{Nelder-Mead}
\label{alg: nm}
\begin{algorithmic}[1]
\INPUT $f$, $t$, $n_{\mathrm{init}}$, $n_{\mathrm{delay}}$, $\mu_0$, $\Sigma_0$, $\{x^{\mathrm{curr}}_j\}_{j=1}^{n_{\mathrm{curr}}}$, $\lambda$, $l$, $u$
\OUTPUT $x^*$
\FUNCTION{NM}
\FOR{$i \gets 1:n_{\mathrm{init}}$}
\IF{$t \le n_{\mathrm{delay}}$}
\STATE $x^{\mathrm{init}}_{i} \leftlsquigarrow \mathrm{trunc}_{l}^{u}\left[\mathcal{N}(\mu_0, \Sigma_0)\right]$
\ELSE
\STATE $x^{\mathrm{init}}_{i} \leftlsquigarrow \mathrm{trunc}_{l}^{u}\left[\Pi(\{x^{\mathrm{curr}}_j\}_{j=1}^{n_{\mathrm{curr}}}, \lambda)\right]$
\ENDIF
\STATE $x^{\mathrm{local}}_{i} \gets \mathrm{NelderMead}_{x}\left[f(x), x^{\mathrm{init}}_{i}, l, u\right]$
\ENDFOR
\STATE $i^* \gets \argmin_{i \in\{1 \ldots n_{\mathrm{init}}\}} f(x^{\mathrm{local}}_{i})$
\STATE $x^* \gets x^{\mathrm{local}}_{i^*}$
\ENDFUNCTION
\end{algorithmic}
\end{algorithm}

\FloatBarrier

\subsubsection{Monte Carlo}  \label{subsec: optim MC}

The Monte Carlo (\verb+MC+) optimisation method proceeds as in Algorithm \ref{alg: mc}.
The function $\mathrm{MC}$ takes the following inputs: $f$ is the objective function; $t$ is the iteration count; $n_{\mathrm{test}}$ is the number of test points to be drawn from a proposal distribution; $n_{\mathrm{delay}}$ is the number of iterations after which the proposal distribution becomes adaptive; $\mu_0$ and $\Sigma_0$ are the mean vector and the covariance matrix of the initial proposal distribution; $\{x^{\mathrm{curr}}_j\}_{j=1}^{n_{\mathrm{curr}}}$ is the set of existing points; $\lambda$ is the variance of each mixture component of the adaptive proposal distribution; $l$ and $u$ are the lower- and upper-bounds of the search space. The non-adaptive initial proposal distribution is a truncated multivariate Gaussian $\mathcal{N}(\mu_0, \Sigma_0)$ whose support is bounded by the hypercube $[l,u]$. The adaptive proposal distribution is a truncated Gaussian mixture $\Pi(\{x^{\mathrm{curr}}_j\}_{j=1}^{n_{\mathrm{curr}}}, \lambda) := \frac{1}{n_{\mathrm{curr}}-1}\sum_{j=1}^{n_{\mathrm{curr}}-1}\mathcal{N}(x^{\mathrm{curr}}_j,\lambda I)$ with $\lambda > 0$ and support $[l,u]$.

\begin{algorithm}[h!]
\caption{Monte Carlo}
\label{alg: mc}
\begin{algorithmic}[1]
\INPUT $f$, $t$, $n_{\mathrm{test}}$, $n_{\mathrm{delay}}$, $\mu_0$, $\Sigma_0$, $\{x^{\mathrm{curr}}_j\}_{j=1}^{n_{\mathrm{curr}}}$, $\lambda$, $l$, $u$
\OUTPUT $x^*$
\FUNCTION{MC}
\IF{$t \le n_{\mathrm{delay}}$}
\STATE $\{x^{\mathrm{test}}_{i}\}_{i=1}^{n_{\mathrm{test}}} \leftlsquigarrow \mathrm{trunc}_{l}^{u}\left[\mathcal{N}(\mu_0, \Sigma_0)\right]$
\ELSE
\STATE $\{x^{\mathrm{test}}_{i}\}_{i=1}^{n_{\mathrm{test}}} \leftlsquigarrow \mathrm{trunc}_{l}^{u}\left[\Pi(\{x^{\mathrm{curr}}_j\}_{j=1}^{n_{\mathrm{curr}}}, \lambda)\right]$
\ENDIF
\STATE $i^* \gets \argmin_{i \in\{1 \ldots n_{\mathrm{test}}\}} f(x^{\mathrm{test}}_{i})$
\STATE $x^* \gets x^{\mathrm{test}}_{i^*}$
\ENDFUNCTION
\end{algorithmic}
\end{algorithm}

\FloatBarrier

\subsubsection{Grid Search}  \label{subsec: optim GS}

The grid search (\verb+GS+) optimisation method proceeds as in Algorithm \ref{alg: gs}.
The function $\mathrm{GS}$ takes the following inputs: $f$ is the objective function; $t$ is the iteration count; $l$ and $u$ are the lower- and upper-bounds of the grid; $n_0$ is the initial grid size.

\begin{algorithm}[h!]
\caption{Grid Search}
\label{alg: gs}
\begin{algorithmic}[1]
\INPUT $f$, $t$, $l$, $u$, $n_0$
\OUTPUT $x^*$
\FUNCTION{GS}
\STATE $n_{\mathrm{grid}} \gets n_0 + \mathrm{Round}(\sqrt{t})$
\STATE $\delta_{\mathrm{grid}} \gets (u - l) / (n_{\mathrm{grid}} - 1)$
\STATE $X_{\mathrm{grid}} \gets \{l, l + \delta_{\mathrm{grid}}, \ldots, u\}^d$
\STATE $x^* \gets \argmin_{x \in X_{\mathrm{grid}}} f(x)$
\ENDFUNCTION
\end{algorithmic}
\end{algorithm}

\FloatBarrier

\subsection{Remark on Application to a Reference Point Set}\label{subsec: fixed set}

It is interesting to comment on the behaviour of our proposed methods in the case where $X$ is a finite set or the global optimisation over $X$ is replaced by a discrete optimisation over a pre-determined fixed set $Y = \{y_i\}_{i=1}^N \subseteq X$.
In this case it can be shown that:
\begin{itemize}
\item The algorithm after $n$ iterations will have selected $n$ points $\{y_{\pi(i)}\}_{i=1}^n$ with replacement from $Y$.
(Here $\pi(i)$ indexes the point that was selected at iteration $i$ of the algorithm.)
\item The empirical measure $\frac{1}{n} \sum_{i=1}^n \delta_{y_{\pi(i)}}$ can be expressed as $\sum_{i=1}^N w_i y_i$ for some weights $w_i$.
\item The weights $w_i$ converge to 
$$
(*) \; = \; \argmin_{\substack{w \geq 0 \\ w_1 + \dots + w_N = 1 }} \sqrt{ \frac{1}{N^2} \sum_{i,j = 1}^N w_i w_j k_0(y_i,y_j) } .
$$
\item At iteration $n$, it holds that $D_{\mathcal{K}_0,P}(\{y_{\pi(i)}\}_{i=1}^n) = (*) + O(\sqrt{\log(n) / n})$.
\end{itemize}
Thus in this scenario the algorithms that we have proposed act to ensure that these points are optimally weighted in the sense just described.

\section{Experimental Protocol and Additional Numerical Results}  \label{sup: additional numerics}

This section contains additional numerical results that elaborate on the three experiments reported in the main text.

\subsection{Gaussian Mixture Test} \label{subsec: ad numerics GMM}

Recall from the main text that the kernels $k_1$, $k_2$ and $k_3$ contain either one or two hyper-parameters that must be selected.
For each of the methods (a)-(f) reported in Figure \ref{fig: GMM} in the main text we optimised these parameters over a discrete set, with respect to an objective function of $W_P$ based on a point set of size $n = 100$ and the Nelder-Mead optimisation method. The set of possible values for $\alpha$ was $\{0.1\eta, 0.5\eta, \eta, 2\eta, 4\eta, 8\eta\}$, where $\eta$ is a problem dependent ``base scale" and chosen to be 1 for the Gaussian mixture test. The set of possible values for $\beta$ was $\{-0.1, -0.3, -0.5, -0.7, -0.9\}$.
The sensitivity of the reported results to the variation in hyper-parameters is shown, for the Gaussian mixture test, in Figure \ref{fig: GMM params}.
Point sets obtained under representatives of each method class are shown in Figure \ref{fig: GMM points v2}.

For all the global optimisation methods we imposed a bounding box $(-5, 5) \times (-5,5)$; for the Nelder-Mead method, we set $n_{\mathrm{init}} = 3$, $n_{\mathrm{delay}} = 20$, $\mu_0 = (0, 0)$, $\Sigma_0 = 25I$, and $\lambda = 1$; for the Monte Carlo method, we set $n_{\mathrm{test}} = 20$, $n_{\mathrm{delay}} = 20$, $\mu_0 = (0, 0)$, $\Sigma_0 = 25I$, and $\lambda = 1$; for the grid search, we set $n_0 = 100$.

For MED the tuning parameter $\delta$ was considered for $\delta = 4$, $\delta = 8$ or $\delta = 16$, with $\delta = 4d = 8$ being the recommendation in \cite{Joseph2017}.

For SVGD we set the initial point-set to be an equally spaced rectangular grid over the bounding box. Following \cite{Liu2016a}, the step-size $\epsilon$ for SVGD was determined by AdaGrad with a master step-size of 0.1 and a momentum factor of 0.9.

\begin{figure}[h]
\centering
\begin{subfigure}[b]{0.33\textwidth}
		\centering
        \includegraphics[width=\textwidth]{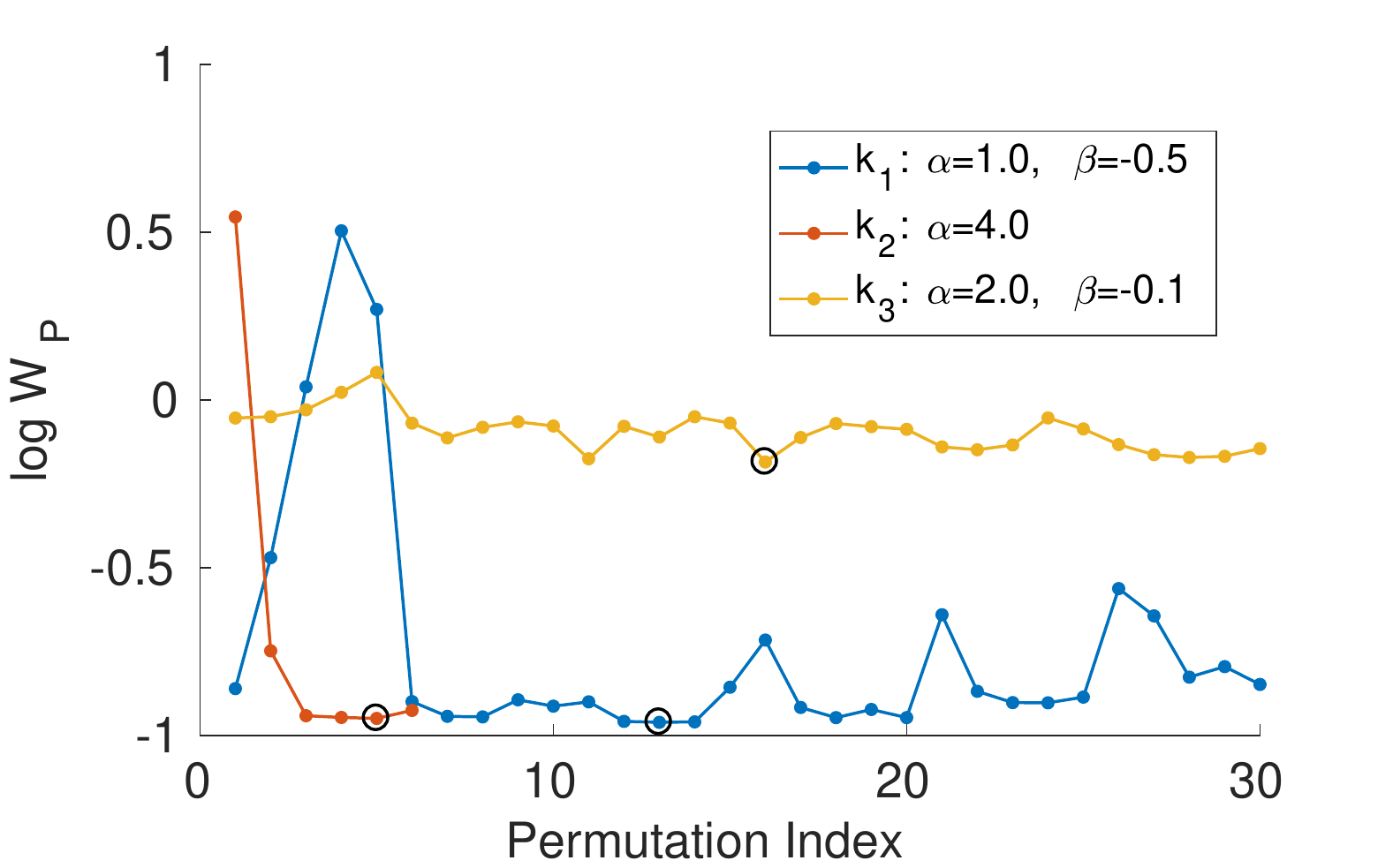}
        \caption{Stein Points (Greedy)}
        \label{fig: GMM params greedy}
    \end{subfigure}
\begin{subfigure}[b]{0.33\textwidth}
		\centering
        \includegraphics[width=\textwidth]{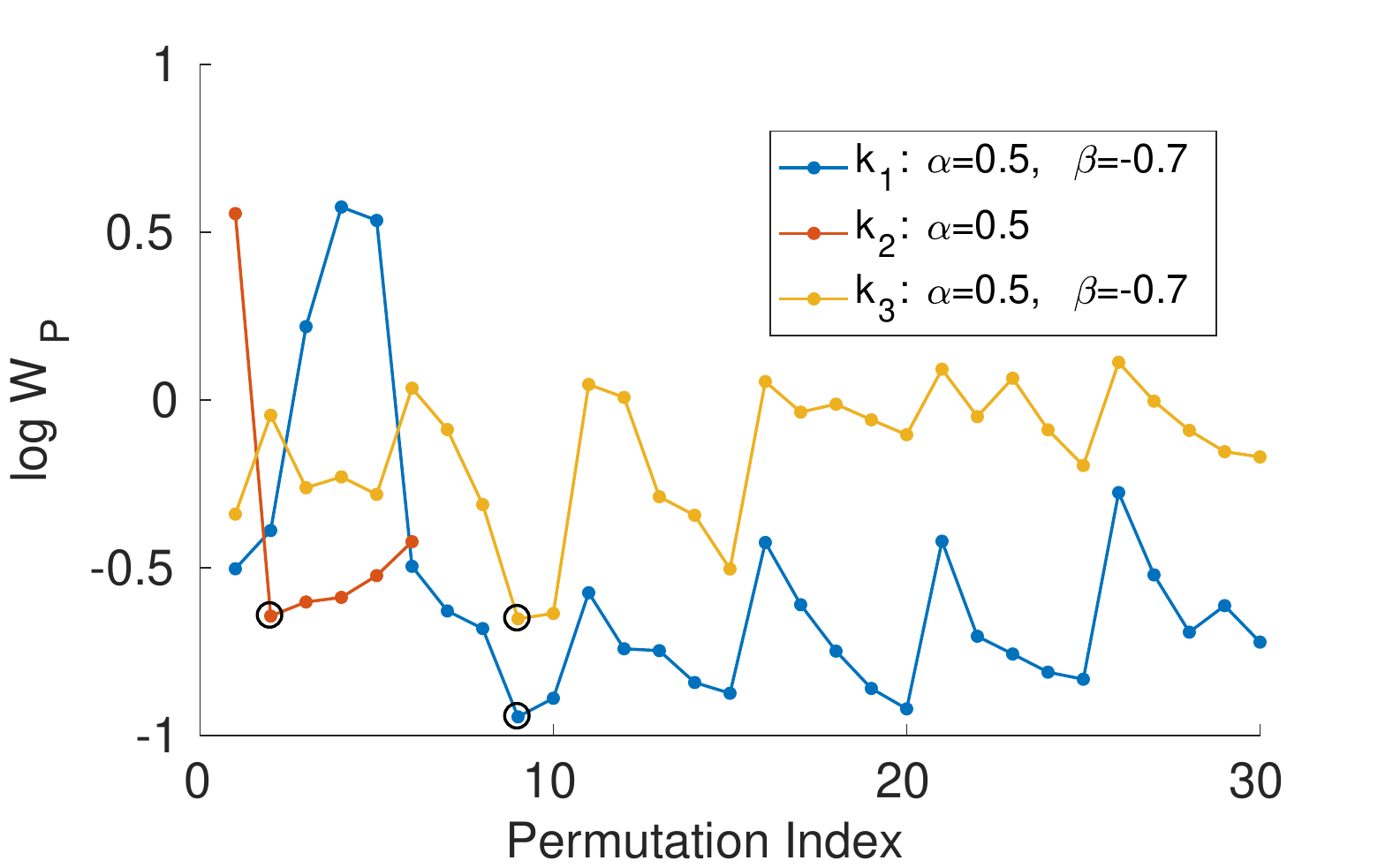}
        \caption{Stein Points (Herding)}
        \label{fig: GMM params herding}
    \end{subfigure}
\begin{subfigure}[b]{0.33\textwidth}
		\centering
        \includegraphics[width=\textwidth]{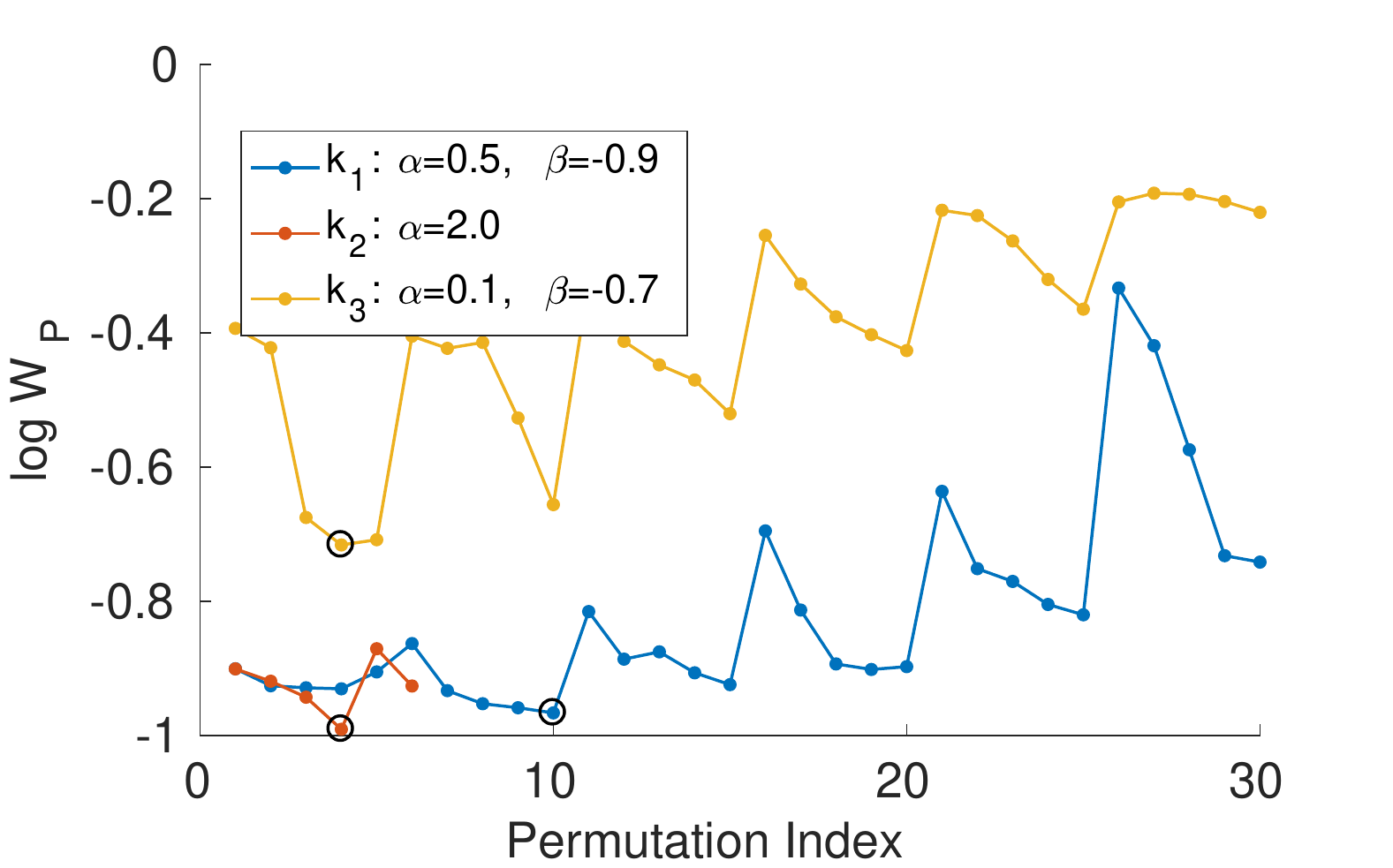}
        \caption{SVGD}
        \label{fig: GMM params SVGD}
    \end{subfigure}
\caption{Kernel parameter selection results for the Gaussian mixture test.
Parameters $\alpha,\beta$ in the kernels $k_1$, $k_2$, $k_3$ were optimised over a discrete set with respect to the Wasserstein distance $W_P$ for a point set of size $n = 100$.
The values $\log W_P$ (y-axis) are shown for all different configurations of parameters (x-axis) considered.
Optimal parameter configurations are circled and detailed in the legend.
}
\label{fig: GMM params}
\end{figure}

\begin{figure}[h]
\centering
\includegraphics[width=0.6\textwidth]{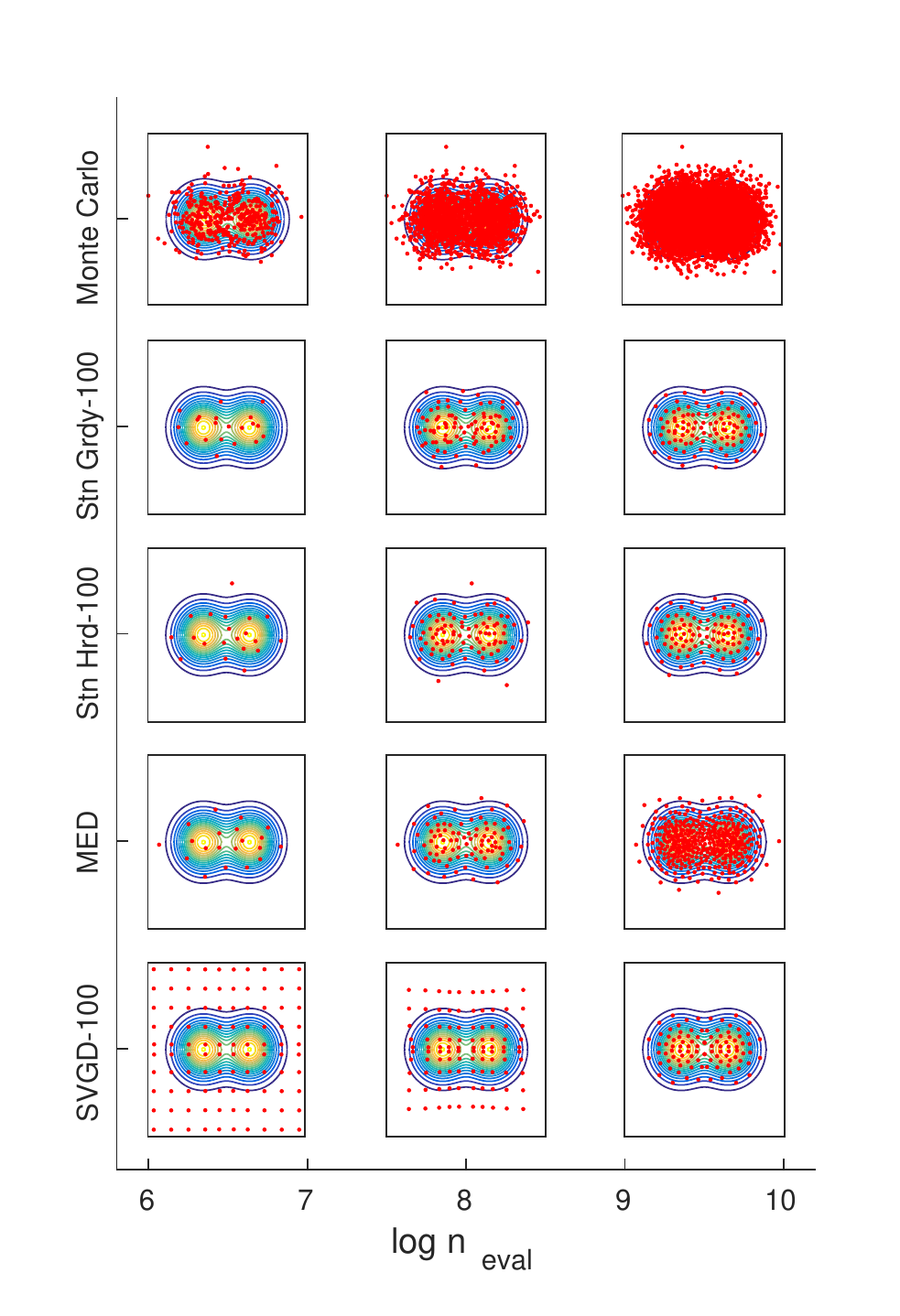}
\caption{Typical point sets obtained in the Gaussian mixture test, where the budget-constrained methods Stein Greedy-100 (Stn Grdy-100) and Stein Herding-100 (Stn Hrd-100) are considered. [Here each row corresponds to an algorithm, and each column corresponds to a chosen level of computational cost. The left border of each sub-plot is aligned to the exact value of $\log n_{\mathrm{eval}}$ spent to obtain each point-set.]
}
\label{fig: GMM points v2}
\end{figure}

\subsection{Gaussian Process Test} \label{subsec: ad numerics GP}

For the Gaussian process test, the base scale $\eta$ is also set to 1. The sensitivity of results to the selection of kernel parameters was reported in Figure \ref{fig: GP params}.
Point sets obtained under representatives of each method class are shown in Figures \ref{fig: GP points} and \ref{fig: GP points v2}.
Detailed results for each method considered are contained in Figure \ref{fig: GP}.

For all the global optimisation methods we imposed a bounding box of $(-5, 5) \times (-13,-7)$; for the Nelder-Mead method, we set $n_{\mathrm{init}} = 3$, $n_{\mathrm{delay}} = 20$, $\mu_0 = (0, -10)$, $\Sigma_0 = 25I$, and $\lambda = 1$; for the Monte Carlo method, we set $n_{\mathrm{test}} = 20$, $n_{\mathrm{delay}} = 20$, $\mu_0 = (0, -10)$, $\Sigma_0 = 25I$, and $\lambda = 1$; for the grid search, we set $n_0 = 100$.

For SVGD we set the initial point-set to be an equally spaced rectangular grid over the bounding box. Following \cite{Liu2016a}, the step-size $\epsilon$ for SVGD was determined by AdaGrad with a master step-size of 0.1 and a momentum factor of 0.9.

\begin{figure}[h]
\centering
\begin{subfigure}[b]{0.33\textwidth}
		\centering
        \includegraphics[width=\textwidth]{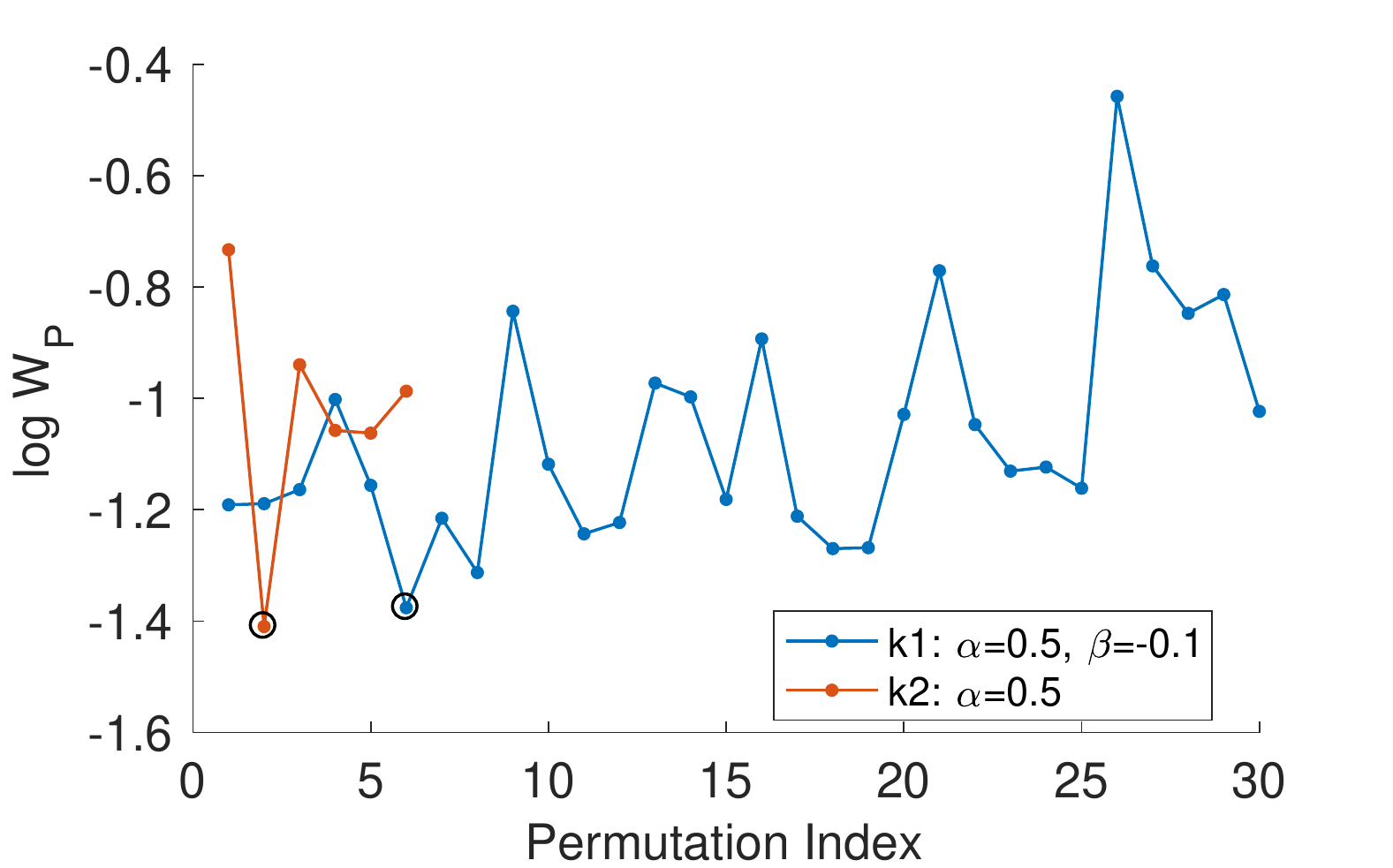}
        \caption{Stein Points (Greedy)}
        \label{fig: GP params greedy}
    \end{subfigure}
\begin{subfigure}[b]{0.33\textwidth}
		\centering
        \includegraphics[width=\textwidth]{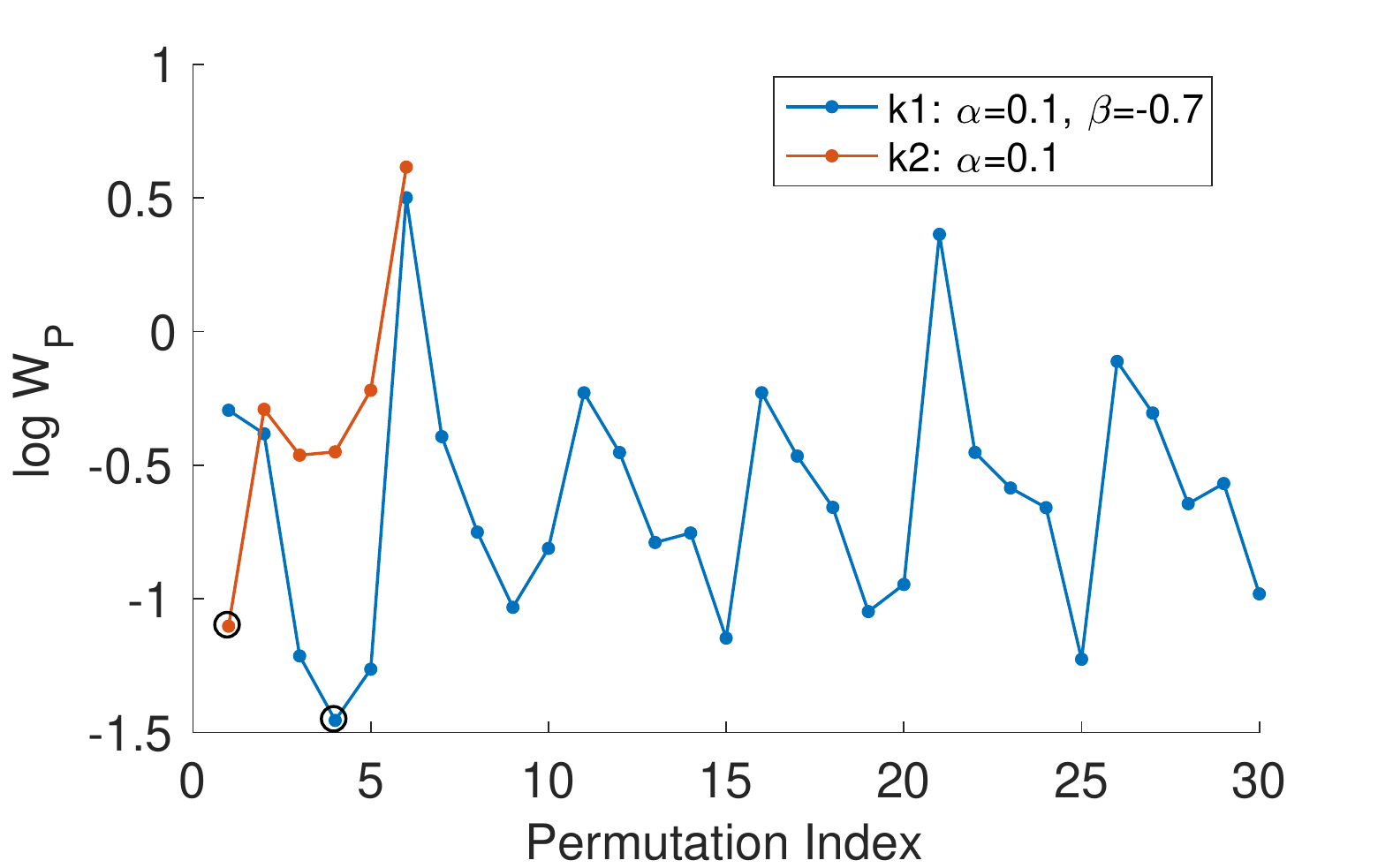}
        \caption{Stein Points (Herding)}
        \label{fig: GP params herding}
    \end{subfigure}
\begin{subfigure}[b]{0.33\textwidth}
		\centering
        \includegraphics[width=\textwidth]{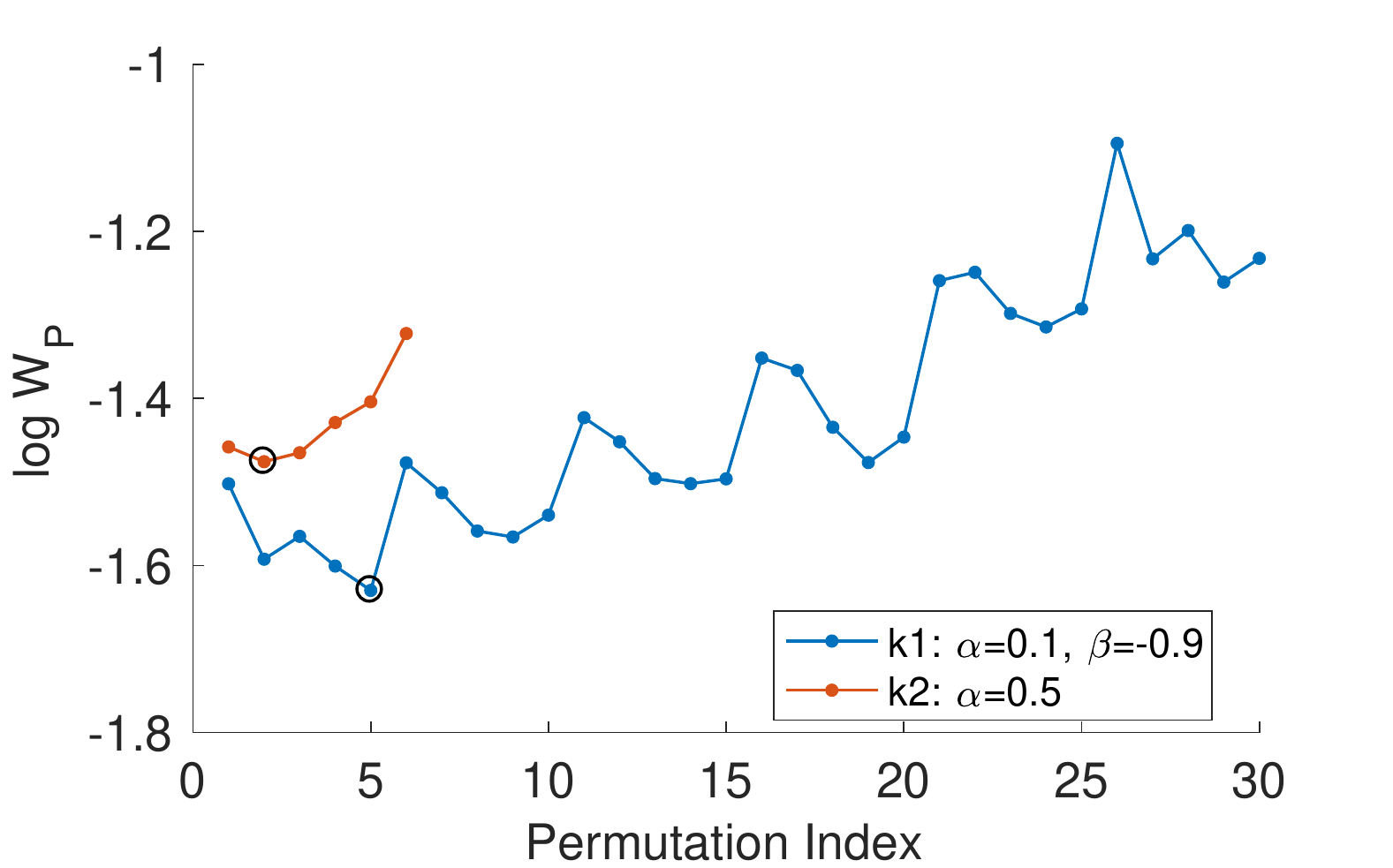}
        \caption{SVGD}
        \label{fig: GP params SVGD}
    \end{subfigure}
\caption{Kernel parameter selection results for the Gaussian process test.
Parameters $\alpha,\beta$ in the kernels $k_1$, $k_2$, $k_3$ were optimised over a discrete set with respect to the Wasserstein distance $W_P$ for a point set of size $n = 100$.
The values $\log W_P$ (y-axis) are shown for all different configurations of parameters (x-axis) considered.
Optimal parameter configurations are circled and detailed in the legend.}
\label{fig: GP params}
\end{figure}

\begin{figure}[h]
\centering
\includegraphics[width=0.6\textwidth,clip,trim = 0cm 6.5cm 0cm 0cm]{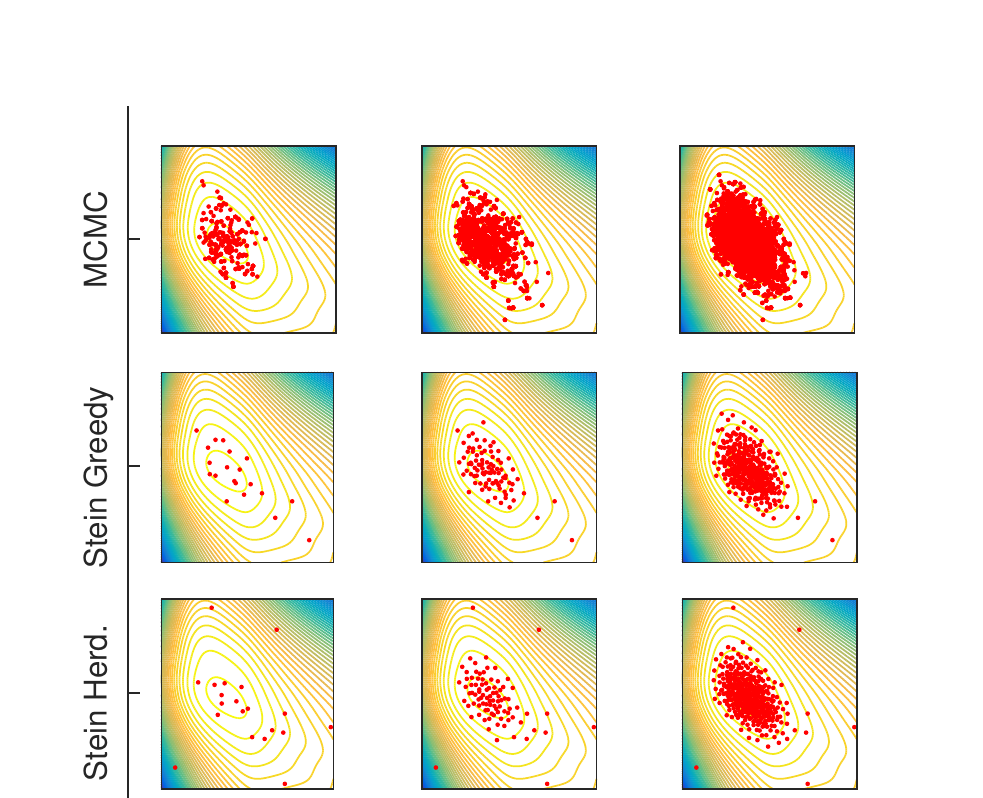}
\includegraphics[width=0.6\textwidth,clip,trim = 0cm 0cm 0cm 10.5cm]{figures/gensprpts_gp.pdf}
\caption{Typical point sets obtained in the Gaussian process test.
[Here each row corresponds to an algorithm, and each column corresponds to a chosen level of computational cost.
The left border of each sub-plot is aligned to the exact value of $\log n_{\mathrm{eval}}$ spent to obtain each point-set.
MCMC represents a random-walk Metropolis algorithm with a proposal distribution optimised according to acceptance rate.]}
\label{fig: GP points}
\end{figure}

\begin{figure}[h]
\centering
\includegraphics[width=0.6\textwidth]{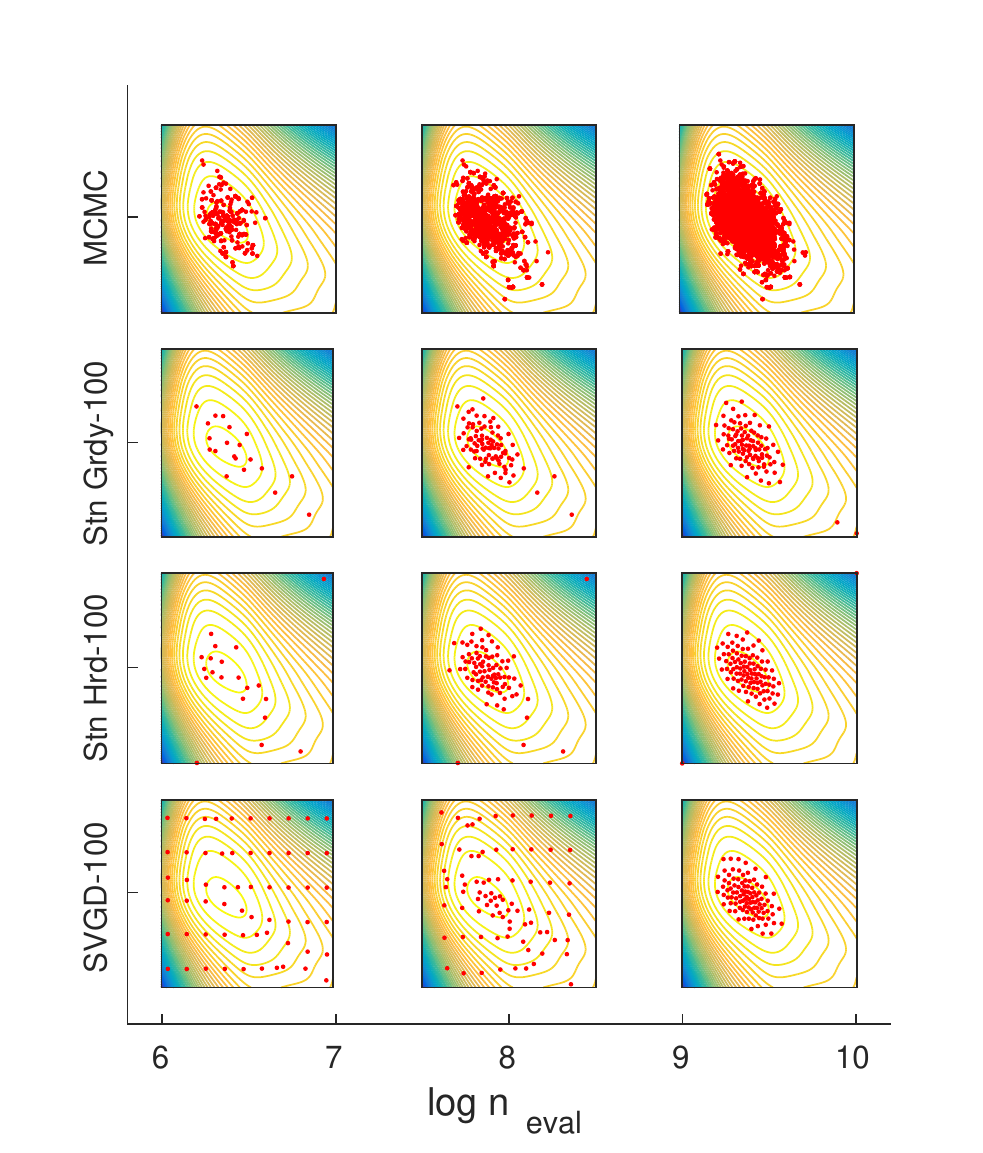}
\caption{Typical point sets obtained in the Gaussian process test, where the budget-constrained methods Stein Greedy-100 (Stn Grdy-100) and Stein Herding-100 (Stn Hrd-100) are considered. [Here each row corresponds to an algorithm, and each column corresponds to a chosen level of computational cost. The left border of each sub-plot is aligned to the exact value of $\log n_{\mathrm{eval}}$ spent to obtain each point-set.
MCMC represents a random-walk Metropolis algorithm with a proposal distribution optimised according to acceptance rate.]}
\label{fig: GP points v2}
\end{figure}

\begin{figure}[h]
\centering
\begin{subfigure}[b]{0.32\textwidth}
		\centering
        \includegraphics[width=\textwidth]{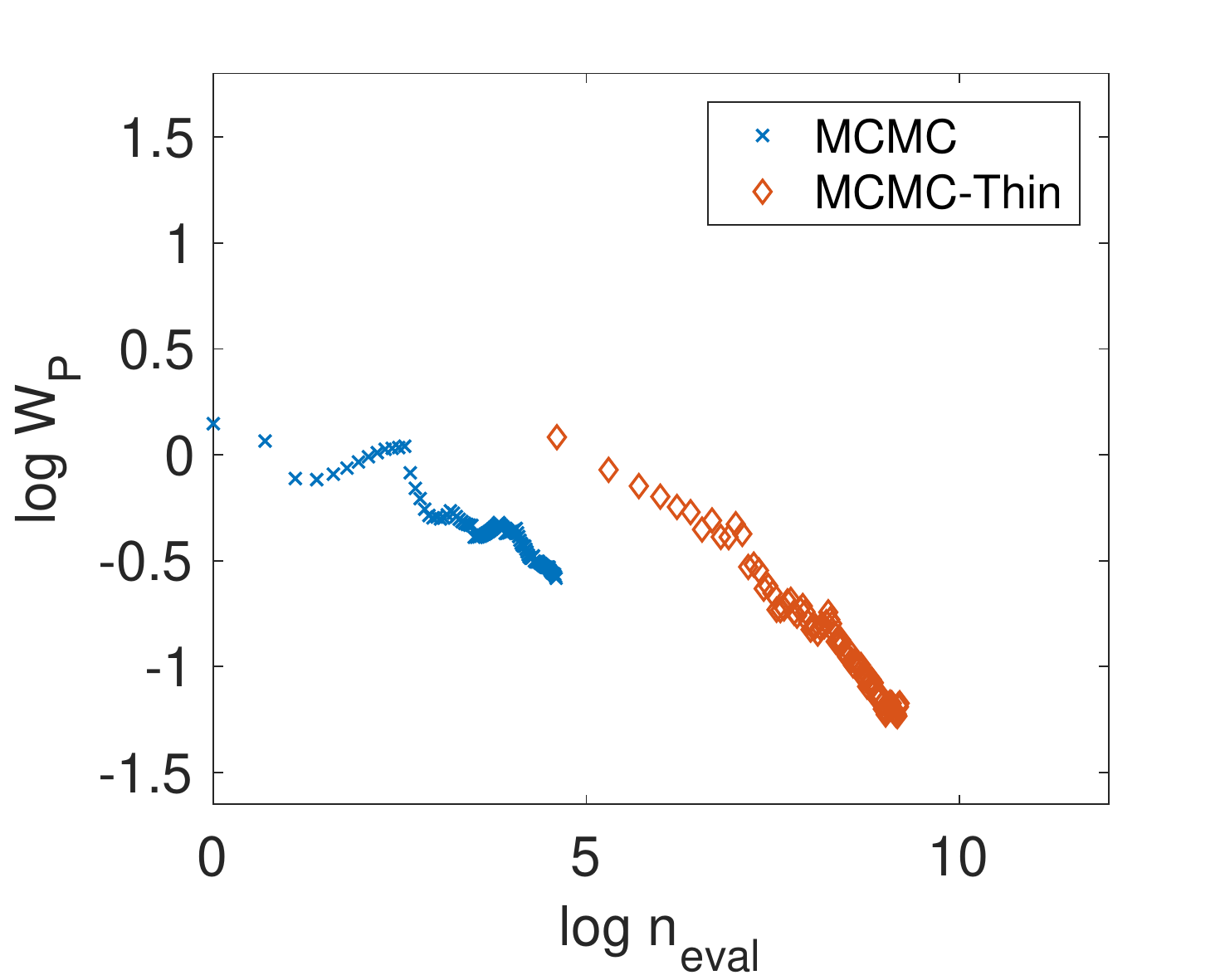}
        \caption{Monte Carlo}
        \label{fig: GP MC}
    \end{subfigure}
\begin{subfigure}[b]{0.32\textwidth}
		\centering
        \includegraphics[width=\textwidth]{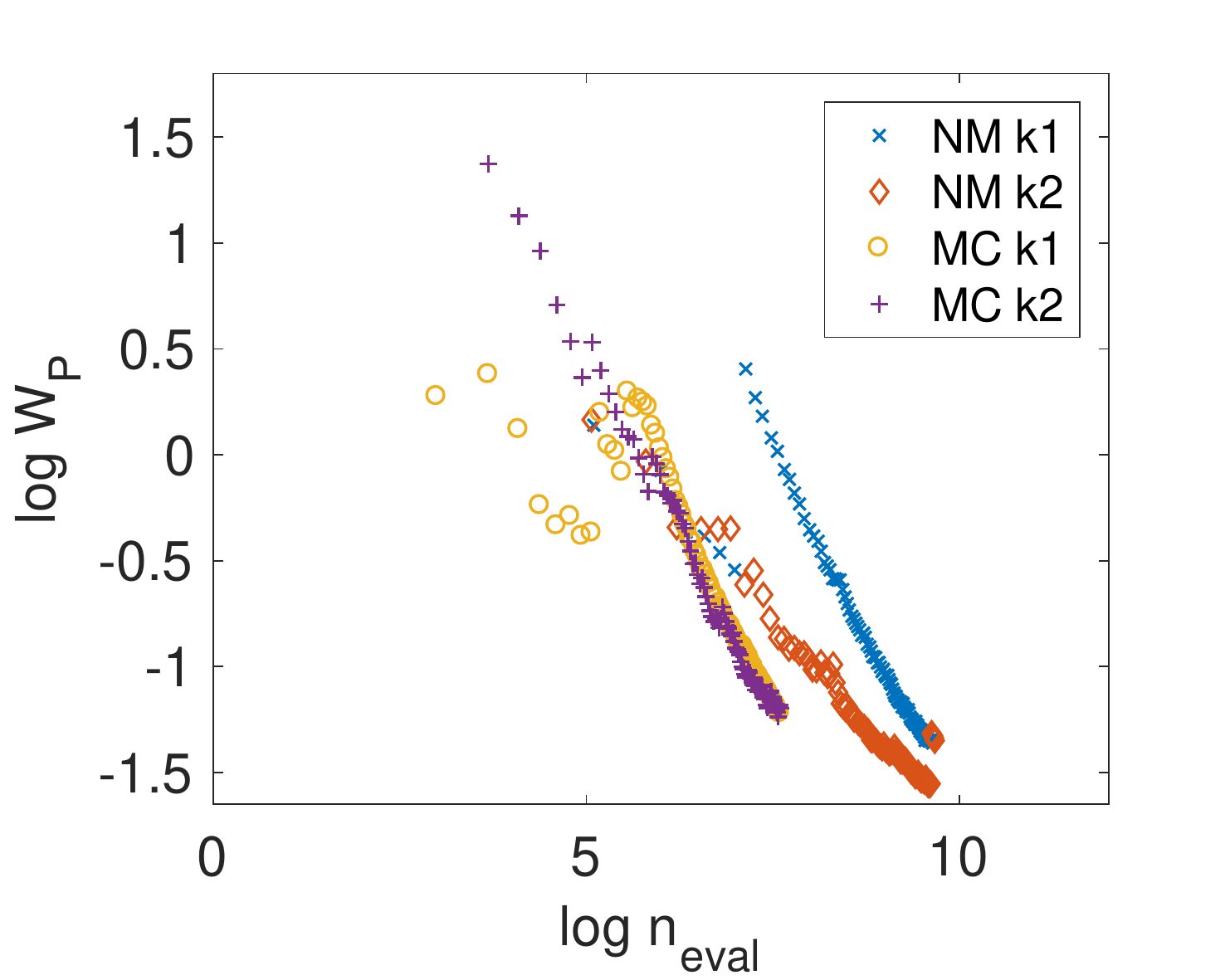}
        \caption{Stein Points (Greedy)}
        \label{fig: GP greedy}
    \end{subfigure}

\begin{subfigure}[b]{0.32\textwidth}
		\centering
        \includegraphics[width=\textwidth]{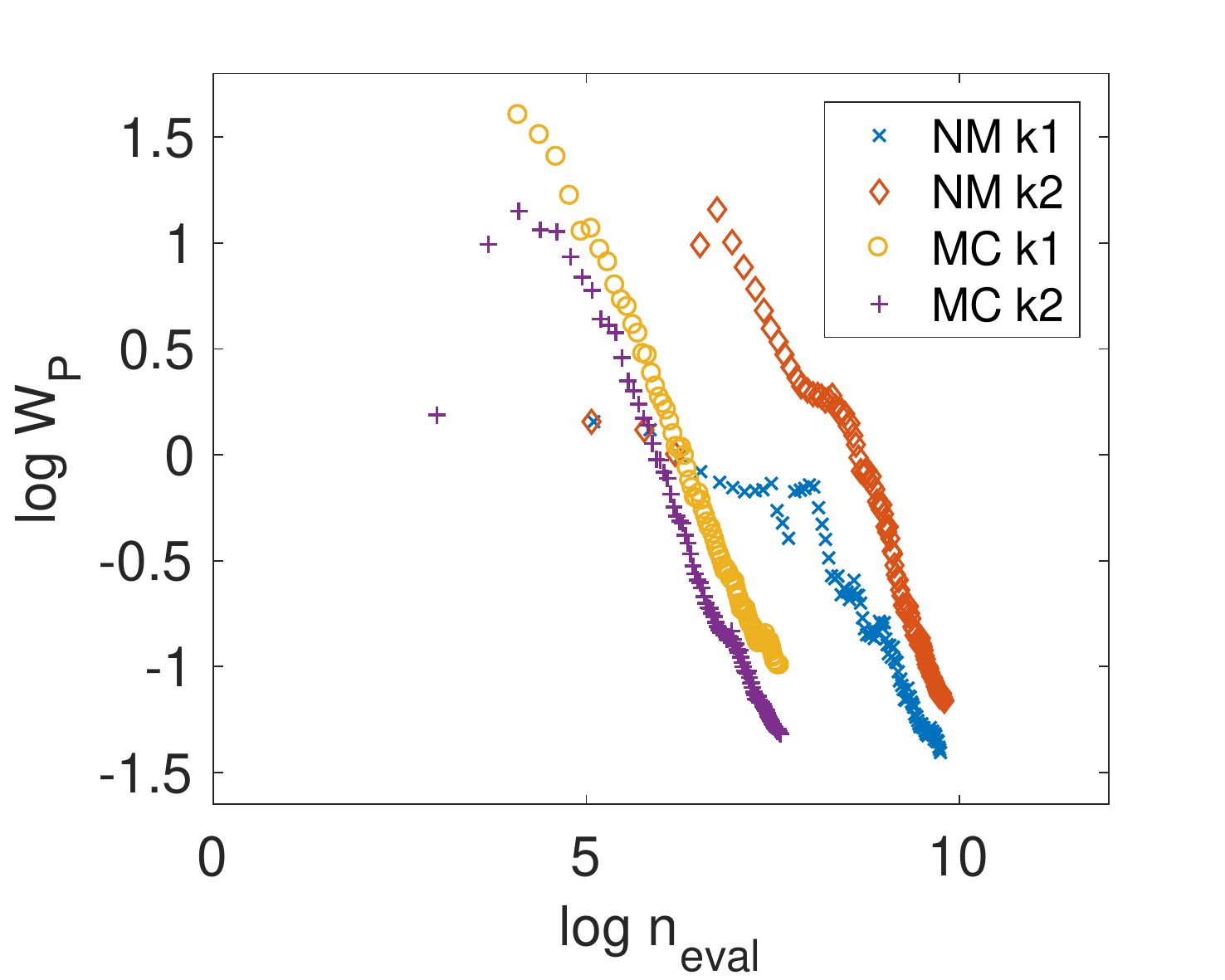}
        \caption{Stein Points (Herding)}
        \label{fig: GP herding}
    \end{subfigure}
\begin{subfigure}[b]{0.32\textwidth}
		\centering
        \includegraphics[width=\textwidth]{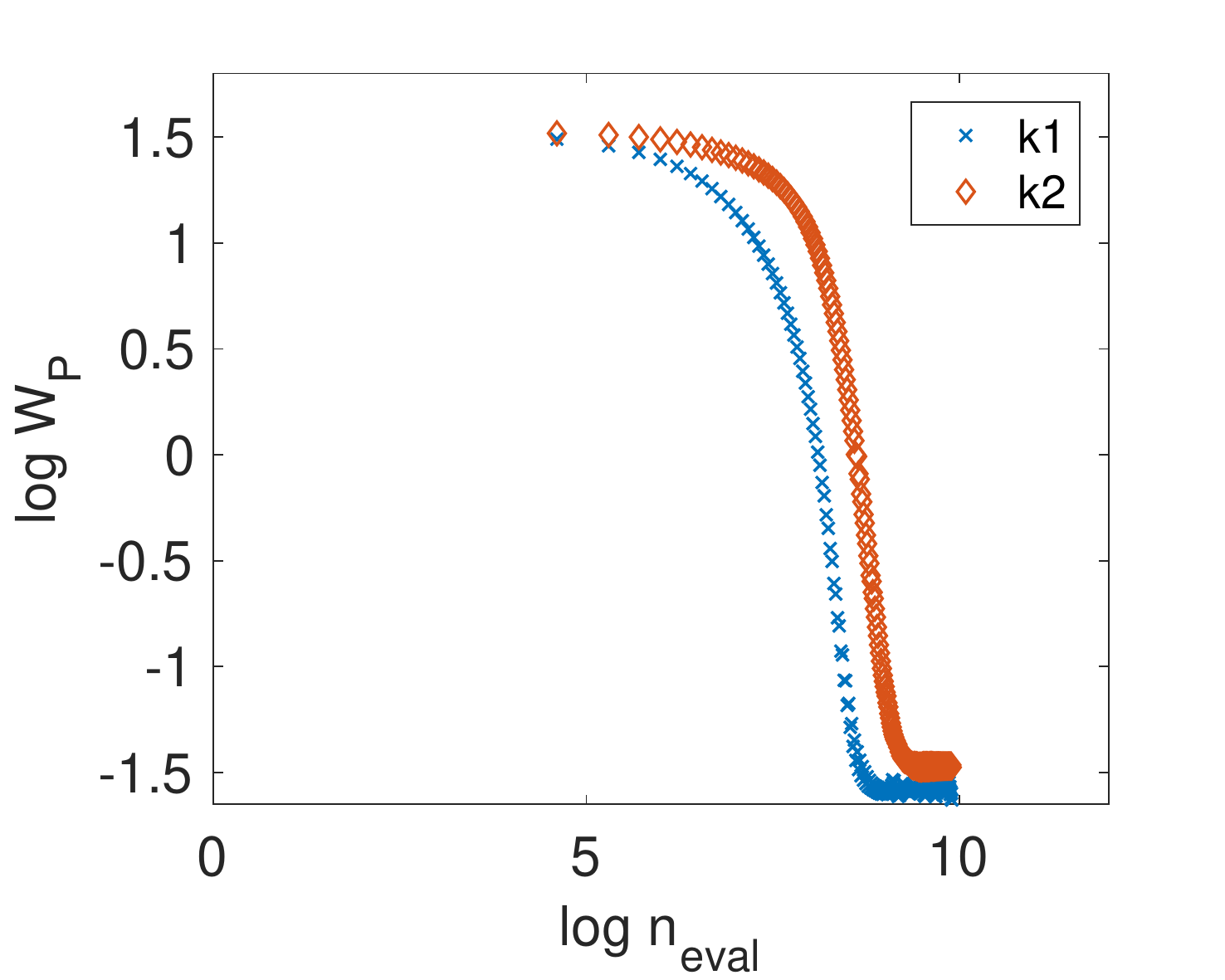}
        \caption{SVGD}
        \label{fig: GP SVGD}
    \end{subfigure}
\caption{Results for the Gaussian process test.
[Here $n = 100$.
x-axis: log of the number $n_{\text{eval}}$ of model evaluations that were used.
y-axis: log of the Wasserstein distance $W_P(\{x_i\}_{i=1}^n)$ obtained.
Kernel parameters $\alpha$, $\beta$ were optimised according to $W_P$.
In sub-figure \ref{fig: GP MC}, MCMC represents a random-walk Metropolis algorithm with a proposal distribution optimised according to acceptance rate.
MCMC-Thin represents a thinned chain by taking every 100th observation.]}
\label{fig: GP}
\end{figure}

\subsection{IGARCH Test} \label{subsec: ad numerics IGARCH}

For the IGARCH test, we choose the base scale $\eta$ to be 1e-5. The sensitivity of results to the selection of kernel parameters was reported in Figure \ref{fig: IGARCH params}.
Point sets obtained under representatives of each method class are shown in Figures \ref{fig: IGARCH points} and \ref{fig: IGARCH points v2}.
Detailed results for each method considered are contained in Figure \ref{fig: IGARCH}.

For all the global optimisation methods we impose a bounding box of $(0.002, 0.04) \times (0.05, 0.2)$; for the Nelder-Mead method, we set $n_{\mathrm{init}} = 3$, $n_{\mathrm{delay}} = 20$, $\mu_0 = (0.021, 0.125)$, $\Sigma_0 = \mathrm{diag}[(1\text{e-}4, 1\text{e-}3)]$, and $\lambda = 1\text{e-}5$; for the Monte Carlo method, we set $n_{\mathrm{test}} = 20$, $n_{\mathrm{delay}} = 20$, $\mu_0 = (0.021, 0.125)$, $\Sigma_0 = \mathrm{diag}[(1\text{e-}4, 1\text{e-}3)]$, and $\lambda = 1\text{e-}5$; for the grid search, we set $n_0 = 100$.

For SVGD we set the initial point-set to be an equally spaced rectangular grid over the bounding box. Following \cite{Liu2016a}, the step-size $\epsilon$ for SVGD was determined by AdaGrad with a master step-size of 1e-3 and a momentum factor of 0.9.

\begin{figure}[h]
\centering
\begin{subfigure}[b]{0.33\textwidth}
		\centering
        \includegraphics[width=\textwidth]{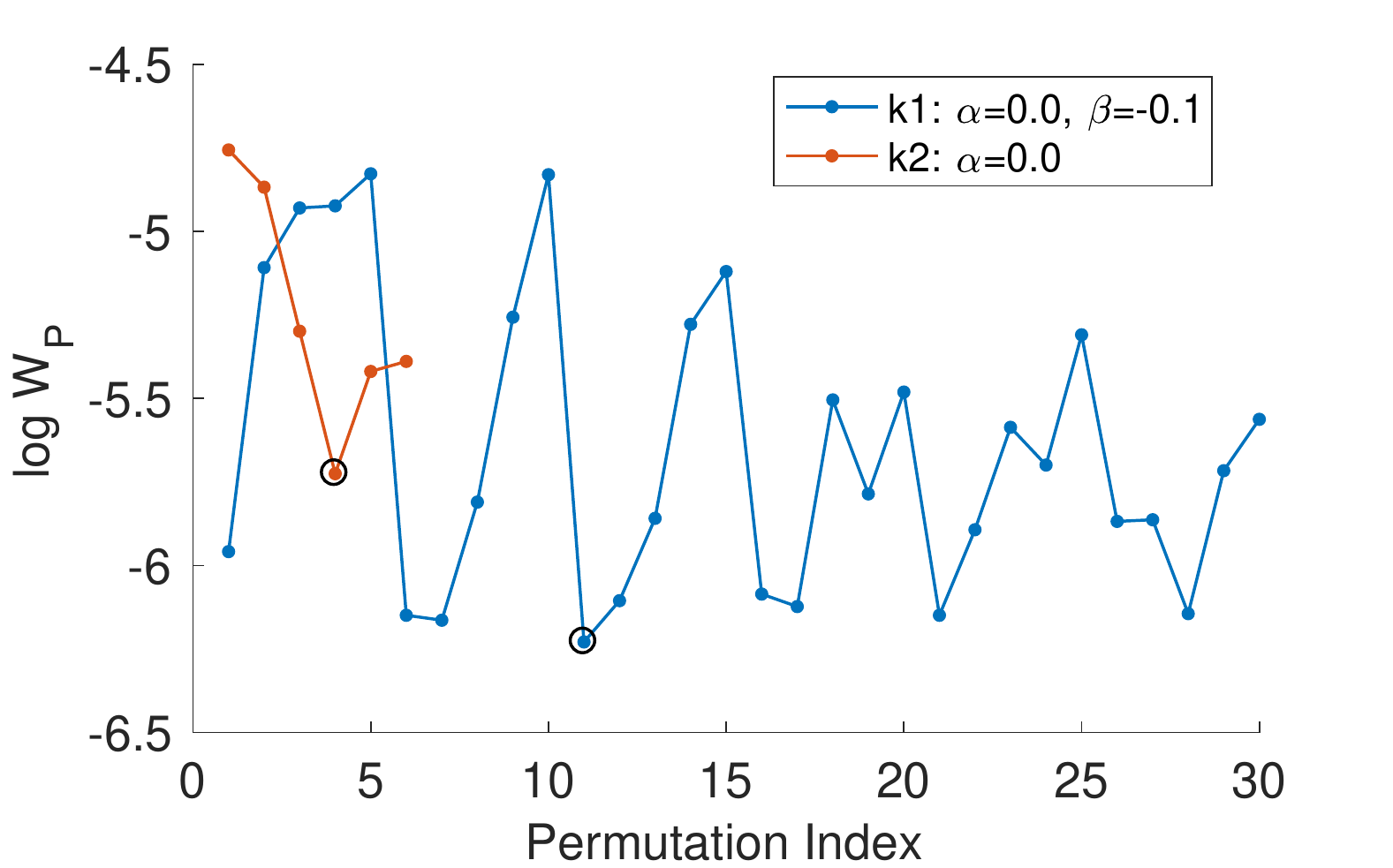}
        \caption{Stein Points (Greedy)}
        \label{fig: IGARCH params greedy}
    \end{subfigure}
\begin{subfigure}[b]{0.33\textwidth}
		\centering
        \includegraphics[width=\textwidth]{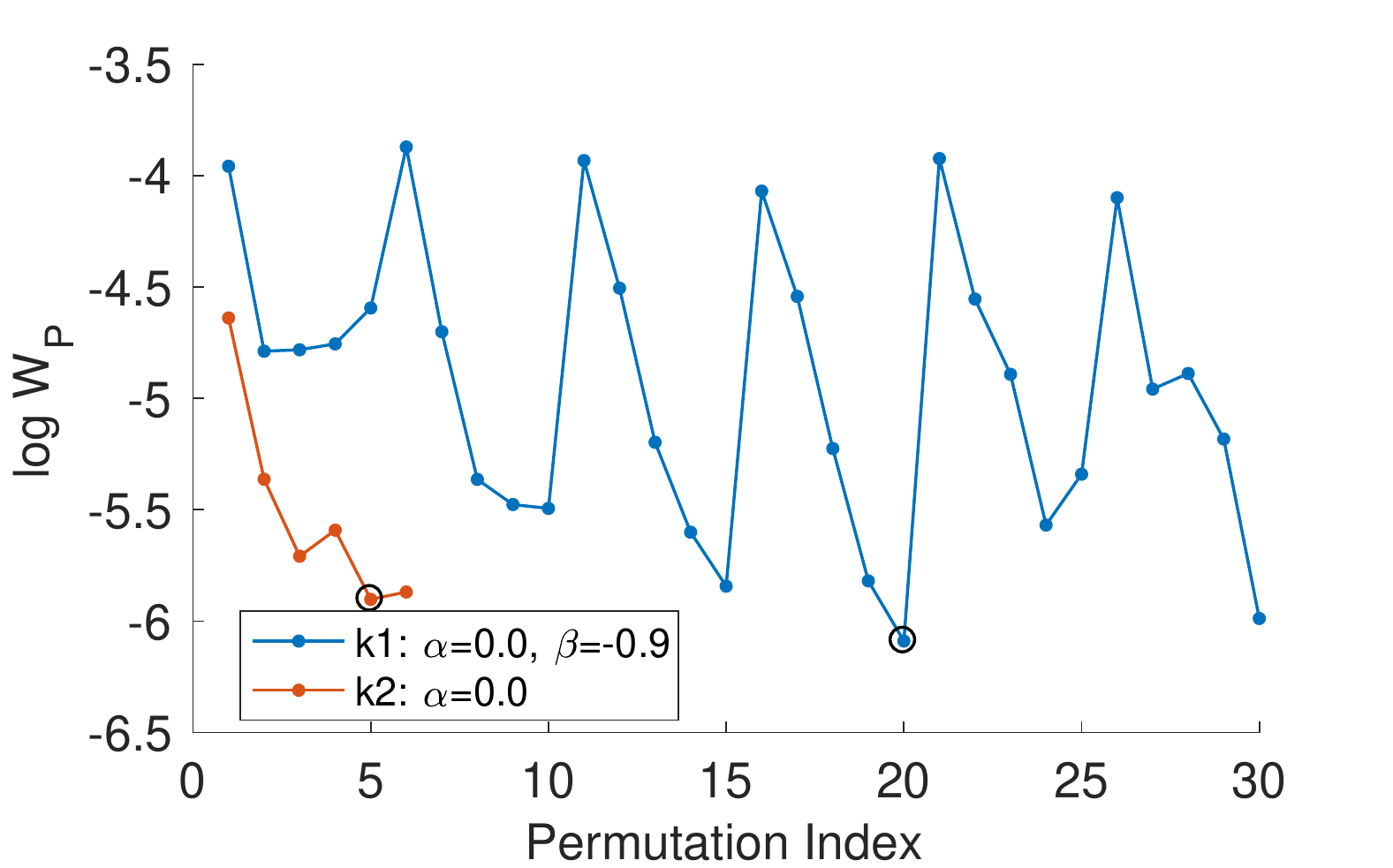}
        \caption{Stein Points (Herding)}
        \label{fig: IGARCH params herding}
    \end{subfigure}
\begin{subfigure}[b]{0.33\textwidth}
		\centering
        \includegraphics[width=\textwidth]{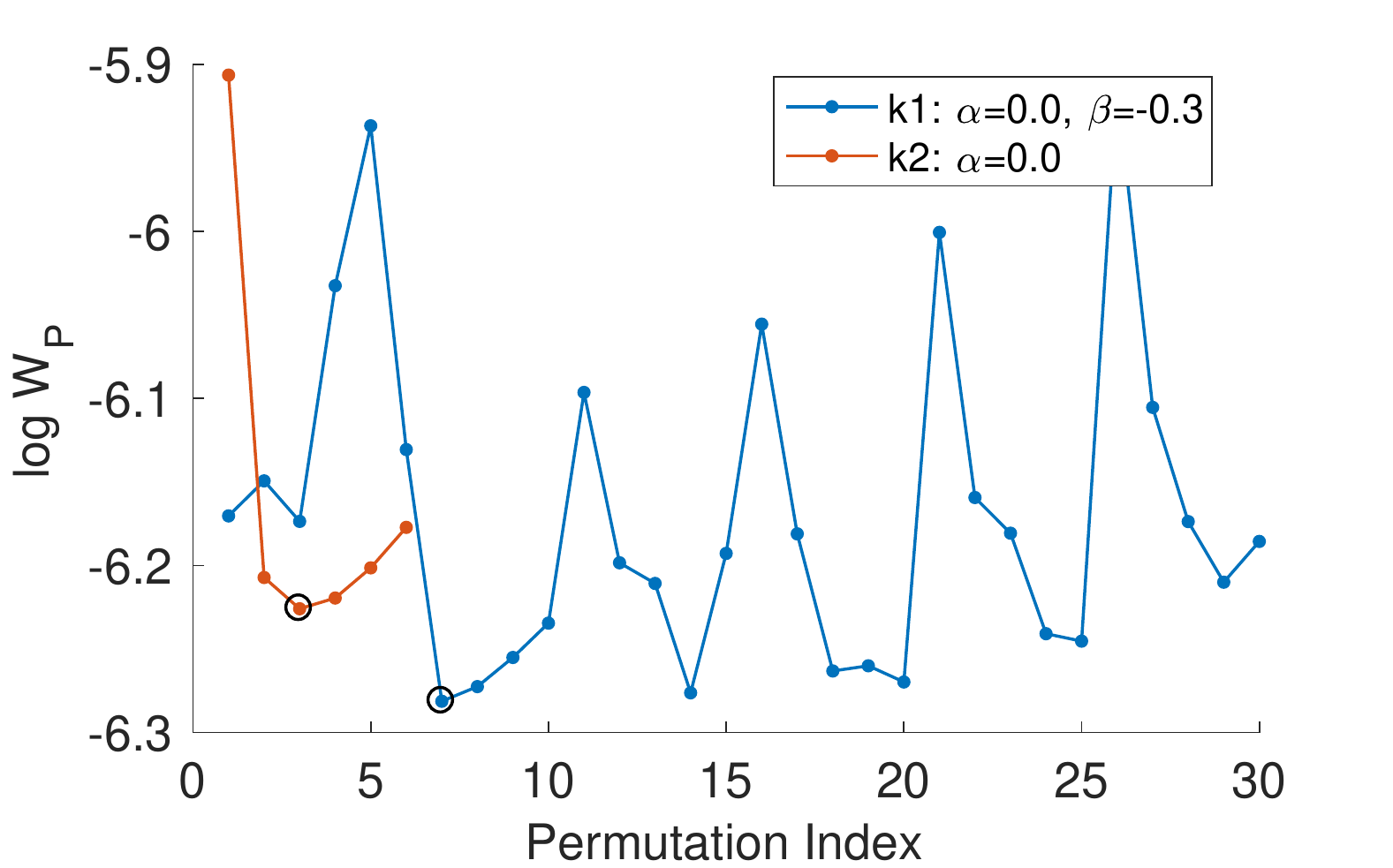}
        \caption{SVGD}
        \label{fig: IGARCH params SVGD}
    \end{subfigure}
\caption{Kernel parameter selection results for the IGARCH test.
Parameters $\alpha,\beta$ in the kernels $k_1$, $k_2$, $k_3$ were optimised over a discrete set with respect to the Wasserstein distance $W_P$ for a point set of size $n = 100$.
The values $\log W_P$ (y-axis) are shown for all different configurations of parameters (x-axis) considered.
Optimal parameter configurations are circled and detailed in the legend.}
\label{fig: IGARCH params}
\end{figure}

\begin{figure}[h]
\centering
\includegraphics[width=0.6\textwidth,clip,trim = 0cm 6.5cm 0cm 0cm]{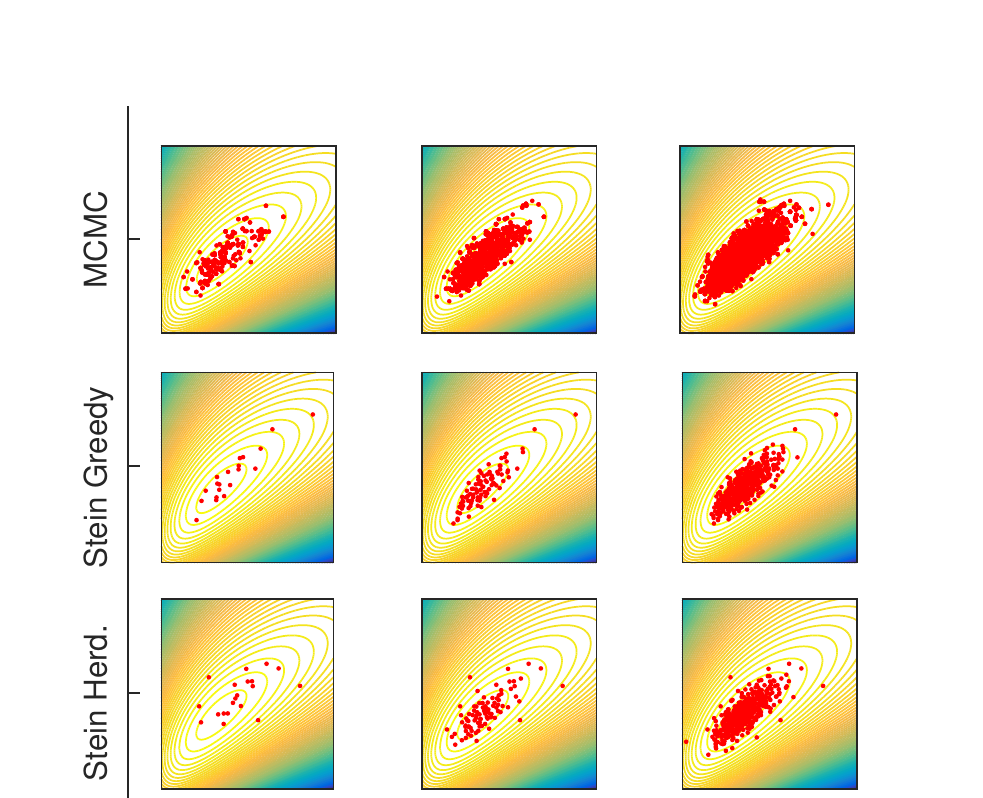}
\includegraphics[width=0.6\textwidth,clip,trim = 0cm 0cm 0cm 10.5cm]{figures/gensprpts_igarch.pdf}
\caption{Typical point sets obtained in the IGARCH test.
[Here each row corresponds to an algorithm, and each column corresponds to a chosen level of computational cost.
The left border of each sub-plot is aligned to the exact value of $\log n_{\mathrm{eval}}$ spent to obtain each point-set.
MCMC represents a random-walk Metropolis algorithm with a proposal distribution optimised according to acceptance rate.]}
\label{fig: IGARCH points}
\end{figure}

\begin{figure}[h]
\centering
\includegraphics[width=0.6\textwidth]{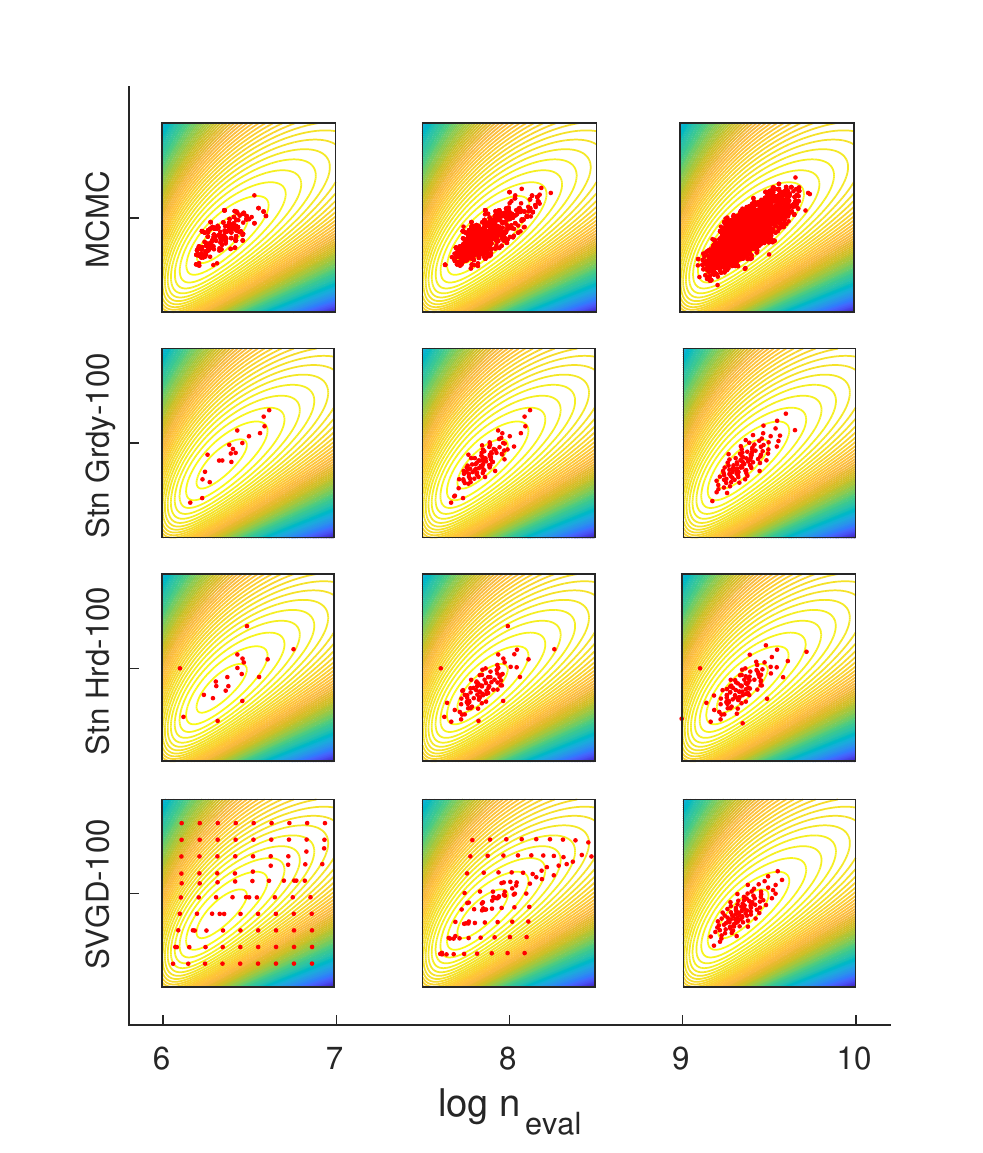}
\caption{Typical point sets obtained in the IGARCH test, where the budget-constrained methods Stein Greedy-100 (Stn Grdy-100) and Stein Herding-100 (Stn Hrd-100) are considered. [Here each row corresponds to an algorithm, and each column corresponds to a chosen level of computational cost. The left border of each sub-plot is aligned to the exact value of $\log n_{\mathrm{eval}}$ spent to obtain each point-set.
MCMC represents a random-walk Metropolis algorithm with a proposal distribution optimised according to acceptance rate.]}
\label{fig: IGARCH points v2}
\end{figure}

\begin{figure}[h]
\centering
\begin{subfigure}[b]{0.32\textwidth}
		\centering
        \includegraphics[width=\textwidth]{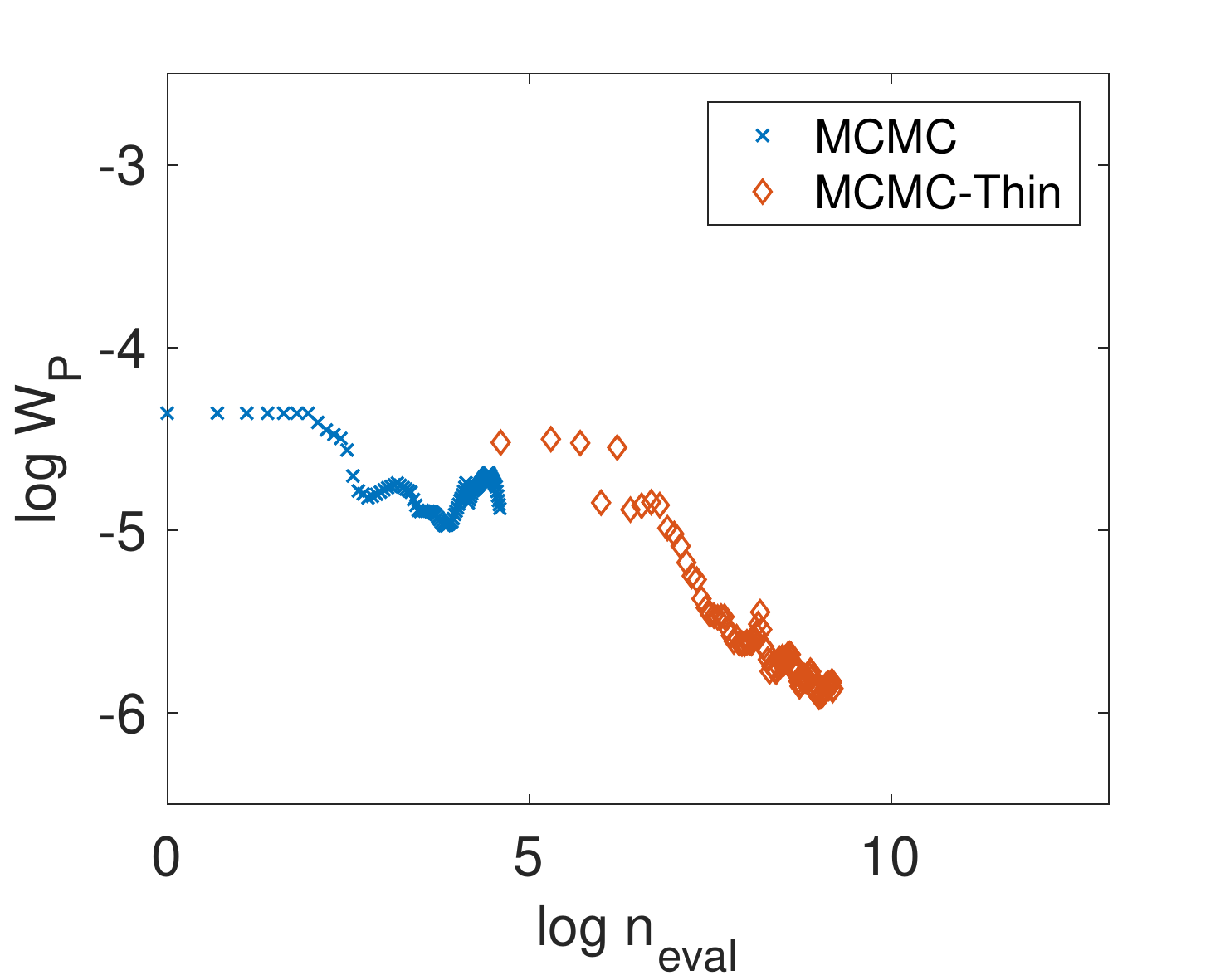}
        \caption{Monte Carlo}
        \label{fig: IGARCH MC}
    \end{subfigure}
\begin{subfigure}[b]{0.32\textwidth}
		\centering
        \includegraphics[width=\textwidth]{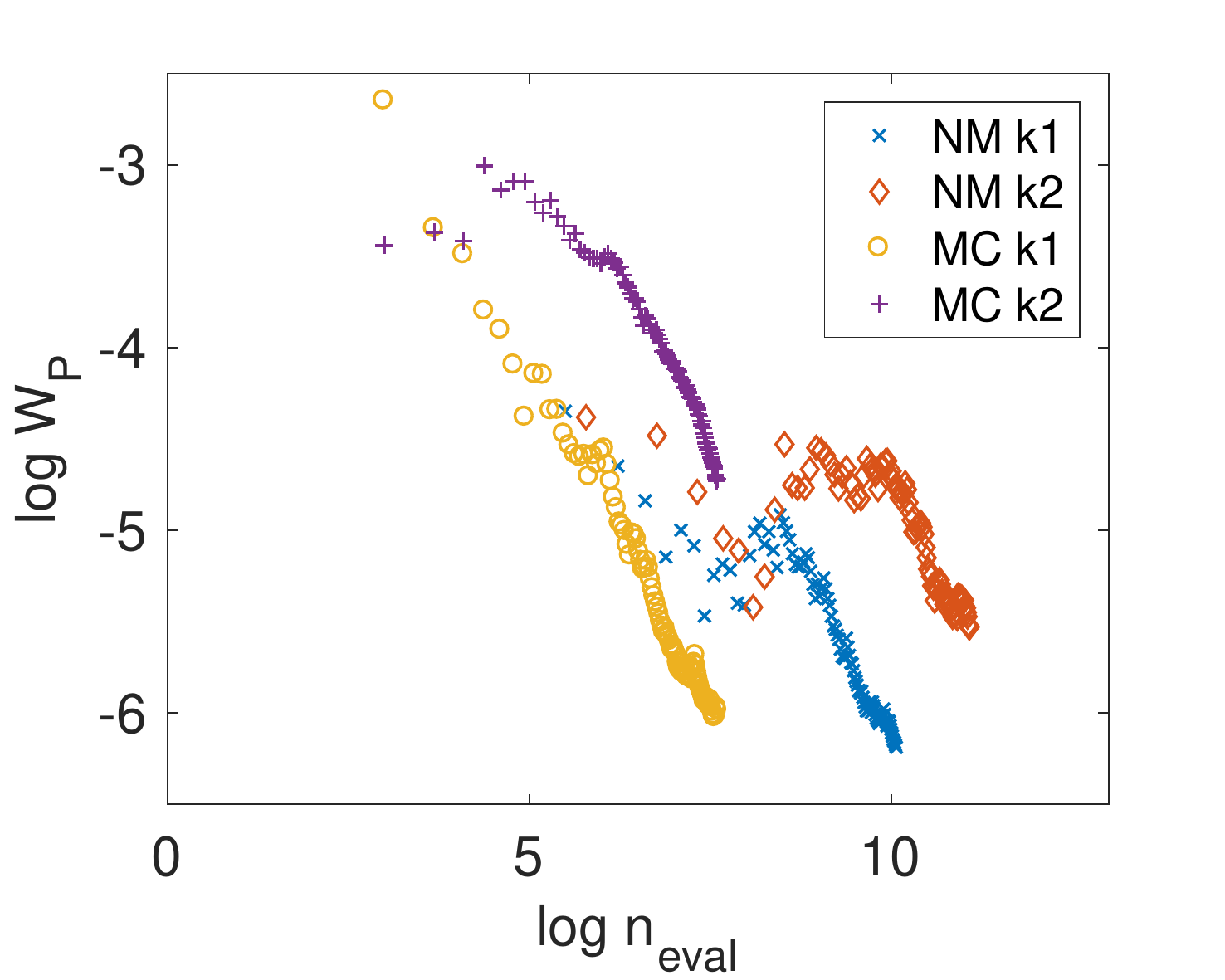}
        \caption{Stein Points (Greedy)}
        \label{fig: IGARCH greedy}
    \end{subfigure}

\begin{subfigure}[b]{0.32\textwidth}
		\centering
        \includegraphics[width=\textwidth]{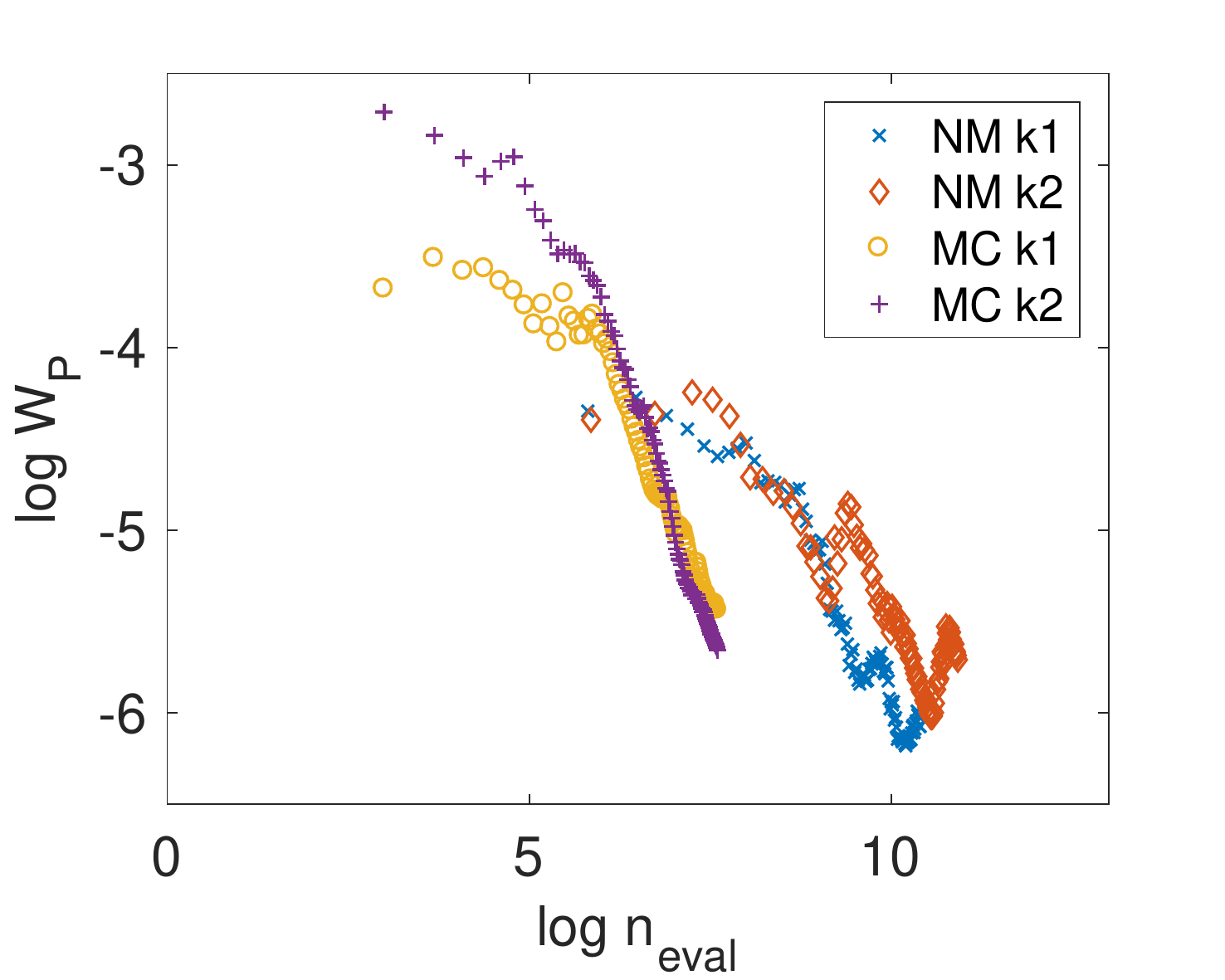}
        \caption{Stein Points (Herding)}
        \label{fig: IGARCH herding}
    \end{subfigure}
\begin{subfigure}[b]{0.32\textwidth}
		\centering
        \includegraphics[width=\textwidth]{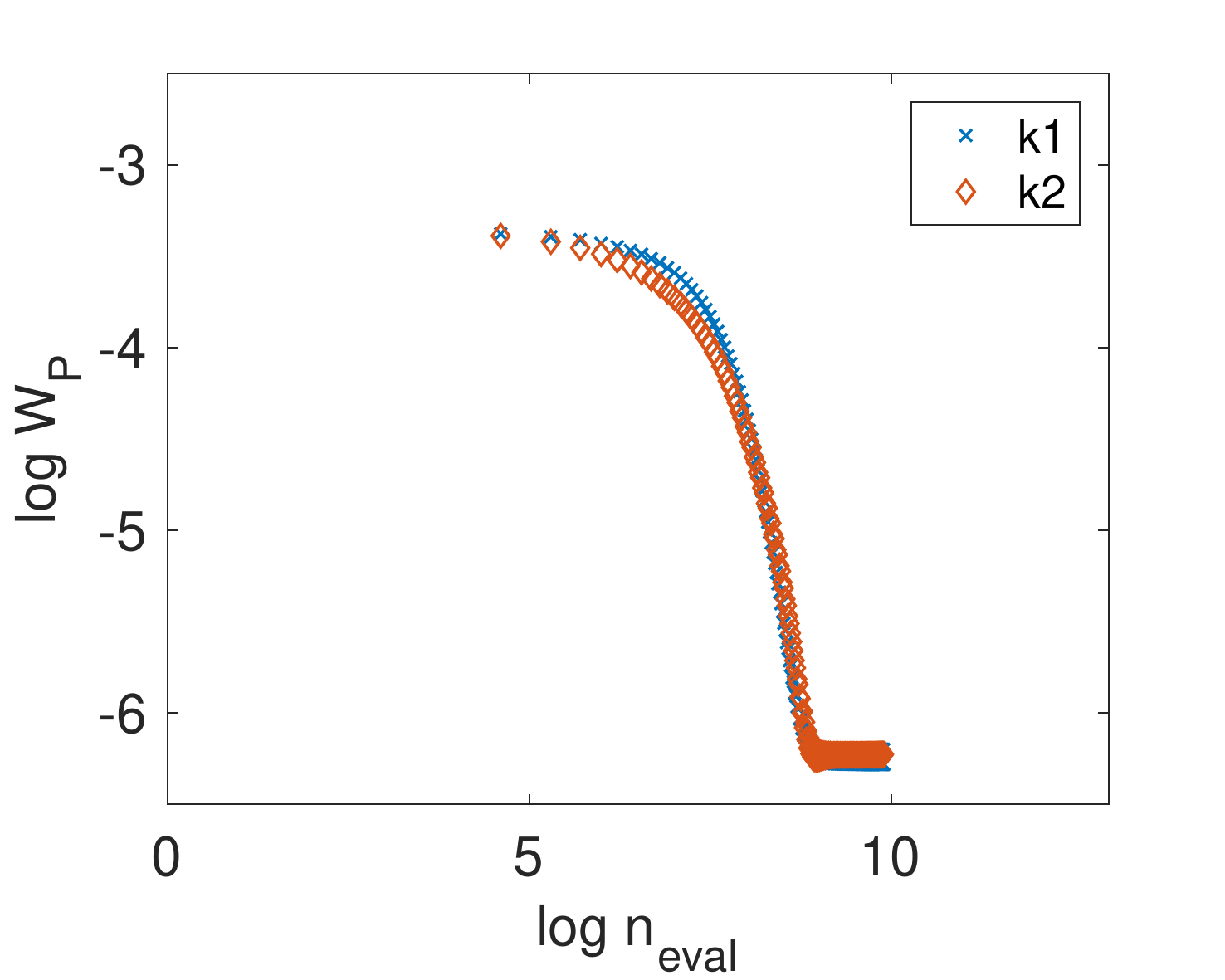}
        \caption{SVGD}
        \label{fig: IGARCH SVGD}
    \end{subfigure}
\caption{Results for the IGARCH test.
[Here $n = 100$.
x-axis: log of the number $n_{\text{eval}}$ of model evaluations that were used.
y-axis: log of the Wasserstein distance $W_P(\{x_i\}_{i=1}^n)$ obtained.
Kernel parameters $\alpha$, $\beta$ were optimised according to $W_P$.
In sub-figure \ref{fig: IGARCH MC}, MCMC represents a random-walk Metropolis algorithm with a proposal distribution optimised according to acceptance rate.
MCMC-Thin represents a thinned chain by taking every 100th observation.]}
\label{fig: IGARCH}
\end{figure}

\end{document}